\newtheorem{theorem}{Theorem}
\newtheorem{proposition}[theorem]{Proposition}
\newtheorem{lemma}[theorem]{Lemma}
\newtheorem{corollary}[theorem]{Corollary}
\newtheorem{conjecture}[theorem]{Conjecture}
\theoremstyle{remark}
\newtheorem{remark}[theorem]{Remark}
\theoremstyle{definition}
\newtheorem{definition}[theorem]{Definition}
\newtheorem{example}[theorem]{Example}
\newtheorem{construction}[theorem]{Construction}
\newcommand{\Irred}{{\mathcal{I}}}
\newcommand{\SidonSet}{{\mathcal{S}}}
\newcommand{\qbin}[3]{{{\left[\genfrac{}{}{0pt}{}{#1}{#2}\right]}_{#3}}}
\newcommand{\grsmn}[3]{{\mathcal{G}_{#1}{\left(#2,#3\right)}}}
\newcommand{\Span}[1]{{\left\langle {#1} \right\rangle}}
\renewcommand{\mathbf}[1]{{\bm{#1}}}     
\newcommand{\blda}{{\mathbf{a}}}
\newcommand{\bldb}{{\mathbf{b}}}
\newcommand{\bldv}{{\mathbf{v}}}
\newcommand{\bldx}{{\mathbf{x}}}
\newcommand{\bldxi}{{\mathbf{\xi}}}
\newcommand{\trace}{{\mathrm{tr}}}
\newcommand{\code}{{\mathcal{C}}}
\newcommand{\field}{{\mathbb{F}}}
\newcommand{\integers}{{\mathbb{Z}}}
\newcommand{\orb}{{\mathrm{orb}}}
\newcommand{\transpose}{{\mathsf{T}}}
\newcommand{\event}{{\mathcal{E}}}
\newcommand{\Prob}{{\mathsf{Prob}}}
\newcommand{\Quad}{{\mathcal{Q}}}
\newcommand{\ceil}[1]{\left\lceil{#1}\right\rceil}
\newcommand{\floor}[1]{\left\lfloor{#1}\right\rfloor}
\renewcommand*{\thefootnote}{\fnsymbol{footnote}}
\begin{document}
	
	\title{Construction of Sidon Spaces with\\Applications to Coding}
	
	\author{\textbf{Ron M.~Roth$^\star$}, \textbf{Netanel Raviv$^{\star,\dagger}$}, and \textbf{Itzhak Tamo$^\dagger$}\\
		\IEEEauthorblockA{$^\star$Computer Science Department, Technion -- Israel Institute of Technology, Haifa 3200003, Israel\\
			$^\dagger$Department of Electrical Engineering--Systems, Tel-Aviv University, Tel-Aviv, Israel\\
			\textit{ronny@cs.technion.ac.il, netanel.raviv@gmail.com, zactamo@gmail.com}}
	}
	
	\maketitle
	\thispagestyle{empty}
	\IEEEpeerreviewmaketitle
	
	\begin{abstract}
		A subspace
of a finite extension
field is called a Sidon space if the product of any two of its elements is unique up to
a scalar multiplier from the base field.
Sidon spaces were recently introduced by Bachoc \emph{et al.} as a means to characterize multiplicative properties of subspaces, and yet no explicit constructions were given. In this paper, several constructions of Sidon spaces are provided. In particular, in some of the constructions the relation between~$k$, the dimension of the Sidon space, and~$n$, the dimension of the ambient
extension field,
is optimal.
		
		These constructions are shown to provide cyclic subspace codes,
which are useful tools in network coding schemes.
		To the best of the authors' knowledge, this constitutes the first set of constructions of non-trivial cyclic subspace codes in which the relation between~$k$ and~$n$ is polynomial, and in particular, linear. As a result, a conjecture by Trautmann \emph{et al.} regarding the existence of non-trivial cyclic subspace codes
		is resolved for most parameters, and multi-orbit cyclic subspace codes are attained, whose cardinality is within a constant factor
(close to $1/2$) from the sphere-packing
bound for subspace codes.
	\end{abstract}
	
	\footnotetext{
The work of R.~M.~Roth was supported in part by Grant~1396/16 from
the Israel Science Foundation (ISF), and
the work of I.~Tamo and N.~Raviv was supported by ISF Grant~1030/15 and 
NSF-BSF Grant~2015814. Parts of this paper will be presented at the International Symposium on Information Theory (ISIT), Aachen, Germany, June 2017.}
	
	\begin{IEEEkeywords}
	Sidon spaces, Network coding, Cyclic subspace Codes, Sidon sets.
	\end{IEEEkeywords}
	
	\renewcommand{\thefootnote}{\arabic{footnote}}
	
	\section{Introduction}\label{section:introduction}

	Let $\grsmn{q}{n}{k}$ be the set of
all~$k$-dimensional subspaces of~$\field_{q^n}$,
the degree-$n$ extension field of the finite field~$\field_q = \mathrm{GF}(q)$. \emph{Sidon spaces} were recently defined in~\cite{Vospers} as a tool for studying certain multiplicative properties of subspaces. In particular, this term was used to characterize subspaces~$S$ and~$T$
of~$\field_{q^n}$ such that the subspace
$S\cdot T\triangleq \Span{\{s\cdot t \,:\, s\in S, t\in T\}}$
is of small dimension, where~$\Span{\cdot}$
denotes linear span over~$\field_q$.
Simply put, a Sidon space is a subspace $V \in \grsmn{q}{n}{k}$
such that the product of any two nonzero elements of~$V$
has a unique factorization over~$V$,
up to a constant multiplier from~$\field_q$.
As noted in~\cite{Vospers}, the term ``Sidon space'' draws its inspiration from a \emph{Sidon set}. A set of integers is called a Sidon set if the sums of any two (possibly identical) elements in it are distinct;
thus, Sidon spaces may be seen as a multiplicative and linear variant of Sidon sets.
A formal definition follows; hereafter, for $a\in\field_{q^n}$,
the notation $a\field_q$ stands for
$\{\lambda a \,:\, \lambda\in\field_q \}$.
	
	\begin{definition}[{\cite[Sec.~1]{Vospers}}]\label{definition:SidonSpace}
		A subspace~$V\in\grsmn{q}{n}{k}$ is called a Sidon space if for all nonzero~$a,b,c,d\in V$, if~$ab=cd$ then~$\{a\field_q,b\field_q \}=\{c\field_q,d\field_q \}$.
	\end{definition}

In this paper, we present several constructions of Sidon spaces
using a variety of tools.
The constructions exhibit either a linear or a quadratic relation
between~$n$ and~$k$. When~$n$ is quadratic in~$k$, some of the resulting Sidon spaces satisfy an additional property of linear independence between products of basis
elements.

One of our motivations for studying Sidon spaces is the construction of cyclic subspace codes, which are defined as follows.
A (constant dimension)
subspace code is a subset of~$\grsmn{q}{n}{k}$ under the subspace metric $\mathsf{d_S}(U,V)\triangleq \dim U+\dim V-2\dim(U\cap V)$. The interest in subspace codes has increased recently due to their application to error correction in random network coding~\cite{CodingFor}. In order to find good subspace codes and to study their structure, \emph{cyclic subspace codes} were introduced~\cite{EtzionVardy}.
	For a subspace $U\in\grsmn{q}{n}{k}$ and a nonzero element~$\alpha\in\field_{q^n}^*\triangleq\field_{q^n}\setminus\{0\}$, the \emph{cyclic shift} of~$U$ by~$\alpha$ is
$\alpha U\triangleq{\{\alpha \cdot u \,:\, u\in U\}}$,
which is clearly a subspace of the same dimension as~$U$.
The \emph{orbit} of~$U$ is
$\orb(U)\triangleq\{\beta U \,:\, \beta\in\field_{q^n}^*\}$,
and its cardinality is $(q^n{-}1)/(q^t{-}1)$
for some integer~$t$ which divides~$n$. A subspace code is called \emph{cyclic} if it is closed under cyclic shifts. For example, if~$k$ divides~$n$ then since~$\field_{q^k}^*$ is a multiplicative subgroup of~$\field_{q^n}^*$ whose cosets
are~$\{\alpha \field_{q^k}^* \,:\, \alpha\in\field_{q^n}^*\}$,
it follows that~$\orb(\field_{q^k})$ is a cyclic subspace code of size
$(q^n{-}1)/(q^k{-}1)$
and minimum distance\footnote{Unless otherwise stated, the term ``minimum distance'' refers to the
minimum subspace distance according to
the metric~$\mathsf{d_S}(\cdot,\cdot)$.}
$2k$ (when $k \,|\, n$, this size is, in fact, the largest possible
for any subspace code $\code \subseteq \grsmn{q}{n}{k}$ of
that minimum distance~\cite[Sec.~3]{EtzionVardy}).
	
	Besides the aforementioned example, no general construction of
cyclic subspace codes
was known until~\cite{SubspacePolynomials} provided a few constructions from orbits of root spaces of properly chosen linearized polynomials\footnote{\label{footnote:linearizedPolynomials}A polynomial of the form $L(x)=\sum_i a_ix^{q^i}$, where~$a_i\in \field_{q^n}$ for all~$i$, is called \emph{a linearized polynomial} with respect to the base field~$\field_q$.}. These constructions have minimum distance of~$2k{-}2$ (i.e., intersection dimension at most~$1$), and orbits of full size
$(q^n{-}1)/(q{-}1)$.
However, the relation between~$n$ and~$k$ in the constructions of~\cite{SubspacePolynomials} is in general not known. The results of~\cite{SubspacePolynomials} were recently improved by~\cite{Otal}, which developed a technique to increase the number of distinct orbits without compromising the minimum distance, and hence to increase the size of the cyclic subspace code. Yet,~\cite{Otal} did not address the problem of
minimizing~$n$ for a given~$k$.
The following conjecture was
posed\footnote{Notice that~\cite{CyclicOrbitCodes}
uses the more general term \emph{cyclic orbit codes},
of which cyclic subspace codes are a special case.}
in~\cite{CyclicOrbitCodes}
regarding the relation between~$n$ and~$k$,
	
	\begin{conjecture}[{\cite[Sec.~IV.D]{CyclicOrbitCodes}}]\label{conjecture:cyclicOrbitCodes}
		For any prime power~$q$ and positive integers~$k$ and~$n\ge 2k$, there exists a cyclic subspace code~$\code \subseteq \grsmn{q}{n}{k}$ of minimum distance~$2k{-}2$ and cardinality~$(q^n{-}1)/(q{-}1)$.
	\end{conjecture}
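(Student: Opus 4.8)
The plan is to derive the conjecture from the existence of a \emph{single} well-chosen Sidon space by passing to its orbit. Concretely, I would prove that if $V\in\grsmn{q}{n}{k}$ is a Sidon space for some $n\ge 2k$, then $\code=\orb(V)$ already meets all three requirements: it is cyclic by construction, I claim it has minimum distance $2k-2$, and I claim its cardinality is the full $(q^n{-}1)/(q{-}1)$. The conjecture then follows once such a $V$ is exhibited for every admissible pair $(n,k)$. Since the Sidon condition ``$ab=cd$'' is unaffected by replacing $V$ with $\mu V$ for $\mu\in\field_{q^n}^*$, the property is invariant under scaling; moreover $\orb(\mu V)=\orb(V)$, so I may assume without loss of generality that $1\in V$.

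For the minimum distance it suffices to show that any two \emph{distinct} members $\alpha V,\beta V$ of the orbit satisfy $\dim(\alpha V\cap\beta V)\le 1$, since then $\mathsf{d_S}(\alpha V,\beta V)=2k-2\dim(\alpha V\cap\beta V)\ge 2k-2$. Suppose instead that the intersection contains two $\field_q$-linearly independent vectors $u_1,u_2$. Writing $u_i=\alpha a_i=\beta b_i$ with nonzero $a_i,b_i\in V$ and eliminating $\alpha/\beta$ yields $a_1 b_2 = a_2 b_1$, an equality between products of nonzero elements of $V$. The Sidon property forces $\{a_1\field_q,b_2\field_q\}=\{a_2\field_q,b_1\field_q\}$, and both resulting cases are contradictory: either $a_1\field_q=a_2\field_q$, making $u_1=\alpha a_1$ and $u_2=\alpha a_2$ proportional, or $a_1\field_q=b_1\field_q$, which upon substitution into $\alpha a_1=\beta b_1$ gives $\beta/\alpha\in\field_q^*$ and hence $\alpha V=\beta V$. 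This is precisely where Definition~\ref{definition:SidonSpace} is used, and it is the conceptual heart of the reduction.

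For the cardinality, recall from the discussion above that $|\orb(V)|=(q^n{-}1)/(q^t{-}1)$, where $q^t{-}1$ is the order of the stabilizer $\{\beta\in\field_{q^n}^*:\beta V=V\}$; I therefore only need to rule out $t>1$, i.e.\ to show the stabilizer equals $\field_q^*$. If $\beta V=V$ then, using $1\in V$, we obtain $\beta=\beta\cdot 1\in V$ and $\beta^2=\beta\cdot\beta\in\beta V=V$, and applying the Sidon property to the identity $\beta\cdot\beta=1\cdot\beta^2$ forces the multiset equality $\{\beta\field_q,\beta\field_q\}=\{\field_q,\beta^2\field_q\}$, whence $\beta\field_q=\field_q$ and $\beta\in\field_q^*$. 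Since $\field_q^*$ clearly stabilizes $V$, the stabilizer is exactly $\field_q^*$, giving $t=1$ and the full orbit size.

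With the reduction in hand, the entire difficulty is concentrated in constructing the Sidon spaces themselves: for every $n\ge 2k$ I must produce a $k$-dimensional Sidon space in $\field_{q^n}$, and I expect this to be the main obstacle. The extremal case $n=2k$ is the most demanding, as the dimension is exactly half that of the ambient field and leaves no room to impose the stronger (quadratic-regime) requirement that products of basis elements be $\field_q$-linearly independent; one must instead secure the weaker uniqueness-of-factorization condition directly. A promising route is to fix a subfield $\field_{q^k}\subseteq\field_{q^n}$ and search for spaces of the form $\{u+\gamma u^{q}:u\in\field_{q^k}\}$ with a carefully chosen $\gamma$, translating the Sidon condition into the controlled (non-)solvability of a low-degree equation over $\field_{q^k}$. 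For general $n>2k$ one would adapt such a parametric family or lift a construction from a subfield, and it is conceivable that isolated residue classes of $n$ resist a uniform treatment --- which would explain why the conjecture is resolved for \emph{most}, rather than all, parameters.
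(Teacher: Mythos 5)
Your reduction of the conjecture to the existence of Sidon spaces is correct, and it is in substance the paper's Lemma~\ref{lemma:SidonGivesCyclic} (in the one direction you need: Sidon space $\Rightarrow$ full-size orbit of minimum distance $2k{-}2$). Your intersection argument for the distance coincides with the paper's. Your cardinality argument is a genuinely different and rather clean alternative: the paper goes through Lemma~\ref{lemma:alphaU=U} and handles size and distance inside one equivalence, whereas you normalize $1\in V$ (legitimate, since the Sidon property and the orbit are invariant under scaling) and then kill the stabilizer with the identity $\beta\cdot\beta=1\cdot\beta^2$, which forces $\beta\field_q=\field_q$ straight from Definition~\ref{definition:SidonSpace}. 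That trick is correct and self-contained, and it buys a slightly shorter proof of the cardinality claim.

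The genuine gap is everything after the reduction: you never exhibit a single Sidon space, so as it stands the proposal resolves the conjecture for no parameters at all. The reduction is the easy half (it also appears in~\cite{Vospers}); the substance of the paper is the constructions. You correctly guess the ansatz $V=\{u+\gamma u^q \,:\, u\in\field_{q^k}\}$, but the two ingredients that make it work are missing. First, when $k$ properly divides $n$ with $n\ge 3k$, one takes $\gamma$ to be a root of an irreducible polynomial of degree $n/k$ over $\field_{q^k}$; then $\{1,\gamma,\gamma^2\}$ is $\field_{q^k}$-independent, so from $(u+u^q\gamma)(v+v^q\gamma)$ one can read off the coefficients of $P(x)=(u+u^qx)(v+v^qx)$, whose roots $-1/u^{q-1}$ and $-1/v^{q-1}$ determine $\{u\field_q,v\field_q\}$ (Theorem~\ref{theorem:MainBinaryProof}). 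Second, in the extremal case $n=2k$ the element $\gamma^2$ collapses into the $\field_{q^k}$-span of $\{1,\gamma\}$, and the key idea---which your sketch does not anticipate---is to require $q\ge 3$ and take $\gamma$ a root of an irreducible $x^2+bx+c$ over $\field_{q^k}$ whose constant term $c$ is \emph{not} a $(q{-}1)$st power in $\field_{q^k}$; this makes $x\mapsto x-cx^q$ invertible on $\field_{q^k}$, so $uv$ can be recovered from the constant coordinate of the product, after which the argument above goes through (Theorem~\ref{theorem:mainNonBinary}). Note also that the coverage of ``most parameters'' is not a matter of stubborn residue classes of $n$: it is structural. Construction~\ref{construction:MainBinary} needs $k$ to divide $n$ with $2k<n$, Construction~\ref{construction:mainNonBinary} needs $q\ge3$ and $n=2k$, and the stated range (all $k$ for even $n\ge 2k$ when $q\ge 3$; $k$ up to the largest divisor of $n$ below $n/2$ when $q=2$) is then obtained via Remark~\ref{remark:smallerK}, i.e., by passing to subspaces of these Sidon spaces. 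In particular, for $q=2$ and $n=2k$ the ansatz provably has no valid choice of $c$ (every element of $\field_{2^k}$ is a $(q{-}1)$st power), so the case you call ``the most demanding'' remains open even in the paper---another reason the construction step cannot be waved through.
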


	As implied by Lemma~\ref{lemma:optimalK} below, the requirement~$2k\le n$ in Conjecture~\ref{conjecture:cyclicOrbitCodes} is necessary. In this paper,
we show that constructing a single orbit cyclic subspace code of
maximum size~$(q^n{-}1)/(q{-}1)$
and minimum distance~$2k{-}2$ in~$\grsmn{q}{n}{k}$ is equivalent to constructing a Sidon space in~$\grsmn{q}{n}{k}$, a fact which is also shown in~\cite{Vospers}. For~$q\ge 3$, our constructions of Sidon spaces resolve Conjecture~\ref{conjecture:cyclicOrbitCodes} for any~$k$ and
even~$n\ge 2k$;
for~$q = 2$, they resolve the conjecture for any~$k$ up to
the largest divisor of~$n$ that is smaller than~$n/2$.
In some cases, a simple generalization allows to include in the code multiple orbits of distinct Sidon spaces without compromising the minimum distance; in these cases, the cardinality of the resulting code is within a constant factor
(close to~$1/2$) from the sphere-packing
bound for subspace codes~\cite{EtzionVardy}.
	
	The Sidon space constructions in this paper also resolve an open question regarding the
\emph{square span} (in short, the \emph{square})
of a Sidon space. For a subspace $V\in\grsmn{q}{n}{k}$, the square of~$V$ is defined as the span of all products of pairs of elements from~$V$, i.e., $V^2\triangleq \Span{\{uv \,:\, v,u\in V\}}$.
In~\cite[Thm.~18]{Vospers} it is proved that $\dim(V^2)\ge 2\dim V$ for any Sidon space of dimension~3 or more in~$\field_{q^n}$.
Since~$V^2$ is spanned by $\binom{k+1}{2}$ elements of~$\field_{q^n}$,
we get the following lower and upper bounds on~$\dim(V^2)$.
	
	\begin{proposition}\label{proposition:dimV^2}
		If~$V\in\grsmn{q}{n}{k}$ is a Sidon space then,
whenever $k \ge 3$,
\[
	2k \le \dim(V^2)\le \binom{k+1}{2} .
\]
	\end{proposition}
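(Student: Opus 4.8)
The plan is to prove the two inequalities separately. The proposition bounds $\dim(V^2)$ both below and above, and the paper has already supplied the lower bound through a cited result, so the real work is the upper bound together with a clean assembly of the two pieces.

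For the lower bound $2k \le \dim(V^2)$, I would simply invoke the stated result $\cite[\text{Thm.~18}]{Vospers}$, which asserts exactly that $\dim(V^2) \ge 2 \dim V = 2k$ for any Sidon space of dimension at least~$3$ in $\field_{q^n}$. This is why the hypothesis $k \ge 3$ appears in the statement: the cited theorem requires it. No additional argument is needed here beyond citing it.

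For the upper bound $\dim(V^2) \le \binom{k+1}{2}$, the key observation is purely dimension-counting and does not even use the Sidon property. Fix a basis $v_1, \dots, v_k$ of $V$ over $\field_q$. Any element of $V^2$ is an $\field_q$-linear combination of products $uv$ with $u, v \in V$; expanding each factor in the chosen basis and using bilinearity, every such product lies in the $\field_q$-span of the set of pairwise products $\{v_i v_j : 1 \le i \le j \le k\}$. Because multiplication in $\field_{q^n}$ is commutative, $v_i v_j = v_j v_i$, so the spanning set has exactly $\binom{k+1}{2}$ distinct elements, namely the $\binom{k}{2}$ products with $i < j$ together with the $k$ squares $v_i^2$. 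A subspace spanned by $\binom{k+1}{2}$ vectors has dimension at most $\binom{k+1}{2}$, giving the claim.

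I do not anticipate a genuine obstacle here, since both halves are routine: the lower bound is a direct citation and the upper bound is elementary linear algebra. If anything, the only point requiring a moment of care is making explicit that the $\binom{k+1}{2}$ distinct products truly span $V^2$ — that is, that no element of $V^2$ falls outside their $\field_q$-span — which follows immediately from bilinearity of multiplication over the base field $\field_q$. Combining the two inequalities yields the chain $2k \le \dim(V^2) \le \binom{k+1}{2}$ and completes the proof.
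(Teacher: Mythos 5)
Your proof is correct and matches the paper's own justification exactly: the paper likewise cites \cite[Thm.~18]{Vospers} for the lower bound $2k \le \dim(V^2)$ (which is where the hypothesis $k \ge 3$ comes from) and obtains the upper bound from the observation that $V^2$ is spanned by the $\binom{k+1}{2}$ pairwise products of basis elements. Nothing is missing.
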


In light of Proposition~\ref{proposition:dimV^2}, we study Sidon spaces in both possible endpoints and, hence, we introduce the following terms.

\begin{definition}\label{definition:minmax-spanSidonSpace}
	A subspace~$V\in\grsmn{q}{n}{k}$
	is called a \emph{minimum-span} (in short, \emph{min-span}) Sidon space
	if it is a Sidon space and, in addition,~$\dim(V^2)=2k$;
	i.e.,~$V$ attains the lower bound
	in Proposition~\ref{proposition:dimV^2}. Similarly, a subspace~$V\in\grsmn{q}{n}{k}$ is called a \emph{maximum-span} (in short, \emph{max-span}) Sidon space if it is a Sidon space and, in addition,~$\dim(V^2)=\binom{k+1}{2}$; i.e.,~$V$ attains the upper bound in Proposition~\ref{proposition:dimV^2}.
\end{definition}

	Finding the smallest
possible value of~$\dim(V^2)$ for a given~$k$ (in particular, deciding 
if min-span Sidon spaces exist),
is listed as an open problem in~\cite[Sec.~8]{Vospers}.
In Section~\ref{section:Constructions}, we show
that the answer is affirmative
whenever~$k$ is a proper divisor of~$n$
(smaller than $n/2$ when $q = 2$).
Then, in Section~\ref{section:max-span}, we present
constructions of max-span Sidon spaces.
In Section~\ref{section:applications}, we exhibit
the connection between Sidon spaces in~$\grsmn{q}{n}{k}$
and cyclic subspace codes of minimum distance~$2k{-}2$,
thereby proving Conjecture~\ref{conjecture:cyclicOrbitCodes}
for any~$k$ and even~$n\ge 2k$ when~$q \ge 3$
(for~$q = 2$, we prove the conjecture for any~$k$ up to
the largest divisor of~$n$ that is smaller than~$n/2$).
In addition, we show that Sidon spaces imply Sidon sets, and hence, multiple novel constructions of Sidon sets are obtained.
Finally, in Section~\ref{section:rSidon}, we introduce and construct~$r$-Sidon spaces,
which are natural generalizations
of Sidon spaces in which the product of any~$r$ elements is unique up to a
scalar multiplier from the base field.
	
	\section{Preliminaries}

	We start with a lemma that provides a necessary condition on~$n$ and~$k$ so that~$\grsmn{q}{n}{k}$ contains a Sidon space. 
	
	\begin{lemma}\label{lemma:optimalK}
		If~$V\in\grsmn{q}{n}{k}$ is a Sidon space then
$2k\le n$ whenever~$k\ge 3$.
	\end{lemma}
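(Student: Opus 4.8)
The plan is to pin down $\dim(V^2)$ from both sides and read off the claim from the sandwich $2k \le \dim(V^2)\le n$. The upper bound is immediate, while the lower bound is exactly what Proposition~\ref{proposition:dimV^2} supplies, and this is where the hypothesis $k\ge 3$ enters.

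First I would observe that $V^2=\Span{\{uv \,:\, u,v\in V\}}$ is, by definition, an $\field_q$-subspace of $\field_{q^n}$. Since $\field_{q^n}$ has dimension $n$ as a vector space over $\field_q$, this forces $\dim(V^2)\le n$. On the other hand, Proposition~\ref{proposition:dimV^2} asserts that $\dim(V^2)\ge 2k$ for every Sidon space $V\in\grsmn{q}{n}{k}$ with $k\ge 3$. Chaining the two inequalities,
\[
	2k \le \dim(V^2) \le n ,
\]
which is the desired conclusion.

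All of the content therefore sits in the lower bound, and I expect that to be the only real obstacle. Absent Proposition~\ref{proposition:dimV^2}, I would try to produce a large subspace of $V^2$ directly from multiplicative translates: fix linearly independent $v_1,v_2\in V$ and consider the subspaces $v_iV\triangleq\{v_iu \,:\, u\in V\}$, each of dimension $k$ since multiplication by $v_i\ne 0$ is an $\field_q$-linear bijection of $\field_{q^n}$. If a nonzero $x=v_1a=v_2b$ lies in $v_1V\cap v_2V$ with $a,b\in V$, the Sidon property gives $\{v_1\field_q,a\field_q\}=\{v_2\field_q,b\field_q\}$; since $v_1\field_q\ne v_2\field_q$, we must have $v_1\field_q=b\field_q$ and $a\field_q = v_2\field_q$, whence $x\in\field_q v_1v_2$. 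Thus $v_1V\cap v_2V=\field_q v_1v_2$ is a line, and $\dim(v_1V+v_2V)=2k-1$, already yielding $2k-1\le n$.

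The hard part is squeezing out the last unit: upgrading $2k-1$ to $2k$ requires exhibiting a product that escapes $v_1V+v_2V$, which is precisely the step that needs a third independent vector (hence the hypothesis $k\ge 3$) together with the full strength of the Sidon condition rather than only its pairwise consequences. This finer estimate is exactly what Proposition~\ref{proposition:dimV^2} packages, so invoking it is the cleanest route and the main work is deferred to that statement.
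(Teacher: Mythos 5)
Your proof is correct and is essentially identical to the paper's: both invoke Proposition~\ref{proposition:dimV^2} for the lower bound $2k\le\dim(V^2)$ and combine it with the trivial bound $\dim(V^2)\le n$. The additional sketch of a direct translate argument (yielding $2k-1\le n$) is a nice aside but not part of the needed proof, since the paper, like you, defers the real work to Proposition~\ref{proposition:dimV^2}.
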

	\begin{proof}
On the one hand, by Proposition~\ref{proposition:dimV^2}
we have~$2k=2\dim(V)\le \dim(V^2)$. On the other hand,~${\dim(V^2)\le n}$. 
	\end{proof}

	The case of Sidon spaces of dimension~$k\in\{1,2\}$ is
straightforward and is treated
in Appendix~\ref{section:SidonOfDimOneTwo}. In the remainder of this paper, it is assumed that~$k\ge 3$.
	
	\begin{remark}\label{remark:smallerK}
		It is readily verified that any subspace of a Sidon space is a Sidon space. Hence, a construction of a Sidon space in~$\grsmn{q}{n}{k}$ implies the existence of Sidon spaces
in~$\grsmn{q}{n}{t}$, for any~$1\le t\le k$.\qed
	\end{remark}
	
Sidon spaces are closely related to their namesakes Sidon sets:
the latter
are shown in the sequel to both imply and to be implied by Sidon spaces 
(Sections~\ref{section:max-span} and~\ref{section:SidonSets}).
Hereafter, $[m]$ stands for the integer set $\{ 1, 2, \ldots, m \}$.
	\begin{definition}\label{definition:SidonSet}
		A subset $\{n_1,n_2,\ldots,n_k\}$ of an Abelian group~$G$ is called a Sidon set if all pairwise sums are unique, i.e., if
the set $\{n_i+n_j \,:\, i,j\in[k], i\ge j\}$
contains $\binom{k+1}{2}$ distinct elements.
	\end{definition}

	In the sequel, the group~$G$ in Definition~\ref{definition:SidonSet} is either the set~$\integers$ of integers or the set~$\integers_m$ of integers modulo some natural number~$m>1$. Clearly, a Sidon set in~$\integers_m$ is a Sidon set in~$\integers$, but not necessarily vice versa. The main challenge in constructing Sidon sets is obtaining high \emph{density}, i.e., having~$m$ as small as possible
given the size~$k$ for Sidon sets in~$\integers_m$, or alternatively,
for integer Sidon sets in~$[m]$.
	
	Sidon sets have attracted considerable attention over the years, and many constructions are known~\cite{SidonSurvey}. In particular, there exist several constructions in which $m=k^2(1+o_k(1))$ (where~$o_k(1)$ stands for a term that goes to~$0$ as~$k$ goes to infinity);
moreover, as noted in~\cite[Sec.~4.1]{SidonSurvey},
such a relation between~$m$ and~$k$ is optimal
(up to a factor of $1 + o_k(1)$). A known example of Sidon sets is quoted next.
	
	\begin{example}[{\cite{Bose}}]\label{example:Bose}
For a prime power~$q$, a primitive element~$\gamma$ in~$\field_{q^2}$,
and an arbitrary element $\delta\in\field_{q^2}\setminus \field_q$,
the set~$\{\log_\gamma(\alpha+\delta) \,:\, \alpha\in\field_q \}$
is a Sidon set of size~$q$ in $\integers_{q^2-1}$.\qed
	\end{example}

One of our motivations for studying Sidon spaces is the construction of cyclic subspace
codes, which were defined in~\cite{EtzionVardy} as follows.

\begin{definition}\label{definition:CyclicSubspaceCodes}
	A subspace code
$\code\subseteq\grsmn{q}{n}{k}$
is cyclic\footnote{Definition~\ref{definition:CyclicSubspaceCodes}
is not to
be confused with the more general term ``(cyclic) orbit codes''~\cite{CyclicOrbitCodes}, in which an extension-field structure of the ambient space is not assumed. Instead, the ambient space of~$\grsmn{q}{n}{k}$ in~\cite{CyclicOrbitCodes} is~$\field_q^n$, the vector space of dimension~$n$ over~$\field_q$; and a (cyclic) orbit code is any subspace code which is closed under the action of a (cyclic) subgroup of $\textrm{GL}_n(q)$, the group of invertible~$n\times n$ matrices over~$\field_q$. To be precise, a cyclic code may be seen as a cyclic orbit code which is closed under the action of
a cyclic subgroup of $\textrm{GL}_n(q)$ that is
isomorphic to~$\field_{q^n}^*$.}
if~$V\in\code$ implies that $\alpha V\in\code$,
for every~$\alpha\in\field_{q^n}^*$.
\end{definition}

Nearly optimal \emph{non}-cyclic
subspace codes are known to exist for a wide set of parameters. Specifically, the cardinality of the so-called Koetter--Kschischang codes~\cite[Sec.~V]{CodingFor} is within a factor of~$1+o_q(1)$ from the known upper bounds on the size of subspace
codes; one of these bounds is quoted next, where we use
the notation~$\qbin{t}{s}{q}$
for the \emph{$q$-binomial coefficient}
(also known as the Gaussian coefficient) $|\grsmn{q}{t}{s}| = \prod_{i=0}^{s-1}((q^{t-i}{-}1)/(q^{i+1}{-}1))$.

\begin{theorem}[Sphere-packing bound for subspace codes %
{\cite[Thm.~2]{EtzionVardy}}]\label{theorem:EtzionVardyBound}
	A subspace code~$\code\subseteq \grsmn{q}{n}{k}$
with minimum distance~$d$
satisfies
\[
	|\code|\le \frac{\qbin{n}{k-d/2+1}{q}}{\qbin{k}{k-d/2+1}{q}}.
\]
\end{theorem}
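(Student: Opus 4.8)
The plan is to establish this as a packing (double-counting) bound based on subspaces of a fixed intermediate dimension. Set $s \triangleq k - d/2 + 1$. Since every codeword of~$\code$ is a $k$-dimensional subspace, for any two distinct $U,V\in\code$ we have $\mathsf{d_S}(U,V) = 2k - 2\dim(U\cap V) = 2\bigl(k - \dim(U\cap V)\bigr)$; in particular $d$ is even, and the minimum-distance hypothesis $\mathsf{d_S}(U,V)\ge d$ is equivalent to $\dim(U\cap V)\le k - d/2 = s-1$. Thus the entire content of the bound is encoded in the single combinatorial statement that distinct codewords intersect in dimension at most~$s-1$.

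Next I would count the pairs $(W,U)$ with $U\in\code$ and $W$ an $s$-dimensional subspace of~$\field_q^n$ contained in~$U$, evaluating this count in two ways. On one hand, each fixed codeword~$U$, being a $k$-dimensional space over~$\field_q$, contains exactly~$\qbin{k}{s}{q}$ subspaces of dimension~$s$ (this is precisely the subspace-counting interpretation of the $q$-binomial coefficient recalled just above the theorem). Summing over all codewords, the number of such pairs is exactly $|\code|\cdot\qbin{k}{s}{q}$.

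The crux is the injectivity in the other direction: I claim no $s$-dimensional subspace~$W$ lies in two distinct codewords. Indeed, if $W\subseteq U$ and $W\subseteq V$ with $U\ne V$, then $W\subseteq U\cap V$, which forces $\dim(U\cap V)\ge s > s-1$, contradicting the intersection bound derived in the first step. Hence each $s$-dimensional~$W$ is counted at most once, so the number of pairs is at most the total number of $s$-dimensional subspaces of~$\field_q^n$, namely~$\qbin{n}{s}{q}$.

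Combining the two counts yields $|\code|\cdot\qbin{k}{s}{q}\le\qbin{n}{s}{q}$, and dividing by~$\qbin{k}{s}{q}$ (which is positive because $1\le s\le k$ whenever $2\le d\le 2k$) gives the claimed bound. I expect no genuine obstacle beyond the disjointness observation of the third paragraph, which is exactly where the minimum-distance hypothesis is used; the only routine care required is to check that $s$ is a well-defined integer lying in $[1,k]$ (using that $d$ is even with $2\le d\le 2k$), so that the two $q$-binomial coefficients really do count the relevant families of subspaces.
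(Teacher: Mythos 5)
Your proof is correct, and there is nothing in the paper to compare it against: the paper states this theorem as a quoted result from~\cite[Thm.~2]{EtzionVardy} and gives no proof of its own. Your double-counting argument is precisely the standard proof of this packing bound from the literature. The key observation---that with $s = k - d/2 + 1$ the minimum-distance hypothesis forces $\dim(U\cap V)\le s-1$ for distinct codewords, so no $s$-dimensional subspace of $\field_q^n$ can lie in two codewords---is exactly where the argument lives, and your bookkeeping around it is sound: each codeword contains exactly $\qbin{k}{s}{q}$ subspaces of dimension~$s$, the ambient space contains $\qbin{n}{s}{q}$ of them, and the quotient bound follows. The only point worth flagging (not a gap) is that the evenness of~$d$ and the inequality $2\le d\le 2k$ implicitly assume $|\code|\ge 2$, but the bound is trivial otherwise since $\qbin{n}{s}{q}\ge\qbin{k}{s}{q}$ whenever $n \ge k$.
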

For the parameters~$n=2k$ and~$d=2k{-}2$, Koetter--Kschischang codes are of size~$q^n$, whereas the upper bound of Theorem~\ref{theorem:EtzionVardyBound} is~$q^n(1+o_n(1))$. For these parameters, the largest codes
presented
in this paper are within a factor of~$1/2+o_q(1)$ from this bound. 

\section{Construction of Minimum-Span Sidon Spaces}\label{section:Constructions}

This section begins with two similar constructions of Sidon spaces in which~$n$
grows linearly with~$k$.
The first construction applies to any prime power~$q$ and~$n=rk$,
for an integer~$r\ge 3$. The second construction applies to~$q\ge 3$ and~$n=2k$, and is therefore (fully) optimal by Lemma~\ref{lemma:optimalK}. 

\begin{construction} \label{construction:MainBinary}
	For a composite integer~$n$, let~$k$ be the largest divisor of~$n$ which is smaller than~$n/2$, let~$\gamma\in\field_{q^n}^*$ be a root of an irreducible polynomial
of degree~$n/k$ over~$\field_{q^k}$,
and let $V\triangleq\{u+u^q\gamma \,:\, u\in\field_{q^k} \}$.
\end{construction}

While
Construction~\ref{construction:MainBinary}
requires the dimension~$k$ of~$V$ to be a divisor of~$n$
(and~$2k < n$), once we establish that~$V$ is a Sidon space,
we will get, by Remark~\ref{remark:smallerK}, Sidon spaces for
all dimensions that are smaller than~$k$.
The fact that~$V$ in Construction~\ref{construction:MainBinary} is
indeed
a Sidon space is shown in the next theorem.

\begin{theorem}\label{theorem:MainBinaryProof}
	The subspace~$V\in \grsmn{q}{n}{k}$ from Construction~\ref{construction:MainBinary} is a Sidon space.
\end{theorem}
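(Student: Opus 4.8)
The plan is to exploit the explicit parametrization of $V$ by elements $u \in \field_{q^k}$. First I would record the structural facts. Since $k$ is a proper divisor of $n$ with $k < n/2$, the integer $n/k$ is at least $3$, so the minimal polynomial of $\gamma$ over $\field_{q^k}$ (which is irreducible of degree $n/k$) has degree at least $3$; consequently $1,\gamma,\gamma^2$ are linearly independent over $\field_{q^k}$. In particular $1,\gamma$ are independent, so the $\field_q$-linear map $u \mapsto u + u^q\gamma$ is injective and $V$ is genuinely $k$-dimensional, with every element of $V$ corresponding to a unique parameter $u$. I would also observe that for $\lambda \in \field_q$ one has $\lambda(u+u^q\gamma) = \lambda u + (\lambda u)^q\gamma$, so scaling an element of $V$ by $\field_q$ corresponds exactly to scaling its parameter by $\field_q$; hence, writing $a=u+u^q\gamma$ etc., the Sidon conclusion $\{a\field_q,b\field_q\}=\{c\field_q,d\field_q\}$ is equivalent to the equality of the corresponding sets of $\field_q$-lines $\{u\field_q,v\field_q\}=\{w\field_q,x\field_q\}$ spanned by the parameters. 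Note that since $a,b,c,d$ are nonzero, all four parameters are nonzero.

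Next I would translate the hypothesis $ab=cd$ into equations over $\field_{q^k}$. Expanding, $ab = uv + (uv^q+u^qv)\gamma + u^qv^q\gamma^2$, and similarly for $cd$. Comparing coefficients of $1,\gamma,\gamma^2$, which is legitimate by the linear independence established above, yields three scalar equations; the $\gamma^2$-equation $u^qv^q=w^qx^q$ is simply the $q$-th power of the $1$-equation $uv=wx$ and is therefore redundant. So the entire hypothesis collapses to the pair
\[
uv=wx, \qquad uv^q+u^qv=wx^q+w^qx .
\]

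The heart of the argument is to solve this pair. I would write $w=ut$ for a nonzero $t\in\field_{q^k}$; the first equation then forces $x=v/t$. Substituting into the second equation and setting $\mu = t^{q-1}$, a short manipulation (multiplying through by $\mu$ and collecting terms) collapses everything to the factored identity
\[
(1-\mu)\bigl(uv^q - \mu\, u^qv\bigr)=0 .
\]
The two factors correspond precisely to the two ways the Sidon condition can hold. If $\mu=1$, then $t^{q-1}=1$, i.e.\ $t\in\field_q^*$, whence $w\in u\field_q$ and $x\in v\field_q$. If instead $uv^q = \mu\, u^qv$, so that $\mu = uv^q/(u^qv)$, then a direct computation of the relevant $(q-1)$-th powers shows $(w/v)^{q-1}=1$ and $(x/u)^{q-1}=1$, i.e.\ $w\in v\field_q$ and $x\in u\field_q$. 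In either case $\{w\field_q,x\field_q\}=\{u\field_q,v\field_q\}$, which by the translation above is exactly the Sidon condition, finishing the proof.

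I expect the main obstacle to be the algebraic reduction to the factored form $(1-\mu)(uv^q-\mu\,u^qv)=0$, together with the verification in the second case that the parameters' $\field_q$-lines match after a swap; the remaining steps are bookkeeping justified by the linear independence of $1,\gamma,\gamma^2$ over $\field_{q^k}$ and by the identification $t^{q-1}=1 \iff t\in\field_q^*$.
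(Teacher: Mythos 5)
Your proof is correct, and it shares its skeleton with the paper's: both expand $(u+u^q\gamma)(v+v^q\gamma)=uv+(uv^q+u^qv)\gamma+(uv)^q\gamma^2$ and use the linear independence of $\{1,\gamma,\gamma^2\}$ over $\field_{q^k}$ (valid since $n/k\ge 3$) to equate coefficients; your preliminary reduction of the Sidon condition on elements of $V$ to the same condition on the parameters $u$ (justified because $\lambda^q=\lambda$ for $\lambda\in\field_q$) is also implicit in the paper. Where you diverge is the finishing step. The paper reads the extracted coefficients as the quadratic $P(x)=(u+u^qx)(v+v^qx)$ and notes that its roots $-1/u^{q-1}$ and $-1/v^{q-1}$ determine $\{u\field_q,v\field_q\}$ — i.e., it delegates uniqueness to the unique factorization of a quadratic over $\field_{q^k}$. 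You instead assume two factorizations, substitute $w=ut$, $x=v/t$, and collapse the system to $(1-\mu)\left(uv^q-\mu\,u^qv\right)=0$ with $\mu=t^{q-1}$, whose two factors give exactly the identity and swap cases; I checked the manipulation and the $(q-1)$-power computations, and they are sound, resting only on the fact that $t^{q-1}=1$ iff $t\in\field_q^*$. Your route is more elementary and self-contained, avoiding any appeal to polynomial factorization, but it is tailored to products of two elements; the paper's root-based argument is the one that generalizes painlessly to $r$-fold products, which the paper exploits later (Lemma~\ref{lemma:rp1SidonProof} and Lemma~\ref{lemma:rSidonNonBinary}) to obtain $r$-Sidon spaces from essentially the same construction.
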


\begin{proof}
	Given a product $(u + u^q \gamma)(v + v^q \gamma)$, for some nonzero~$u$ and~$v$ in~$\field_{q^k}$, we notice that
	\begin{equation}\label{equation:MainBinary}
	(u+u^q\gamma)(v+v^q\gamma)=uv+(uv^q+u^qv)\gamma+(uv)^q\gamma^2.
	\end{equation}
	Since~$n>2k$, it follows that~$\{1,\gamma,\gamma^2\}$ is a linearly independent set over~$\field_{q^k}$, thus the
coefficients of the polynomial
$P(x) = (u+u^q x)(v+v^q x) = uv+(uv^q+u^qv) x +(uv)^q x^2$
can be extracted from the right-hand side
of~\eqref{equation:MainBinary}.
The roots of~$P(x)$ are~$-1/u^{q-1}$ and~$-1/v^{q-1}$, from
which the set~$\{ u\field_q, v\field_q \}$ is determined uniquely\footnote{A simple way to compute~$\{u\field_{q} \}$ (say) is representing the equation~$x^q-u^{q-1}x=0$ (over~$\field_{q^k}$) as a set of~$k$ linear homogeneous equations over~$\field_q$, according to some basis of~$\field_{q^k}$ over~$\field_q$. }.
\end{proof}

By choosing the parameter~$\gamma$ in Construction~\ref{construction:MainBinary} judiciously, we can also cover the case~$n=2k$, as long as~$q\ge 3$. To this end, let~$W_{q-1}$ be the set of $(q{-}1)$st powers of elements in~$\mathbb{F}_{q^k}$, i.e,
$W_{q-1}\triangleq \{y^{q-1} \,:\, y\in \field_{q^k}\}$,
and let~$\overline{W}_{q-1}\triangleq \field_{q^k}\setminus W_{q-1}$.

Our next construction will require a monic irreducible
quadratic polynomial over~$\field_{q^k}$ whose free coefficient is in~$\overline{W}_{q-1}$. 
The existence of such polynomials follows from well known
properties of quadratic polynomials over finite fields
(e.g., \cite[Problem~3.42]{Ronny}
and \cite[Problem~3.52]{FiniteFields}), and there is a simple
recipe for constructing such polynomials explicitly.
Recall
that for odd~$q$, the elements of
$\textup{QR}(q^k) \triangleq \{y^2 \,:\, y\in\field_{q^k}^* \}$
are called \emph{quadratic residues}
of~$\field_{q^k}$, and the elements
of~$\textup{QNR}(q^k) = \field_{q^k}^* \setminus \textup{QR}(q^k)$
are called \emph{quadratic non-residues}
of~$\field_{q^k}$. In addition, for an even~$q$ and an element~$y\in \field_{q^k}$ let $\trace(y)\triangleq y+y^2+y^4+\ldots+y^{q^k/2}$ be the absolute trace polynomial over~$\field_{q^k}$. It is known
that~$y\mapsto \trace(y)$
is a linear mapping from~$\field_{q^k}$ to~$\field_2$, and, in particular, half of the elements of~$\field_{q^k}$ have trace~$1$.

\begin{lemma}[{\cite[Problem~3.42]{Ronny}}]
\label{lemma:IrreducibleWithGoodC}
	For any~$c\in \field_{q^k}^*$ and any $b\in\field_{q^k}$, the polynomial
$x^2+bx+c$
is irreducible if and only if
	\begin{align*}
	\begin{cases}
	b^2-4c \in \textup{QNR}(q^k),& \textrm{if }q\textrm{ is odd,}\\
	b\ne 0\textrm{ and }\trace\left( c/b^2 \right)=1, &\textrm{if }q\textrm{ is even.}
	\end{cases}
	\end{align*}
\end{lemma}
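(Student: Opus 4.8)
The plan is to invoke the elementary fact that a monic quadratic over a field is irreducible if and only if it has no root in that field; consequently, in both cases the task reduces to deciding when $x^2+bx+c$ acquires a root in~$\field_{q^k}$. I would then split according to the parity of~$q$, since the characteristic dictates which normal form the quadratic admits.

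For odd~$q$ the characteristic is odd, so~$2$ is invertible, and I would complete the square via the substitution $x\mapsto x+b/2$, turning the root-existence question into the solvability of $y^2=(b^2-4c)/4$. Because $4=2^2$ is a square in~$\field_{q^k}$, this equation has a solution precisely when $b^2-4c$ is either zero or a nonzero square; a zero discriminant yields the repeated root $-b/2$, and a nonzero square likewise produces a root, so in both subcases the polynomial is reducible. Therefore $x^2+bx+c$ has no root exactly when $b^2-4c$ is a nonzero non-square, i.e.\ when $b^2-4c\in\textup{QNR}(q^k)$, which is the claimed criterion.

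For even~$q$ the characteristic is~$2$ and completing the square is unavailable. First I would dispose of the case $b=0$: here $x^2+c=(x+c^{1/2})^2$, using that the Frobenius $y\mapsto y^2$ is a bijection of~$\field_{q^k}$, so the polynomial is always reducible and $b\neq 0$ is necessary. When $b\neq 0$, the substitution $x=by$ followed by division by~$b^2$ recasts the root-existence question into the Artin--Schreier form $y^2+y=c/b^2$.

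The crux is then to characterize the image of the map $\wp(y)=y^2+y$. I would observe that $\wp$ is an $\field_2$-linear endomorphism of the additive group of~$\field_{q^k}$ whose kernel is $\{0,1\}=\field_2$, so its image has exactly $q^k/2$ elements; as already noted above, $\trace$ is $\field_2$-linear with kernel of size $q^k/2$ as well. Since $\trace(y^2)=\trace(y)$ by invariance of the absolute trace under Frobenius, one gets $\trace(\wp(y))=\trace(y^2)+\trace(y)=0$, so the image of~$\wp$ is contained in $\ker\trace$; comparing cardinalities forces equality. Consequently $y^2+y=c/b^2$ is solvable in~$\field_{q^k}$ if and only if $\trace(c/b^2)=0$, and hence $x^2+bx+c$ is irreducible if and only if $b\neq 0$ and $\trace(c/b^2)=1$. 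The main obstacle is precisely this trace description of the Artin--Schreier image; the remaining steps are routine.
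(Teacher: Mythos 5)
Your proof is correct and complete. The paper does not supply its own argument for this lemma---it simply cites \cite[Problem~3.42]{Ronny}---and yours is exactly the standard proof underlying that citation: reduction to root-existence for a monic quadratic, completing the square together with the quadratic-residue criterion in odd characteristic, and in characteristic two the substitution $x=by$ bringing the equation to Artin--Schreier form, with the image of $y\mapsto y^2+y$ identified with the kernel of the absolute trace by the $\field_2$-linearity and cardinality argument you give.
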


\begin{corollary}\label{corollary:irrExists}
	For any prime power~$q \ge 2$, positive integer~$k$,
and any~$c\in\field_{q^k}^*$ (in particular, any~$c\in\overline{W}_{q-1}$),
there exists~$b\in\field_{q^k}$ such that $x^2+bx+c$ is irreducible.
\end{corollary}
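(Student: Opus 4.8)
The plan is to invoke Lemma~\ref{lemma:IrreducibleWithGoodC} and, for a fixed $c\in\field_{q^k}^*$, count the admissible choices of~$b$ in each of the two cases, showing the count is strictly positive. Throughout write $Q=q^k$. The parenthetical claim about $\overline{W}_{q-1}$ needs nothing extra, since $0\in W_{q-1}$ (take $y=0$) gives $\overline{W}_{q-1}\subseteq\field_{q^k}^*$, so it suffices to treat an arbitrary $c\in\field_{q^k}^*$.

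First consider odd~$q$. By the lemma I must produce $b\in\field_{q^k}$ with $b^2-4c\in\textup{QNR}(q^k)$. Letting $\chi$ denote the quadratic character of $\field_{q^k}$ (with $\chi(0)=0$), the number of such~$b$ equals $\frac12\big(\#\{b:b^2\ne 4c\}-\sum_{b}\chi(b^2-4c)\big)$, the full sum over~$b$ being harmless because the term at $b^2=4c$ vanishes. The key input is the standard evaluation of the quadratic character sum of a quadratic polynomial: since the discriminant of $b\mapsto b^2-4c$ equals $16c\ne 0$, one has $\sum_{b\in\field_{q^k}}\chi(b^2-4c)=-1$. Substituting, and noting $\#\{b:b^2\ne 4c\}\ge Q-2$, the count of admissible~$b$ is at least $(Q-1)/2\ge 1$, using $Q=q^k\ge 3$ since $q\ge 3$ here.

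Now consider even~$q$. By the lemma I need $b\ne 0$ with $\trace(c/b^2)=1$. In characteristic two the squaring map is a bijection of $\field_{q^k}$, so as $b$ runs over $\field_{q^k}^*$ the element $b^2$ runs over $\field_{q^k}^*$, whence $c/b^2$ runs bijectively over $c\,\field_{q^k}^*=\field_{q^k}^*$. Since $\trace$ is a nonzero $\field_2$-linear functional, exactly half of $\field_{q^k}$, namely $Q/2$ elements, have trace~$1$, and all of them are nonzero because $\trace(0)=0$. Therefore exactly $Q/2\ge 1$ values of~$b$ meet the requirement, completing this case.

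The only genuinely nontrivial step is the evaluation of the quadratic character sum in the odd case. I expect to justify $\sum_{b}\chi(b^2-4c)=-1$ either by the textbook formula for $\sum_{x}\chi(ax^2+bx+c)$ with nonzero discriminant, or by completing the square and reducing to the elementary identity $\sum_{y}\chi(y^2-d)=-1$ for $d\ne 0$. Alternatively, both parities can be handled at once by a norm-map argument that bypasses Lemma~\ref{lemma:IrreducibleWithGoodC}: the norm $N(\alpha)=\alpha^{Q+1}$ maps $\field_{q^{2k}}^*$ onto $\field_{q^k}^*$ as a $(Q+1)$-to-one map, so $c$ has $Q+1$ preimages, of which at most two lie in $\field_{q^k}$ (the square roots of~$c$); the remaining $\ge Q-1\ge 2$ preimages lie in $\field_{q^{2k}}\setminus\field_{q^k}$, and the minimal polynomial of any such $\alpha$ is an irreducible $x^2+bx+c$ with the prescribed constant term.
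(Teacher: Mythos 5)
Your proof is correct, and its skeleton matches the paper's: both invoke Lemma~\ref{lemma:IrreducibleWithGoodC} and split into odd and even characteristic, and your even-$q$ argument (squaring is a bijection in characteristic two, and half of $\field_{q^k}$ has trace~$1$) is exactly the paper's. The difference is in the odd case: the paper simply cites a textbook exercise (\cite[Problem~3.23]{Ronny}) for the nonemptiness of $\{b^2-4c \,:\, b\in\field_{q^k}\}\cap\textup{QNR}(q^k)$, whereas you prove it by evaluating the quadratic character sum $\sum_{b}\chi(b^2-4c)=-\chi(1)=-1$ (valid since the discriminant $16c$ is nonzero), which makes the argument self-contained and yields the quantitative refinement that roughly half of all $b\in\field_{q^k}$ work --- something the paper does not state. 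Your alternative norm-map argument is a genuinely different route: it bypasses Lemma~\ref{lemma:IrreducibleWithGoodC} and the parity split altogether by realizing the desired polynomial as the minimal polynomial over $\field_{q^k}$ of any $\alpha\in\field_{q^{2k}}\setminus\field_{q^k}$ with $\alpha^{q^k+1}=c$, using that the norm is a surjective $(q^k{+}1)$-to-one map; this is arguably the cleanest and most uniform of the three arguments. One cosmetic slip there: the chain ``$\ge Q-1\ge 2$'' (with $Q=q^k$) fails when $q^k=2$, where $Q-1=1$; but since only positivity of the count is needed, the argument stands as is.
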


\begin{proof}
By~\cite[Problem~3.23]{Ronny},
for odd~$q$ and any given~$c\in\field_{q^k}^*$,
the set~$\{b^2-4c \,:\, b\in\field_{q^k} \}\cap \textup{QNR}(q^k)$
is nonempty.
For an even~$q$ we have
$\{ c/b^2 \,:\, b\in \field_{q^k}^*\}=\field_{q^k}^*$
for any~$c\in\field_{q^k}^*$, and the claim follows by the properties of the mapping~$y\mapsto \trace(y)$.
\end{proof}

\begin{construction}\label{construction:mainNonBinary}
	For a prime power~$q\ge 3$ and a positive integer~$k$, let~$n=2k$, let $\gamma\in\field_{q^n}^*$ be a root of an irreducible polynomial $x^2+bx+c$ over~$\field_{q^k}$ with $c\in \overline{W}_{q-1}$, and
let~$V\triangleq \{u+u^q\gamma \,:\, u\in\field_{q^k}\}$.
\end{construction}

\begin{theorem}\label{theorem:mainNonBinary}
	The subspace~$V\in\grsmn{q}{n}{k}$ from Construction~\ref{construction:mainNonBinary} is a Sidon space.
\end{theorem}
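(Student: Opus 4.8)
The plan is to follow the template of the proof of Theorem~\ref{theorem:MainBinaryProof}: from a product of two elements of~$V$ we want to reconstruct the quadratic polynomial $P(x)=(u+u^q x)(v+v^q x)$ and then, by factoring it, recover the unordered pair~$\{u\field_q,v\field_q\}$. The essential difference is that now~$n=2k$, so $\{1,\gamma,\gamma^2\}$ is linearly \emph{dependent} over~$\field_{q^k}$ and the three coefficients of~$P$ can no longer be read off directly from the product; the hypothesis $c\in\overline{W}_{q-1}$ is precisely what will let us recover them nonetheless. Note first that since $\{1,\gamma\}$ is a basis of~$\field_{q^n}$ over~$\field_{q^k}$, an element $u+u^q\gamma$ is nonzero exactly when~$u\ne 0$, so nonzero elements of~$V$ correspond to nonzero~$u\in\field_{q^k}$.

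First I would expand the product as in~\eqref{equation:MainBinary} and reduce modulo the relation $\gamma^2=-b\gamma-c$, which holds because~$\gamma$ is a root of $x^2+bx+c$. Writing the result in the basis~$\{1,\gamma\}$, the product $(u+u^q\gamma)(v+v^q\gamma)$ becomes
\[
	\bigl(uv-c(uv)^q\bigr)+\bigl(uv^q+u^qv-b(uv)^q\bigr)\gamma .
\]
Suppose now that $(u+u^q\gamma)(v+v^q\gamma)=(w+w^q\gamma)(z+z^q\gamma)$ for nonzero $u,v,w,z\in\field_{q^k}$. Equating the two coordinates with respect to~$\{1,\gamma\}$ yields
\[
	uv-c(uv)^q=wz-c(wz)^q,\qquad uv^q+u^qv-b(uv)^q=wz^q+w^qz-b(wz)^q .
\]

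The crux of the argument is the first equation, and this is the step I expect to require the most care. Setting $\delta\triangleq uv-wz$ and using the additivity of the Frobenius map, it rearranges to $\delta=c\,\delta^{q}$. If $\delta\ne 0$, dividing by~$\delta^q$ gives $c=\delta^{1-q}=(\delta^{-1})^{q-1}\in W_{q-1}$, contradicting $c\in\overline{W}_{q-1}$; hence $\delta=0$, i.e.\ $uv=wz$. This is exactly where the special choice of the free coefficient~$c$ as a non-$(q{-}1)$st-power is indispensable. Substituting $uv=wz$ into the second equation cancels the $b(uv)^q$ terms and forces $uv^q+u^qv=wz^q+w^qz$.

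At this point the constant coefficient~$uv$, the linear coefficient $uv^q+u^qv$, and hence also the leading coefficient~$(uv)^q$ of $P(x)=uv+(uv^q+u^qv)x+(uv)^q x^2$ all agree with those arising from~$w,z$; that is, $(u+u^qx)(v+v^qx)=(w+w^qx)(z+z^qx)$ as polynomials in~$x$. Since both sides are genuinely quadratic (their leading coefficients $(uv)^q=(wz)^q$ are nonzero), they share the same multiset of roots, so $\{-1/u^{q-1},-1/v^{q-1}\}=\{-1/w^{q-1},-1/z^{q-1}\}$, whence $\{u^{q-1},v^{q-1}\}=\{w^{q-1},z^{q-1}\}$. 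Finally, for nonzero elements $x^{q-1}=y^{q-1}$ holds if and only if $x\field_q=y\field_q$, so we conclude $\{u\field_q,v\field_q\}=\{w\field_q,z\field_q\}$, which is exactly the condition of Definition~\ref{definition:SidonSpace}.
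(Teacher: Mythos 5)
Your proof is correct and follows essentially the same route as the paper's: expand the product over the basis $\{1,\gamma\}$ using $\gamma^2=-b\gamma-c$, use $c\in\overline{W}_{q-1}$ to show $x\mapsto x-cx^q$ has trivial kernel (your $\delta$ argument is exactly the paper's injectivity argument for $T(x)=x-cx^q$), recover $uv$, cancel the $b$-terms, and then factor the quadratic $P(x)$ to recover $\{u\field_q,v\field_q\}$. The only difference is presentational: you verify the implication ``equal products give equal pairs'' directly, whereas the paper phrases it as uniquely decoding the pair from the product — these are equivalent.
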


\begin{proof}
	As in the proof of Theorem~\ref{theorem:MainBinaryProof}, it suffices to show that given $(u+u^q\gamma)(v+v^q\gamma)$ for some nonzero~$u$ and~$v$ in~$\field_{q^k}$, the set
$\{u \field_q, v \field_q\}$
can be determined uniquely. Since
	\begin{equation}\label{equation:mainNonBinary}
		(u+u^q\gamma)(v+v^q\gamma)={\underbrace{\left(uv-(uv)^qc\right)}_{\triangleq \, Q_0}}+{\underbrace{\left(uv^q+u^qv-b(uv)^q\right)}_{\triangleq \, Q_1}} \gamma,
	\end{equation}
	and since $\{1,\gamma\}$ is a linearly independent set over~$\field_{q^k}$, it follows that given the product $(u+u^q\gamma)(v+v^q\gamma)$, one may easily
extract~$Q_0$ and~$Q_1$ (both in~$\field_{q^k}$) from
the right-hand side of~\eqref{equation:mainNonBinary}. Now, notice
that~$Q_0$
is the value of the linearized polynomial $T(x)=x-cx^q$ at the point~$uv$. We next show that the mapping~$x\mapsto T(x)$ (which is linear over~$\field_q$) is invertible on~$\field_{q^k}$, i.e., $T(x)=0$ only if $x=0$. 
	
	Indeed, if there were a nonzero~$\beta\in\field_{q^k}$ such
that~$T(\beta)=0$,
then~$\beta-c\beta^q=0\Rightarrow c=\beta^{-(q-1)}\Rightarrow c\in W_{q-1}$, a contradiction.
	Thus,
given~$Q_0$,
it is possible to uniquely determine~$uv$ by solving a set of~$k$ linear equations over~$\field_q$ (the
equations that represent~$T(x)=Q_0$
according to some basis of~$\field_{q^k}$ over~$\field_q$). From~$uv$
and~$Q_1$ it is possible to compute
the coefficients of the polynomial $P(x)$ defined
in the proof of Theorem~\ref{theorem:MainBinaryProof},
and, as in that proof, the roots of $P(x)$
determine~$\{u\field_q,v\field_q \}$ uniquely.
\end{proof}

Theorems~\ref{theorem:MainBinaryProof} and~\ref{theorem:mainNonBinary} resolve an open question from~\cite{Vospers} regarding the existence of min-span Sidon spaces.
Since~$\dim V=k=n/2$
in Construction~\ref{construction:mainNonBinary}, and since $\dim(V^2)\le n$, it follows that Construction~\ref{construction:mainNonBinary}
attains the lower bound in Proposition~\ref{proposition:dimV^2},
namely, it is a min-span Sidon space.
Moreover, Construction~\ref{construction:MainBinary}
yields a min-span Sidon space too, now with~$\dim(V^2)$ strictly smaller than~$n$; we show this in the following lemma.

\begin{lemma}
	The subspace~$V$ in Construction~\ref{construction:MainBinary} is a min-span Sidon space.
\end{lemma}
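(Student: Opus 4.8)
The plan is to prove the two inequalities $2k\le\dim(V^2)$ and $\dim(V^2)\le 2k$ separately. The lower bound is immediate: by Theorem~\ref{theorem:MainBinaryProof} the space $V$ is a Sidon space, so Proposition~\ref{proposition:dimV^2} gives $\dim(V^2)\ge 2k$ (recall that $k\ge 3$ is assumed throughout). Thus the whole content of the lemma is the reverse inequality $\dim(V^2)\le 2k$, and I would establish it by exhibiting an explicit $\field_q$-subspace of dimension exactly $2k$ that contains $V^2$.

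To locate this containing subspace, I would revisit the product formula~\eqref{equation:MainBinary}, which shows that every generator of $V^2$ has the shape $uv+(uv^q+u^qv)\gamma+(uv)^q\gamma^2$; in particular, its $\gamma^2$-coefficient is the $q$-th power of its $\gamma^0$-coefficient. Since $n>2k$ (as $k$ is the \emph{largest} divisor of $n$ below $n/2$, the minimal polynomial of $\gamma$ over $\field_{q^k}$ has degree $n/k\ge 3$), the set $\{1,\gamma,\gamma^2\}$ is linearly independent over $\field_{q^k}$, so the three coefficients are well defined for any element of $\field_{q^k}\oplus\field_{q^k}\gamma\oplus\field_{q^k}\gamma^2$. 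The crucial point is that the relation ``$\gamma^2$-coefficient equals the $q$-th power of the $\gamma^0$-coefficient'' is preserved under $\field_q$-linear combinations: if $w=\sum_m\lambda_m g_m$ with $\lambda_m\in\field_q$ and each generator $g_m$ has $\gamma^0$- and $\gamma^2$-coefficients $\alpha_m$ and $\alpha_m^q$, then the $\gamma^2$-coefficient of $w$ equals $\sum_m\lambda_m\alpha_m^q=\sum_m\lambda_m^q\alpha_m^q=(\sum_m\lambda_m\alpha_m)^q$, using $\lambda_m^q=\lambda_m$ for $\lambda_m\in\field_q$ together with the $\field_q$-linearity of the Frobenius map. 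Hence every $w\in V^2$ can be written as $c_0+c_1\gamma+c_0^q\gamma^2$ for suitable $c_0,c_1\in\field_{q^k}$.

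It then remains to bound the dimension of the target space $W\triangleq\{c_0+c_1\gamma+c_0^q\gamma^2 \,:\, c_0,c_1\in\field_{q^k}\}$. The map $(c_0,c_1)\mapsto c_0+c_1\gamma+c_0^q\gamma^2$ is $\field_q$-linear, and it is injective because $\{1,\gamma,\gamma^2\}$ is $\field_{q^k}$-independent and $c_0\mapsto c_0^q$ is a bijection on $\field_{q^k}$; therefore $\dim_{\field_q}(W)=2k$. Combining $V^2\subseteq W$ with the lower bound sandwiches $\dim(V^2)$ between $2k$ and $2k$, yielding $\dim(V^2)=2k$, i.e.\ $V$ is a min-span Sidon space. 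I do not expect a genuine obstacle here, as the argument is short; the one step that must be handled with care is the invariance of the Frobenius relation under $\field_q$-combinations, which hinges precisely on the identity $\lambda^q=\lambda$ for $\lambda\in\field_q$ and would fail for coefficients taken outside the base field.
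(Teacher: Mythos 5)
Your proof is correct and follows essentially the same route as the paper: the paper also bounds $\dim(V^2)$ from above by observing that $V^2$ is contained in the $2k$-dimensional $\field_q$-subspace $U=\{A+B\gamma+A^q\gamma^2 \,:\, A,B\in\field_{q^k}\}$, whose subspace structure rests on exactly the $\field_q$-linearity of $A\mapsto A^q$ that you verify by hand. Your explicit check of closure under $\field_q$-combinations and the injectivity argument for $\dim W = 2k$ are just spelled-out versions of steps the paper asserts directly.
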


\begin{proof}
	First, observe that~\eqref{equation:MainBinary} implies that
$V^2=\Span{\{uv+(uv^q+u^qv)\gamma+(uv)^q\gamma^2 \,:\, u,v\in V\}}$.
Secondly, since the mapping $A \mapsto A^q$ on~$\field_{q^k}$
is linear over~$\field_q$,
the set $U\triangleq \{A+B\gamma+A^q\gamma^2 \,:\, A,B\in\field_{q^k}\}$
is a linear subspace over~$\field_q$
and its dimension equals~$2k=2\dim V$.
The result follows from the containment~$V^2\subseteq U$.
\end{proof}

\begin{remark}
	Constructions~\ref{construction:MainBinary}	and~\ref{construction:mainNonBinary} can be generalized
to~$V\triangleq\{ u + u^{q^s} \,:\,  u \in \field_{q^k} \}$,
where $\gcd(s,k) = 1$.
To see this, note that in the proofs of
Theorems~\ref{theorem:MainBinaryProof} and~\ref{theorem:mainNonBinary},
the set~$\{ u \field_q \}$ can be uniquely
determined from~$u^{q^s-1}$, for every $u \in \field_{q^k}$;
this is indeed so, since $u^{q^s-1} = (u^{(q^s-1)/(q-1)})^{q-1}$ and the mapping $u \mapsto u^{(q^s-1)/(q-1)}$ is
bijective
on $\field_{q^k}$ and on $\field_q$. By the same reasoning, the mapping $ x \mapsto x - c x^{q^s}$ is
bijective on~$\field_{q^k}$.\qed
\end{remark}

While
Constructions~\ref{construction:MainBinary} and~\ref{construction:mainNonBinary} provide a Sidon space whose dimension scales linearly with~$n$, they do not apply
to
all possible values of~$n$. The following theorem, whose proof is deferred to Appendix~\ref{section:SidonForAnyN}, shows that such Sidon spaces exist for any~$n$.

\begin{theorem}\label{theorem:Existence}
	For any prime power~$q$ and integer~$n \ge 6$, there exists a Sidon space in $\grsmn{q}{n}{\floor{(n{-}2)/4}}$.
\end{theorem}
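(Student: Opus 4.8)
The plan is to establish the statement by the first-moment (probabilistic) method: I will show that a uniformly random $V\in\grsmn{q}{n}{k}$, with $k=\floor{(n-2)/4}$, fails to be a Sidon space with probability strictly less than $1$, so a Sidon space must exist. To set this up I first reformulate the Sidon condition projectively. Write $\theta=(q^n-1)/(q-1)$ for the number of one-dimensional subspaces (projective points) of $\field_{q^n}$, and note that the product $a\field_q\cdot b\field_q$ is a well-defined projective point (scaling $a,b$ by $\field_q^*$ scales $ab$ by $\field_q^*$). Then Definition~\ref{definition:SidonSpace} says that $V$ fails to be Sidon precisely when it contains a \emph{bad configuration}: an unordered pair of pairs $\{\{a\field_q,b\field_q\},\{c\field_q,d\field_q\}\}$ of projective points of $V$ with $ab=cd$ up to a scalar, yet $\{a\field_q,b\field_q\}\neq\{c\field_q,d\field_q\}$. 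The quantity to control is $E$, the expected number of bad configurations contained in a random $V$, and the goal is $E<1$.

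To bound $E$ by a union bound I would enumerate bad configurations in $\field_{q^n}$ and multiply each by the probability that a random $V$ contains it. The crucial observation is that a bad configuration is determined by only three projective points: once $a\field_q,b\field_q,c\field_q$ are chosen, the point $d\field_q=(ab/c)\field_q$ is forced, so there are at most $\theta^3$ ordered configurations to consider. On the other side, if the points of a configuration span an $s$-dimensional $\field_q$-subspace $S$, then the probability that $S\subseteq V$ equals $\qbin{n-s}{k-s}{q}/\qbin{n}{k}{q}$, which is of order $q^{-s(n-k)}$. I would organize the count by the span dimension $s\in\{2,3,4\}$, which is governed by the degeneracy type of the configuration (whether $a\field_q=b\field_q$ and/or $c\field_q=d\field_q$, and how many linear dependencies the four points satisfy).

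Putting these together, the dominant contribution comes from the generic case in which the four points are linearly independent ($s=4$): its contribution is of order $\theta^3\cdot q^{-4(n-k)}\approx q^{3n}\cdot q^{-4(n-k)}=q^{-n+4k}$, and requiring this to be below $1$ is exactly the condition $n\gtrsim 4k$. All degenerate types give strictly weaker constraints: configurations spanning $s=3$ contribute of order $q^{-n+3k}$ (fewer free points, balanced against a larger containment probability), and those spanning $s=2$ contribute of order $q^{-n+2k}$. Summing the finitely many types, $E$ is bounded by a constant times $q^{-n+4k-3}$; the slack built into $k\le\floor{(n-2)/4}$ (equivalently $4k\le n-2$) absorbs the constant factors coming from $\theta\le 2q^{n-1}$ and from the $q$-binomial ratios, yielding $E<1$.

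The main obstacle I anticipate is not the generic term but the careful treatment of the degenerate configurations together with constant bookkeeping. One must verify that the lower-span families really do give the weaker bounds claimed (so that $s=4$ dominates), that overcounting from the unordered nature of configurations is handled (a harmless factor absorbed by the slack), and that the accumulated constants are small enough for the bound $4k\le n-2$ to push $E$ strictly below $1$ for all $n\ge 6$ — including a direct check of the small cases, where $k\in\{1,2\}$ renders most configuration types vacuous. Once $E<1$ is established, the first-moment principle furnishes a $V$ containing no bad configuration, which is therefore the desired Sidon space in $\grsmn{q}{n}{\floor{(n-2)/4}}$.
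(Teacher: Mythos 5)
Your route---a first-moment argument over a uniformly random $V\in\grsmn{q}{n}{k}$---is genuinely different from the paper's. The paper proceeds by induction on the dimension: Lemma~\ref{lemma:ExistenceInduction} shows that a Sidon space $V$ of dimension $k$ can be extended by one element whenever $q^n-q^k>2(q^{4k+4}{-}1)/(q{-}1)$, because every ``bad'' new element $v$ is a root of one of at most $(q^{4k+4}{-}1)/(q{-}1)$ pairwise independent nontrivial quadratic equations whose coefficients are determined by four scalars of $\field_q$ and four elements of $V$; iterating from $k=1$ gives the theorem. Your global approach could in principle be completed, and your claimed orders of magnitude for each span class are in fact correct; but there is a genuine gap exactly at the point you flag as ``the main obstacle'': the counts for the degenerate classes $s=2,3$ do not follow from the ``fewer free points'' heuristic, and that heuristic, implemented naively, fails.

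Concretely, a span-$3$ configuration is determined by the three points $a\field_q,b\field_q,c\field_q$ just as a span-$4$ one is, so nothing is gained automatically; and any count of the form ``(number of $s$-dimensional subspaces) $\times$ (number of triples inside)'' gives $N_3\approx q^{3n}$ and $N_2\approx q^{2n}$, which are useless (e.g.\ $N_2\cdot q^{-2(n-k)}\approx q^{2k}$). This is not an artifact of sloppy counting: there exist subspaces in which \emph{every} triple produces a degenerate configuration---every coset $\gamma\field_{q^2}$ (for $s=2$, when $2\mid n$) and every coset $\gamma\field_{q^3}$ (for $s=3$, when $3\mid n$) is closed under $(a,b,c)\mapsto ab/c$---so the count must exploit the multiplicative structure of $\field_{q^n}$, not just linear-algebraic degrees of freedom. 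What actually rescues your argument is the following pair of lemmas, which your proposal asserts but does not supply. For $s=3$: writing $d=ab/c=\alpha a+\beta b+\gamma c$ with $(\alpha,\beta,\gamma)\in\field_q^3$, for fixed $a,b$ and fixed scalars the element $c$ is a root of the quadratic $\gamma c^2+(\alpha a+\beta b)c-ab=0$, so $N_3\le 2q^3\theta^2$ (with $\theta=(q^n{-}1)/(q{-}1)$ as in your notation)---notably the same ``roots of a quadratic'' device that powers the paper's inductive lemma. For $s=2$: the relation $ab=(\alpha a+\beta b)(\alpha' a+\beta' b)$ forces $a/b$ to satisfy a nontrivial quadratic over $\field_q$ (unless the two factorizations coincide, which is not a bad configuration), hence $a/b\in\field_{q^2}$ and the span is a coset $\gamma\field_{q^2}$; there are only $(q^n{-}1)/(q^2{-}1)$ of these, so $N_2\lesssim q^{n+1}$. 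With these two counts your union bound closes, though the constants are tight enough that for $q=2$ the argument only kicks in for moderately large $n$ (where it fails, $k=\floor{(n{-}2)/4}\le 3$, and those cases need separate treatment---note that they include $k=3$, not only $k\in\{1,2\}$ as you suggest). Without these lemmas, the decisive estimates of the proof are asserted rather than proved.
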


\section{Construction of Maximum-Span Sidon Spaces}\label{section:max-span}

The constructions we presented in Section~\ref{section:Constructions} attain the lower bound in
Proposition~\ref{proposition:dimV^2}.
In this section, we consider the other extreme
case---namely, constructions of max-span Sidon spaces,
which attain the upper bound in that proposition.
The next lemma states that if a subspace attains the upper bound in
Proposition~\ref{proposition:dimV^2},
then the subspace is in effect, a Sidon space.

\begin{lemma}\label{lemma:max-spanSidonSpace}
	For a subspace~$V$ in~$\grsmn{q}{n}{k}$,
if $\dim(V^2)=\binom{k+1}{2}$,
then~$V$ is a (max-span) Sidon space.
\end{lemma}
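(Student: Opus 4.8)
The plan is to linearize the multiplicative structure of $V$ by passing to a polynomial ring, where unique factorization does the real work. First I would fix a basis $w_1,\ldots,w_k$ of $V$ over $\field_q$ and record that $V^2=\Span{\{w_iw_j \,:\, 1\le i\le j\le k\}}$ is spanned by exactly $\binom{k+1}{2}$ products. The hypothesis $\dim(V^2)=\binom{k+1}{2}$ therefore says precisely that these products are $\field_q$-linearly independent, i.e., they form a basis of $V^2$.

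Next I would set up a dictionary between $V$ and linear forms. To each element $a=\sum_i \alpha_i w_i$ of $V$ associate the linear form $A=\sum_i \alpha_i X_i\in\field_q[X_1,\ldots,X_k]$; this is an $\field_q$-linear bijection between $V$ and the homogeneous degree-one part, and nonzero elements of $V$ correspond to nonzero linear forms. Consider the $\field_q$-linear evaluation map $\phi$ from the space of homogeneous degree-two polynomials to $V^2$ determined by $X_iX_j\mapsto w_iw_j$. By the previous paragraph $\phi$ sends a basis to a basis, hence it is injective, and by construction $\phi(AB)=ab$ for all $a,b\in V$.

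Now suppose $ab=cd$ for nonzero $a,b,c,d\in V$, with associated nonzero linear forms $A,B,C,D$. Then $\phi(AB)=\phi(CD)$, and injectivity of $\phi$ yields the polynomial identity $AB=CD$ in $\field_q[X_1,\ldots,X_k]$. Since this ring is a unique factorization domain in which every nonzero linear form is irreducible and the units are exactly $\field_q^*$, the two factorizations $A\cdot B$ and $C\cdot D$ of the same nonzero degree-two polynomial must agree up to order and unit multiples: either $A=\lambda C$ and $B=\lambda^{-1}D$, or $A=\lambda D$ and $B=\lambda^{-1}C$, for some $\lambda\in\field_q^*$. Translating back through the bijection $a\leftrightarrow A$ gives $\{a\field_q,b\field_q\}=\{c\field_q,d\field_q\}$, which is exactly the Sidon property.

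The step that deserves the most care is the claim that equality in the field $\field_{q^n}$ lifts to an equality of polynomials; this is precisely where the dimension hypothesis enters, through the injectivity of $\phi$. A tempting alternative is to encode $ab$ as the symmetric matrix $\mathbf{a}\mathbf{b}^{\transpose}+\mathbf{b}\mathbf{a}^{\transpose}$ and reason about its rank-one decompositions, but that route runs into the usual factor-of-two trouble in characteristic two; the polynomial/UFD formulation sidesteps this and works uniformly for every prime power~$q$.
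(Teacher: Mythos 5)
Your proof is correct and is essentially the paper's own argument: the paper likewise associates linear forms $p_a,p_b,p_c,p_d$ to $a,b,c,d$, uses the linear independence of the products of basis elements (your injectivity of $\phi$ is just the coefficient-comparison step phrased map-theoretically) to lift $ab=cd$ to a polynomial identity, and then invokes unique factorization of linear forms in $\field_q[x_1,\ldots,x_k]$. The only difference is presentational, so there is nothing to add.
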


\begin{proof}
	In light of Definition~\ref{definition:SidonSpace}, it suffices to prove that if
$a,b,c,d\in V$ satisfy $ab=cd \ne 0$,
then~$\{ a\field_q,b\field_q \}= \{ c\field_q,d\field_q \}$.
To this end, let
$\bldv=(v_i)_{i \in [k]}$ be a vector over~$\field_{q^k}$
whose entries form a basis of~$V$ over~$\field_q$, let
$a,b,c,d\in V \setminus \{ 0 \}$,
and denote
	\begin{align*}
	a &= \sum_{i \in [k]} a_i\cdot v_i= p_a(\bldv),
        &b = \sum_{i \in [k]} b_i\cdot v_i= p_b(\bldv),\\
        c &= \sum_{i \in [k]} c_i\cdot v_i= p_c(\bldv),
        &d =\sum_{i \in [k]} d_i\cdot v_i= p_d(\bldv),
	\end{align*}
	where $a_i,b_i,c_i,$ and $d_i$ are elements of $\field_q$ for
all $i\in[k]$, and $p_a,p_b,p_c$, and~$p_d$ are
the following multivariate polynomials over $\field_q$ in
the indeterminates $\bldx = (x_i)_{i \in [k]}$:
	\begin{align*}
	p_a(\bldx)&\triangleq \sum_{i\in[k]}a_ix_i,
	\hspace{2cm}p_b(\bldx)\triangleq \sum_{i\in[k]}b_ix_i,\\
	p_c(\bldx)&\triangleq \sum_{i\in[k]}c_ix_i,
	\hspace{2cm}p_d(\bldx)\triangleq \sum_{i\in[k]}d_ix_i.
	\end{align*}
	Notice that $ab=cd$ implies that
	\begin{equation}\label{eqn:abcdPolynomials}
	p_a(\bldv)\cdot p_b(\bldv)=\sum_{i\in[k]}a_ib_i v_i^{2}+\sum_{i\ne j}a_ib_j v_i v_j= \sum_{i\in[k]}c_id_i v_i^2+\sum_{i\ne j}c_id_j v_i v_j=p_c(\bldv)\cdot p_d(\bldv).
	\end{equation}
Since~$\dim(V^2)=\binom{k+1}{2}$, and since~$V^2$ is spanned by
the set~$\{v_i\cdot v_j \,:\, i,j\in[k], i\geq j  \}$, it follows that 
the latter set
is linearly independent over~$\field_q$. Hence, we may compare coefficients in~\eqref{eqn:abcdPolynomials} and obtain that
	\begin{align}\label{eqn:PolyCoefficientsEqual}
	\begin{array}{ll}
	a_ib_i=c_id_i		     &\textrm{for all }i\in[k], \textrm{ and }\\
	a_ib_j+a_jb_i=c_id_j+c_jd_i&\textrm{for all distinct }i,j\in[k].
	\end{array}
	\end{align}

	According to~\eqref{eqn:PolyCoefficientsEqual}, it is readily verified that $p_a(\bldx)p_b(\bldx)=p_c(\bldx)p_d(\bldx)$. Since the ring of multivariate polynomials over a field is a unique factorization domain, and since~$p_a,p_b,p_c$, and~$p_d$ are irreducible over~$\field_q$, it follows that
the sets $\{ p_a,p_b \}$ and $\{ p_c,p_d \}$ are equal,
up to a multiplication by a nonzero element of~$\field_q$. Hence,~$\{ a\field_q,b\field_b \}=\{ c\field_q,d\field_b \}$.
\end{proof}

Clearly, max-span Sidon spaces exist
in~$\grsmn{q}{n}{k}$ only if~$n\ge \binom{k+1}{2}$.
In the remainder of this section,
three constructions of Sidon spaces are given. The first two are easily 
seen as being of the max-span type, whereas the third has been verified numerically to be so only
for the few parameters that were tested.
Note that Remark~\ref{remark:smallerK} holds also when
the Sidon spaces referred to therein are all of the max-span type.
We start with our first construction.

\begin{construction}\label{construction:SpaceFromSet}
Let~$\SidonSet\triangleq\{n_1,n_2,\ldots,n_k\}\subseteq [m]$ be
a Sidon set in~$\integers$ (see Definition~\ref{definition:SidonSet})
such that~$m=k^2(1 + o_k(1))$~\cite{SidonSurvey}, and for
an integer $n>2m$ and a proper element~$\gamma$ of~$\field_{q^n}$
(that does not belong to any proper subfield of~$\field_{q^n}$),
let $V\triangleq \Span{\{\gamma^{n_i}\}_{i \in [k]}}$.
\end{construction}

\begin{lemma}
\label{lemma:SpaceFromSet}
The subspace $V \in \grsmn{q}{n}{k}$ from
Construction~\ref{construction:SpaceFromSet}
is a max-span Sidon space.
\end{lemma}

\begin{proof}
	According to Lemma~\ref{lemma:max-spanSidonSpace}, it suffices to show that
the set~$\Gamma\triangleq\{\gamma^{n_i+n_j} \,:\, i,j\in[k],i\ge j \}$,
which spans~$V^2$, is linearly independent over~$\field_q$. Since $\SidonSet$ is a Sidon set, it follows that
the exponents of~$\gamma$ in~$\Gamma$ are distinct.
Furthermore, since~$\gamma$ is proper in~$\field_{q^n}$
and $n>2m$, it follows that~$\Gamma$
contains~$\binom{k+1}{2}$ distinct elements
which are linearly independent over~$\field_q$. Hence,~$V$ is
a max-span Sidon space.
\end{proof}

Next, we turn to our second construction of max-span Sidon spaces.

\begin{construction}
\label{construction:irreduciblePolys}
Let~$\Irred\triangleq\{p_{s,t}(x) \,:\, s,t\in[k], s \ge t \}$
be a set of~$\binom{k+1}{2}$ distinct monic irreducible polynomials
over~$\field_q$ and let~$\Delta$ be the largest degree
of any polynomial in~$\Irred$. For any~$i\in[k]$, let
\begin{equation}
\label{equation:Aifi's}
f_i(x) \triangleq
\prod_{\substack{(s,t) \in [k] \times [k] \,:\\
s \ge t, \, s \ne i, \, t \ne i}}p_{s,t}(x) ,
\end{equation}
and, for $n> 2\Delta\cdot \binom{k}{2}$
and~$\gamma$ proper in~$\field_{q^n}$,
let~$V\triangleq\Span{\{f_i(\gamma)\}_{i\in[k]}}$.
\end{construction}

\begin{lemma}
\label{lemma:irreduciblePolys}
The subspace $V \in \grsmn{q}{n}{k}$ from
Construction~\ref{construction:irreduciblePolys}
is a max-span Sidon space.
\end{lemma}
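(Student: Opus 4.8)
The plan is to apply Lemma~\ref{lemma:max-spanSidonSpace}: it suffices to verify that $V$ is genuinely $k$-dimensional and that $\dim(V^2)=\binom{k+1}{2}$. Since $V^2$ is spanned by the $\binom{k+1}{2}$ products $f_i(\gamma)f_j(\gamma)$ with $i\ge j$, both facts reduce to linear-independence statements over $\field_q$: that $\{f_i(\gamma)\}_{i\in[k]}$ is independent, and that $\{f_i(\gamma)f_j(\gamma)\}_{i\ge j}$ is independent. I would first dispose of the passage from $\gamma$ back to polynomials. Each $f_i$ is a product of $\binom{k}{2}$ of the $p_{s,t}$ (those with $s,t\ne i$), so $\deg f_i\le \Delta\binom{k}{2}$ and hence $\deg(f_if_j)\le 2\Delta\binom{k}{2}<n$. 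Because $\gamma$ is proper in $\field_{q^n}$, its minimal polynomial over $\field_q$ has degree $n$, so no nonzero polynomial of degree $<n$ over $\field_q$ vanishes at $\gamma$. Consequently an $\field_q$-linear dependence among the $f_i(\gamma)f_j(\gamma)$ (resp.\ the $f_i(\gamma)$) forces the corresponding $\field_q$-linear combination of the polynomials $f_if_j$ (resp.\ $f_i$) to be the zero polynomial, and it therefore suffices to prove independence at the level of $\field_q[x]$.

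The heart of the argument is a specialization at the roots of the $p_{s,t}$. For each pair $(s,t)$ with $s\ge t$, fix a root $\alpha_{s,t}$ of $p_{s,t}$ in $\overline{\field_q}$; since $p_{s,t}$ is the minimal polynomial of $\alpha_{s,t}$, for any $g\in\field_q[x]$ we have $g(\alpha_{s,t})=0$ iff $p_{s,t}\mid g$. Applying this to $g=f_i$ and using that the $p_{s,t}$ are distinct irreducibles, one obtains the crucial dichotomy
\[
 f_i(\alpha_{s,t})=0 \iff p_{s,t}\mid f_i \iff i\notin\{s,t\},
\]
so that $f_i(\alpha_{s,t})\ne 0$ precisely when $i\in\{s,t\}$.

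With this in hand the independence is essentially immediate. Suppose $\sum_{i\ge j}\lambda_{ij}f_if_j=0$ in $\field_q[x]$ with $\lambda_{ij}\in\field_q$. Evaluating at a diagonal root $\alpha_{s,s}$, every term vanishes except the one with $i=j=s$ (the only indices lying in $\{s\}$), leaving $\lambda_{ss}f_s(\alpha_{s,s})^2=0$; since $f_s(\alpha_{s,s})\ne 0$ this gives $\lambda_{ss}=0$ for all $s$. Evaluating next at an off-diagonal root $\alpha_{s,t}$ with $s>t$, the surviving terms are those with $i,j\in\{s,t\}$, namely $(s,s),(t,t),(s,t)$; as the diagonal coefficients already vanish, only $\lambda_{st}f_s(\alpha_{s,t})f_t(\alpha_{s,t})=0$ remains, and both factors are nonzero, so $\lambda_{st}=0$. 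Hence all $\lambda_{ij}=0$ and $\{f_if_j\}$ is independent. The same specialization, applied to $\sum_i\lambda_i f_i=0$ and evaluated at the diagonal roots $\alpha_{s,s}$, yields $\lambda_s=0$ for every $s$, so $\{f_i\}$ is independent and $V$ is indeed $k$-dimensional.

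I expect the only real subtlety to be the bookkeeping that packages both independence claims through the degree bound $\deg(f_if_j)<n$ and the properness of $\gamma$; the combinatorial core---reading off the coefficients one by one by evaluating at the roots $\alpha_{s,t}$---is clean, the decisive point being that the factor $p_{s,t}$ is omitted from $f_i$ exactly when $i\in\{s,t\}$, which isolates a single coefficient on the diagonal and then a single off-diagonal coefficient at each step.
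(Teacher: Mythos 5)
Your proposal is correct and takes essentially the same approach as the paper: the identical degree bound $\deg(f_if_j)\le 2\Delta\binom{k}{2}<n$ together with properness of $\gamma$ reduces everything to linear independence in $\field_q[x]$, and your evaluation at a root $\alpha_{s,t}$ of $p_{s,t}$ is just the paper's reduction modulo $p_{s,t}$ viewed through the isomorphism $\field_q[x]/(p_{s,t})\cong\field_q(\alpha_{s,t})$, with the same two-step elimination (diagonal coefficients first, then off-diagonal). The only addition is your explicit check that $\{f_i(\gamma)\}_{i\in[k]}$ is independent, which the paper leaves implicit (it also follows from $\dim(V^2)=\binom{k+1}{2}$, since a smaller $\dim V$ would force $\dim(V^2)<\binom{k+1}{2}$).
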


The proof of Lemma~\ref{lemma:irreduciblePolys} is given in
Appendix~\ref{section:omittedProofs}, where it is also shown that
Construction~\ref{construction:irreduciblePolys} provides
a max-span Sidon space in~$\grsmn{q}{n}{k}$
when $n$ is (as small as)~$(2 + o_k(1)) \, k^2\log_q k$;
in particular, when $q \ge \binom{k+1}{2}$, one can take
$n = k(k{-}1){+}1$.

The third construction we present in this section is based
on the following result from~\cite{SubspacePolynomials}.

\begin{lemma}[\cite{SubspacePolynomials}]
\label{lemma:subspacePoly}
	For any prime power~$q$ and positive integer~$k$, the root space of the linearized polynomial~$\Lambda(x)\triangleq x^{q^k}+x^q+x$ over~$\field_q$ is a Sidon space in the splitting field~$\field_{q^n}$ of~$\Lambda(x)$.
\end{lemma}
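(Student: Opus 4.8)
The plan is to reduce the claim to the same ``two-equation'' argument that drives the proofs of Theorems~\ref{theorem:MainBinaryProof} and~\ref{theorem:mainNonBinary}. First I would record the structural facts about the root space~$V$ of $\Lambda(x)=x^{q^k}+x^q+x$. Every exponent appearing in~$\Lambda$ is a power of the characteristic, so $\Lambda$ is a linearized (hence additive) polynomial, and $\Lambda(\lambda x)=\lambda\Lambda(x)$ for $\lambda\in\field_q$; thus~$V$ is an $\field_q$-subspace. Moreover $\Lambda'(x)=1\ne 0$, so $\Lambda$ is separable and has exactly $q^k=\deg\Lambda$ distinct roots, whence $\dim_{\field_q}V=k$ and $V\in\grsmn{q}{n}{k}$ for the splitting field~$\field_{q^n}$. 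The single relation I will exploit repeatedly is that every $x\in V$ satisfies $x^{q^k}=-(x^q+x)$.

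Next, following Definition~\ref{definition:SidonSpace}, I would take nonzero $a,b,c,d\in V$ with $ab=cd$ and aim to produce two symmetric identities in $(a,b)$ and $(c,d)$. Raising $ab=cd$ to the $q^k$-th power and substituting $a^{q^k}=-(a^q+a)$ (and likewise for $b,c,d$) gives
\[
(a^q+a)(b^q+b)=(c^q+c)(d^q+d).
\]
Expanding both sides and cancelling the terms $(ab)^q=a^qb^q$ and $ab$, which already agree since $ab=cd$, leaves
\[
a^qb+ab^q=c^qd+cd^q.
\]
So I obtain the pair of identities $ab=cd$ and $a^qb+ab^q=c^qd+cd^q$, precisely the data needed to recover the unordered pair of cosets.

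Finally, I would package these into a single quadratic, just as the auxiliary $P(x)$ was used earlier. Consider $R(x)=(a^qx+a)(b^qx+b)=(ab)^qx^2+(a^qb+ab^q)x+ab$, whose three coefficients are $(ab)^q$, $a^qb+ab^q$, and $ab$. By the two identities above---together with $(ab)^q=(cd)^q$, which follows from $ab=cd$---the corresponding quadratic for $(c,d)$ has identical coefficients, so the two quadratics coincide. Hence their root sets agree: $\{-a^{1-q},-b^{1-q}\}=\{-c^{1-q},-d^{1-q}\}$. Since $x^{q-1}$ determines the coset $x\field_q$ for nonzero~$x$ (if $a^{q-1}=a'^{\,q-1}$ then $(a/a')^{q-1}=1$, so $a/a'\in\field_q^*$), this yields $\{a\field_q,b\field_q\}=\{c\field_q,d\field_q\}$, as required.

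The substantive step is the passage to the second identity: the insight that raising the product equation to the $q^k$-th power and invoking the defining relation $x^{q^k}=-(x^q+x)$ converts the single hypothesis $ab=cd$ into two symmetric equations. Once that is in hand, the remainder is the now-familiar factorization argument, and I do not anticipate further difficulty; the only minor points to check are that all elements are nonzero (so that $R$ is genuinely quadratic and $x^{q-1}$ is well defined) and that $ab=cd\ne 0$, both of which are guaranteed by the hypotheses of Definition~\ref{definition:SidonSpace}.
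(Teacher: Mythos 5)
Your proof is correct. Note, however, that the paper itself does not prove this lemma at all: it is quoted from~\cite{SubspacePolynomials}, where the result is established in coding-theoretic terms---one shows that $\dim(V\cap\alpha V)\le 1$ for every $\alpha\in\field_{q^n}^*\setminus\field_q^*$ (equivalently, that $\orb(V)$ has full size and minimum distance $2k{-}2$), which becomes the Sidon property only through the equivalence of Lemma~\ref{lemma:SidonGivesCyclic}. You instead verify Definition~\ref{definition:SidonSpace} directly, and every step checks out: $\Lambda'(x)=1$ makes $\Lambda$ separable, so $V$ indeed has dimension~$k$; raising $ab=cd$ to the power $q^k$ and substituting the defining relation $x^{q^k}=-(x^q+x)$ legitimately yields the second symmetric identity $a^qb+ab^q=c^qd+cd^q$ (the terms $ab=cd$ and $(ab)^q=(cd)^q$ cancel); the quadratics $(a^qx+a)(b^qx+b)$ and $(c^qx+c)(d^qx+d)$ then coincide coefficient by coefficient, hence their root multisets $\{-1/a^{q-1},-1/b^{q-1}\}$ and $\{-1/c^{q-1},-1/d^{q-1}\}$ agree, and $x^{q-1}$ determines the coset $x\field_q$ for nonzero~$x$. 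Your quadratic plays precisely the role of the polynomial $P(x)$ in the proofs of Theorems~\ref{theorem:MainBinaryProof} and~\ref{theorem:mainNonBinary}; the genuinely new ingredient---needed because here no expansion in powers of a fixed~$\gamma$ is available for reading off coefficients---is the observation that applying the $q^k$-th power to the hypothesis $ab=cd$ supplies the missing middle coefficient. What your route buys is a self-contained argument in the paper's own vocabulary, structurally uniform with Section~\ref{section:Constructions}; what the route of~\cite{SubspacePolynomials} buys is the cyclic-subspace-code statement (full orbit, distance $2k{-}2$) directly, without passing through Lemma~\ref{lemma:SidonGivesCyclic}.
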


Clearly, Lemma~\ref{lemma:subspacePoly} implies the existence of Sidon spaces for
any~$q$ and~$k$. However, no meaningful (e.g., polynomial)
upper bound was given in~\cite{SubspacePolynomials} for the extension
degree~$n$ of the splitting field of~$\Lambda(x)$.
The upcoming sequence of lemmas lead to a proof that the set~$V$ defined
next is a Sidon space in $\grsmn{q}{k^2{-}1}{k}$.

\begin{construction}\label{construction:linearized}
	For any prime power~$q$ and for any integer~$k > 1$ which is a power of~$q$, let $V$ be the root space of~$x^{q^k}+x^q+x$ in~$\field_{q^{k^2-1}}$.
\end{construction}

\begin{lemma}[{\cite[p.~116, Thm.~3.62]{FiniteFields}}]
\label{lemma:qAssociates}
Let $A(x) = \sum_i a_i x^{q^i}$
and $B(x) = \sum_i b_i x^{q^i}$
be linearized polynomials over an extension field of $\field_q$,
where $A(x) \ne 0$.
Then $A(x)$ divides $B(x)$, if and only if
the respective (ordinary) polynomial
$a(x) = \sum_i a_i x^i$ divides $b(x) = \sum_i b_i x^i$.
\end{lemma}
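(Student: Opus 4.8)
The plan is to transport the problem into the ring $\mathcal{L}$ of $q$-polynomials over $\field_q$ equipped with ordinary addition and \emph{symbolic} multiplication $A\otimes C\triangleq A(C(x))$ (functional composition), via the $q$-associate map $\phi:\sum_i a_i x^{q^i}\mapsto\sum_i a_i x^i$. (In the setting relevant here, and of the cited theorem, the coefficients $a_i,b_i$ lie in $\field_q$, so that the Frobenius map fixes them; this is precisely what makes $\phi$ multiplicative.) First I would check that $\phi$ is a ring isomorphism from $(\mathcal{L},+,\otimes)$ onto $(\field_q[x],+,\cdot)$. Additivity is immediate, and the only computation that matters is
\[
A(C(x))=\sum_{i,j}a_i c_j^{q^i}x^{q^{i+j}}=\sum_{i,j}a_i c_j\,x^{q^{i+j}},
\]
where the second equality uses $c_j\in\field_q$; applying $\phi$ gives $a(x)c(x)$, so $\phi(A\otimes C)=\phi(A)\phi(C)$ and, incidentally, $\otimes$ is commutative. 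Consequently $a(x)\mid b(x)$ in $\field_q[x]$ if and only if $B=A\otimes C$ for some $C\in\mathcal{L}$, i.e.\ iff $A$ symbolically divides $B$. The whole lemma thus reduces to showing that symbolic divisibility is equivalent to ordinary divisibility.

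For the direction symbolic $\Rightarrow$ ordinary, if $B=A\otimes C=C(A(x))$ (using commutativity), then every root $\alpha$ of $A$ satisfies $B(\alpha)=C(A(\alpha))=C(0)=0$. When $A$ is separable---equivalently $a_0\ne 0$, since $A'(x)=a_0$---the polynomial $A$ has $q^{\deg a}$ distinct roots, all of which are roots of $B$, whence $A\mid B$ as ordinary polynomials. The inseparable case $a_0=0$ is where I expect the real work to lie: the root count alone does not supply the correct multiplicities. Here I would factor out the Frobenius, writing $A=\hat A^{\,q^m}$ for the smallest index $m$ with $a_m\ne 0$, where $\hat A$ is a separable $q$-polynomial with $\phi(\hat A)=a(x)/x^m$; symbolic divisibility forces $x^m\mid b(x)$ as well, so likewise $B=\hat B^{\,q^m}$. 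The separable case then yields $\hat A\mid\hat B$, and raising to the $q^m$-th power (which preserves ordinary divisibility) gives $A\mid B$.

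For the converse, ordinary $\Rightarrow$ symbolic, I would set up a symbolic division algorithm: because the leading coefficient of $A\ne 0$ lies in $\field_q$ and Frobenius is bijective on $\field_q$, the top term of $B$ can be cancelled repeatedly by subtracting $A\otimes(\lambda x^{q^{e-d}})$ with a suitable $\lambda\in\field_q$, producing $q$-polynomials $C,R$ with $B=(A\otimes C)+R$ and $\deg_q R<\deg_q A$, hence $\deg R<\deg A$. Since $A\mid B$ by hypothesis and $A\mid(A\otimes C)$ by the direction just established, we get $A\mid R$; the degree bound forces $R=0$, so $B=A\otimes C$ and therefore $a\mid b$. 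Assembling the two implications with the isomorphism of the first paragraph completes the argument. The two genuinely nonroutine ingredients are the $q^m$-th-power reduction handling inseparability and the correctness (and termination) of the symbolic division algorithm; everything else is bookkeeping around the ring isomorphism.
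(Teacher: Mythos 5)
The paper offers no proof of this lemma for you to be compared against: it is quoted from Lidl and Niederreiter (Theorem~3.62) as a known result, so your proposal can only be judged on its own merits and against the classical textbook argument, which it essentially reconstructs --- the $q$-associate map as a ring isomorphism onto $(\field_q[x],+,\cdot)$, a symbolic division algorithm, and the degree argument forcing the remainder to vanish. Two remarks. First, you were right to insist that the coefficients $a_i,b_i$ lie in $\field_q$; this is not a pedantic aside but the exact hypothesis under which the lemma is true. As literally stated in the paper (``over an extension field of $\field_q$,'' with the paper's footnote allowing coefficients in $\field_{q^n}$), the claim is false: for $q=2$ and $\omega$ a generator of $\field_4^*$, the linearized polynomial $A(x)=x^2+\omega x=x(x+\omega)$ divides $B(x)=x^4+x=x(x+1)(x+\omega)(x+\omega^2)$ in the ordinary sense, yet $a(x)=x+\omega$ does not divide $b(x)=x^2+1=(x+1)^2$. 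The paper's application in Lemma~\ref{lemma:split} involves only coefficients from the prime field, so nothing downstream is affected. Second, your treatment of the direction ``symbolic divisibility implies ordinary divisibility'' is far more laborious than necessary: the root counting, the separability dichotomy, and the $q^m$-th power reduction can all be dispensed with. If $B(x)=C(A(x))$, then since $C$ is linearized it has zero constant term, so $y$ divides $C(y)$ as a polynomial in $y$; substituting $y=A(x)$ gives $A(x)\mid C(A(x))=B(x)$ at once, in every characteristic and with no case analysis. This one-line observation (it is the one used in Lidl--Niederreiter) would cut your proof roughly in half; the isomorphism and the symbolic division algorithm in the converse direction can stand as written.
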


The following lemma is proved in~\cite[p.~92]{Golomb} only for~$q=2$,
yet the proof extends almost verbatim to any prime power~$q$.

\begin{lemma}[\cite{Golomb}]
\label{lemma:Golomb}
	For~$k=q^m$, the polynomial $x^k+x+1$ over~$\field_q$ divides $x^{k^2-1}-1$.
\end{lemma}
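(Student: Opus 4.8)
The plan is to exploit the fact that, since $k=q^m$, we have $k^2-1=q^{2m}-1$, so the roots of $x^{k^2-1}-1$ are precisely the nonzero elements of $\field_{q^{2m}}$; indeed $x^{q^{2m}-1}-1=\prod_{\alpha\in\field_{q^{2m}}^*}(x-\alpha)$, a product of distinct linear factors. Consequently, a polynomial $g(x)$ over $\field_q$ divides $x^{k^2-1}-1$ if and only if $g$ is squarefree and every root of $g$ lies in $\field_{q^{2m}}^*$. I would therefore reduce the lemma to verifying two things about $f(x)\triangleq x^k+x+1$: that $f$ is squarefree, and that each of its roots lies in $\field_{q^{2m}}\setminus\{0\}$.

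Squarefreeness is immediate from the derivative. Writing $p$ for the characteristic (so $q$ is a power of $p$), the formal derivative is $f'(x)=k\,x^{k-1}+1=q^m x^{q^m-1}+1=1$, since $q^m\equiv 0\pmod p$. Hence $\gcd(f,f')=1$ and $f$ has no repeated roots. That $0$ is not a root is clear, as $f(0)=1\ne 0$.

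The key step, and the only one requiring a small argument, is confining the roots to $\field_{q^{2m}}$. Let $\beta$ be any root of $f$, so that $\beta^{q^m}=\beta^{k}=-\beta-1$. Applying the Frobenius automorphism $x\mapsto x^{q^m}$ (which is additive and fixes $\pm 1\in\field_q$) once more yields $\beta^{q^{2m}}=(-\beta-1)^{q^m}=-\beta^{q^m}-1=-(-\beta-1)-1=\beta$, whence $\beta\in\field_{q^{2m}}$. Thus $f$ is a squarefree polynomial whose roots all lie in $\field_{q^{2m}}^*$, and therefore $f(x)$ divides $\prod_{\alpha\in\field_{q^{2m}}^*}(x-\alpha)=x^{q^{2m}-1}-1=x^{k^2-1}-1$, as claimed. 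I expect the Frobenius identity $\beta^{q^{2m}}=\beta$ to be the crux; everything else is bookkeeping, and the hypothesis $k=q^m$ enters exactly twice, in killing the derivative and in realizing the exponent $k$ as the Frobenius power $q^m$.
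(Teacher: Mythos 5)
Your proof is correct and matches the intended argument: the paper does not reproduce a proof but cites Golomb's proof for $q=2$ (squarefreeness via the vanishing derivative, plus the Frobenius computation $\beta^{q^{2m}}=(-\beta-1)^{q^m}=\beta$ showing every root lies in $\field_{q^{2m}}^*$) and asserts it extends almost verbatim to general $q$; your write-up is exactly that extension. The one $q$-specific point needing care---that the Frobenius $x\mapsto x^{q^m}$ fixes $-1$, which holds since $-1\in\field_q$---is handled correctly.
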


\begin{lemma}
\label{lemma:split}
	For~$k=q^m$, the polynomial
$x^{q^k}+x^q+x$
splits in $\field_{q^{k^2-1}}$.
\end{lemma}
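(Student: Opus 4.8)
The goal is to show that $x^{q^k}+x^q+x$ splits in $\field_{q^{k^2-1}}$, where $k=q^m$. The plan is to reduce this linearized-polynomial statement to an ordinary-polynomial divisibility via Lemma~\ref{lemma:qAssociates}, and then to invoke Lemma~\ref{lemma:Golomb}.

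**Main strategy.**

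First I would recall that a linearized polynomial $L(x)$ over $\field_{q^n}$ splits in a field $\field_{q^N}$ precisely when its root space lies in $\field_{q^N}$, and that this happens if and only if $L(x)$ divides $x^{q^N}-x$, the linearized polynomial whose roots are exactly $\field_{q^N}$. So the target reduces to showing that the linearized polynomial $\Lambda(x)=x^{q^k}+x^q+x$ divides the linearized polynomial $x^{q^{k^2-1}}-x$ over $\field_q$. Now I would apply Lemma~\ref{lemma:qAssociates} (the $q$-associate correspondence): writing $\Lambda(x)$ in the form $\sum_i a_i x^{q^i}$, its $q$-associate ordinary polynomial is $a(x)=x^k+x+1$, and the $q$-associate of $x^{q^{k^2-1}}-x$ is $x^{k^2-1}-1$. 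By Lemma~\ref{lemma:qAssociates}, $\Lambda(x)\mid x^{q^{k^2-1}}-x$ if and only if $x^k+x+1\mid x^{k^2-1}-1$ as ordinary polynomials over $\field_q$. The latter divisibility is exactly the content of Lemma~\ref{lemma:Golomb} (which holds for $k=q^m$). Combining these two equivalences closes the argument.

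**Where the work lies.**

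The only step requiring care is the first reduction, namely the justification that $\Lambda(x)$ splits in $\field_{q^{k^2-1}}$ iff $\Lambda(x)\mid x^{q^{k^2-1}}-x$. This rests on the standard fact that the root space of a separable linearized polynomial $L(x)$ over $\field_q$ is contained in $\field_{q^N}$ exactly when every root of $L$ is a root of $x^{q^N}-x$, i.e.\ when $L\mid x^{q^N}-x$; separability of $\Lambda(x)$ is immediate since its linear term $x$ has nonzero coefficient, so $\Lambda$ has $q^k$ distinct roots and splitting is equivalent to all roots lying in $\field_{q^{k^2-1}}$. I expect this to be the main (though modest) obstacle, since everything else is a direct chaining of the two quoted lemmas. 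Once the $q$-associate translation is applied, the result follows immediately, so the proof should be short.
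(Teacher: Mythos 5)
Your proof is correct and follows essentially the same route as the paper: apply Lemma~\ref{lemma:qAssociates} to translate the claim into the ordinary divisibility $x^k+x+1 \mid x^{k^2-1}-1$, invoke Lemma~\ref{lemma:Golomb}, and conclude that $x^{q^k}+x^q+x$ divides $x^{q^{k^2-1}}-x$ and hence splits in $\field_{q^{k^2-1}}$. The extra care you take with separability and the ``iff'' direction is fine but unnecessary, since only the implication from divisibility to splitting is needed (a divisor of the squarefree polynomial $x^{q^{k^2-1}}-x$ automatically has all its roots, distinct, in $\field_{q^{k^2-1}}$).
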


\begin{proof}
	By Lemmas~\ref{lemma:qAssociates} and~\ref{lemma:Golomb},
that polynomial divides~$x^{q^{k^2-1}}-x$
and hence it splits in~$\field_{q^{k^2-1}}$.
\end{proof}

Lemmas~\ref{lemma:subspacePoly}
and~\ref{lemma:split} imply the following result.

\begin{lemma}\label{lemma:linearized}
The subspace~$V$ from Construction~\ref{construction:linearized}
is a Sidon space in~$\grsmn{q}{k^2{-}1}{k}$.
\end{lemma}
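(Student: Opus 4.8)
The plan is to assemble the claim directly from the two lemmas that immediately precede it, supplemented by one elementary dimension count. The statement has three components that must be verified: that the root space~$V$ lives inside~$\field_{q^{k^2-1}}$, that its~$\field_q$-dimension is exactly~$k$, and that it is a Sidon space.

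First I would pin down the ambient field. By Lemma~\ref{lemma:split}, since~$k=q^m$ is a power of~$q$, the linearized polynomial~$\Lambda(x)=x^{q^k}+x^q+x$ splits in~$\field_{q^{k^2-1}}$; hence all of its roots lie in~$\field_{q^{k^2-1}}$, and the splitting field of~$\Lambda(x)$ is a subfield of~$\field_{q^{k^2-1}}$. In particular~$V$, being the set of roots of~$\Lambda(x)$, is a subset of~$\field_{q^{k^2-1}}$.

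Next I would verify that~$V$ is a~$k$-dimensional~$\field_q$-subspace. Since~$\Lambda(x)$ is linearized with respect to~$\field_q$, its root set is automatically an~$\field_q$-subspace, so it only remains to count its cardinality. In characteristic~$p$ (where~$q$ is a power of~$p$, so that~$q>1$ forces both~$q\equiv 0$ and~$q^k\equiv 0$ modulo~$p$), the formal derivative is~$\Lambda'(x)=1$; thus~$\Lambda(x)$ is separable and has exactly~$q^k$ distinct roots. A set of~$q^k$ elements that forms an~$\field_q$-vector space has~$\field_q$-dimension~$k$, and together with the containment above and the inequality~$k\le k^2-1$ (valid since~$k\ge 2$), this yields~$V\in\grsmn{q}{k^2-1}{k}$.

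Finally, Lemma~\ref{lemma:subspacePoly} asserts that the root space of~$\Lambda(x)$ is a Sidon space in the splitting field of~$\Lambda(x)$. Since being a Sidon space is an intrinsic property of the subspace---Definition~\ref{definition:SidonSpace} refers only to products and~$\field_q$-scalar multiples of elements of~$V$, with no dependence on the surrounding field---this property is preserved when~$V$ is regarded as a subspace of the larger field~$\field_{q^{k^2-1}}$. Combining the three observations completes the argument. None of the steps presents a genuine obstacle once Lemmas~\ref{lemma:subspacePoly} and~\ref{lemma:split} are available; the only point demanding a moment of care is the separability computation, which is what certifies that~$\dim_{\field_q}V=k$ rather than some smaller value.
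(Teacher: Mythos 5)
Your proof is correct and follows the same route as the paper, which simply states that Lemmas~\ref{lemma:subspacePoly} and~\ref{lemma:split} together imply the result. The extra details you supply---the separability computation $\Lambda'(x)=1$ giving $\dim_{\field_q}V=k$, and the observation that the Sidon property is intrinsic to~$V$ and hence transfers from the splitting field to~$\field_{q^{k^2-1}}$---are exactly the routine verifications the paper leaves implicit.
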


For $(q,k,n)\in\{(2,4,15),(2,8,63),(3,3,8),(4,4,15) \}$ we have verified numerically that the root space of~$x^{q^k}+x^q+x$
in~$\field_{q^{k^2-1}}$
is a max-span Sidon space. It remains open whether this property holds for every~$q$ and every~$k=q^m$.

We end this section with an existence result of max-span Sidon spaces,
whenever~$n \ge \binom{k+1}{2}$.

\begin{theorem}\label{theorem:ExistenceMaxSpan}
        For any prime power~$q$ and positive integers~$k$
and $n \ge \binom{k+1}{2}$, all but a fraction of less than
\[
\frac{1}{q-1} \cdot q^{k(k+1)/2 - n}
\]
of the spaces in $\grsmn{q}{n}{k}$ are
max-span Sidon spaces. 
\end{theorem}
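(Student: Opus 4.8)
The plan is to bound the fraction of \emph{deficient} spaces (those that are not max-span) by a first-moment argument, and to reduce the sharp constant to an exact counting inequality. By Lemma~\ref{lemma:max-spanSidonSpace}, and since $\dim(V^2)\le\binom{k+1}{2}=:N$ holds for every $V$, a space $V\in\grsmn{q}{n}{k}$ is a max-span Sidon space precisely when $\dim(V^2)=N$. Thus it suffices to show that the fraction of $V$ with $\dim(V^2)\le N-1$ is less than $\frac{1}{q-1}q^{N-n}$. I would take $V$ uniformly at random in $\grsmn{q}{n}{k}$, equivalently a uniform ordered basis $v=(v_1,\dots,v_k)$.

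Fixing $v$, the inequality $\dim(V^2)<N$ holds exactly when the $N$ products $v_iv_j$ $(i\ge j)$ are $\field_q$-linearly dependent, i.e. when $\sum_{i\ge j}c_{ij}v_iv_j=0$ for some nonzero symmetric array $c=(c_{ij})\in\field_q^{N}$. Let $Z=(q^{N-\dim(V^2)}-1)/(q-1)$ be the number of such dependencies up to scaling; this is a nonnegative integer, and $V$ is deficient iff $Z\ge 1$. The first moment then gives
\[
\Pr[\,\dim(V^2)<N\,]=\Pr[Z\ge 1]\le \mathbb{E}[Z]=\sum_{[c]}\Pr{}_{v}\Big[\textstyle\sum_{i\ge j}c_{ij}v_iv_j=0\Big],
\]
the sum ranging over the $(q^N-1)/(q-1)$ projective classes $[c]$. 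This is exactly where the factor $\frac{1}{q-1}$ appears, and it remains to prove $\sum_{[c]}\Pr_v[\,\cdot\,]<\frac{1}{q-1}q^{N-n}$; equivalently $\sum_{c\ne 0}\#\{v\ \text{independent}:\sum_{i\ge j}c_{ij}v_iv_j=0\}<q^{N-n}L$, where $L$ is the number of ordered bases.

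For a fixed $c$, the count $\#\{v:\sum c_{ij}v_iv_j=0\}$ is the number of $\field_{q^n}$-points of a quadratic hypersurface, which is heuristically a $q^{-n}$-fraction of all tuples; summed over the $\approx q^N$ forms this gives $\approx q^{N-n}$, matching the target. To obtain the correct constant I would expand the indicator of $\sum c_{ij}v_iv_j=0$ in additive characters of $\field_{q^n}$ and exchange the summations over $c$ and the character variable $t$: the $c$-sum vanishes unless $\trace_{\field_{q^n}/\field_q}(t\,v_iv_j)=0$ for all $i\ge j$. This collapses the left-hand side into $\sum_{t\ne 0}$ of the number of bases spanning totally isotropic $k$-subspaces of the nondegenerate trace form $B_t(x,y)=\trace(txy)$. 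Via the trace-duality $\dim(V^2)\leftrightarrow n-\dim(V^2)$ one checks that the target is \emph{equivalent} to the clean inequality
\[
\sum_{t\ne 0}T_k(B_t)<q^{\,n-N}\,\qbin{n}{k}{q},
\]
where $T_k(B_t)$ denotes the number of totally isotropic $k$-subspaces of $B_t$. I would finish by invoking the known count of totally isotropic $k$-subspaces of an $n$-dimensional nondegenerate orthogonal space over $\field_q$, whose leading term is precisely $q^{-N}\qbin{n}{k}{q}$; summing over the $q^n-1$ nonzero values of $t$ and using $q^n-1<q^n$ would then yield the strict inequality, once the lower-order corrections are controlled.

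The hard part is that this estimate is \emph{tight}: the first moment has the same order of magnitude as the quantity it bounds, so the whole argument rests on the lower-order terms. A uniform per-form bound does not suffice, because quadratic forms that are anisotropic over $\field_q$ but split over $\field_{q^n}$ pick up extra zeros (up to a factor of two); dually, the trace forms $B_t$ of hyperbolic ($O^+$) type carry an excess over the leading term $q^{-N}\qbin{n}{k}{q}$. The crux is therefore to bound this excess using the exact orthogonal-Grassmannian formula. This is especially delicate when $n$ is even, since then all $B_t$ share a single type and one cannot appeal to a cancellation between $O^+$ and $O^-$ contributions; instead the excess must be bounded explicitly against the slack between $q^n-1$ and $q^n$. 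I would finally run the characteristic-two case in parallel, treating each $B_t$ as a quadratic form rather than a symmetric bilinear form.
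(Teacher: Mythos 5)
Your reduction is sound, and it is genuinely different from the paper's proof. The paper samples $\xi_1,\ldots,\xi_k$ i.i.d.\ uniformly from $\field_{q^n}$ (not a uniform ordered basis), takes a union bound over projective classes of quadratic forms in $k$ variables grouped by their rank over $\field_q$, evaluates $\Prob\{Q(\bldxi)=0\}$ exactly via the zero counts of rank-$r$ forms over the extension field, and recovers the tight constant by two devices: ranks $1$ and $2$ are handled jointly as linear dependence over $\field_{q^2}$, and the excess of the even ranks at least $4$ is summed into a tail $\varepsilon(n)\le q^{k-n}$. Your route instead conditions on linear independence, so (writing $N=\binom{k+1}{2}$) the rank-$\le 2$ forms essentially drop out automatically, and your Markov/character-sum/duality chain correctly shows that the theorem follows from the inequality $\sum_{t\ne 0}T_k(B_t)<q^{n-N}\qbin{n}{k}{q}$; I verified that this reformulation is exact. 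Up to this point everything checks out, and it is an elegant alternative decomposition: the paper classifies the quadratic forms in $k$ variables over $\field_q$, while you classify the trace forms $B_t$ in $n$ variables.

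The gap is that you stop exactly where the theorem's content lies, and the roadmap you give for the remaining step is wrong in a way that matters. You correctly observe that the inequality is tight---the total slack is a single unit of $q^{-N}\qbin{n}{k}{q}$ out of roughly $q^n$ such units---so ``invoking the known count \ldots\ once the lower-order corrections are controlled'' is not a proof; controlling those corrections \emph{is} the theorem (in the paper, that work is the special rank-$\le 2$ estimate and the bound $\varepsilon(n)\le q^{k-n}$, and your sketch has no counterpart of either). Worse, your structural claim for even $n$, that ``all $B_t$ share a single type'' so that no $O^+/O^-$ cancellation is available, is false for odd $q$: the Gram matrices satisfy $\det G_t=N_{\field_{q^n}/\field_q}(t)\cdot\det G_1$, the norm map is surjective onto $\field_q^*$, and the type of an even-dimensional form is determined by its determinant modulo squares, so exactly half of the $q^n-1$ forms $B_t$ are hyperbolic and half elliptic. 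This is not a technicality: a single $O^+$-type form carries a relative excess of order $q^{k-n/2}$ of totally isotropic $k$-subspaces above the mean $q^{-N}\qbin{n}{k}{q}$, so summing a one-sided excess over $\Theta(q^n)$ values of $t$ and comparing it against the slack---your stated plan of bounding the excess ``explicitly against the slack between $q^n-1$ and $q^n$''---is off by a factor of roughly $q^{n/2+k}$ and cannot close. The only viable version of your argument must use the exact even split and the near-cancellation of $O^+$ excesses against $O^-$ deficits (for $k=1$ a hyperbolic/elliptic pair nets $-2q^{-1}$), and must verify that the residue stays below the slack for all $k$ with $n\ge N$; it also needs the separate characteristic-$2$ analysis, where $\trace(tv_i^2)=0$ degenerates to the linear condition $\trace(\sqrt{t}\,v_i)=0$ and the relevant geometry becomes symplectic on a hyperplane. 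None of this is carried out, so the proposal as it stands does not prove the theorem.
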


The proof of this theorem is given in
Appendix~\ref{section:ExistenceMaxSpan}.
It follows from the theorem 
that for~$q > 2$ and any~$n \ge \binom{k+1}{2}$
(and also for $q = 2$ and any~$n > \binom{k+1}{2}$),
a max-span Sidon space in $\grsmn{q}{n}{k}$
can be found in polynomial expected running time by 
a probabilistic algorithm
which picks at random a space in $\grsmn{q}{n}{k}$ and checks whether 
the pairwise products of its basis elements are linearly independent
over $\field_q$.

\begin{remark}
\label{remark:algebra}
While this work focuses on Sidon spaces in extension fields,
this notion can be defined more generally in commutative algebras.
Specifically, let $(\field_q^n,\ast)$
be the algebra given by the $n$-dimensional vector space
over $\field_q$ when endowed with a commutative bilinear
operation~$\ast$.
Then a $k$-dimensional subspace $V \subseteq \field_q^n$
is a Sidon space in the algebra if~$a \ast b=c \ast d$
for nonzero~$a,b,c,d\in V$
implies~$\{a\field_q,b\field_q \}=\{c\field_q,d\field_q \}$.
Lemma~\ref{lemma:max-spanSidonSpace}
still holds under this generalization
(in the proof of the lemma,
replace all $\field_q^n$-products by~$\ast$).
In~\cite{Cascudo}, a result akin to
Theorem~\ref{theorem:ExistenceMaxSpan} was obtained,
with~$\ast$ taken as the coordinatewise
product of vectors in~$\field_q^n$.
We mention that when $q \ge \binom{k+1}{2}$,
the~$k$ polynomials in~\eqref{equation:Aifi's},
when constructed with a set~$\Irred$
consisting of~$\binom{k+1}{2}$ degree-$1$ polynomials,
generate a max-span Sidon space
in the $\binom{k+1}{2}$-dimensional algebra formed by
the polynomial ring modulo~$\prod_{p(x) \in \Irred} p(x)$.\qed
\end{remark}

\section{Applications of Sidon spaces}\label{section:applications}

\subsection{Cyclic subspace codes}\label{section:CyclicSubspaceCodes}

In this subsection,
we show that the orbit of a Sidon space is a cyclic subspace code of minimum distance~$2k{-}2$ and
cardinality~$(q^n{-}1)/(q{-}1)$.
This topic is discussed briefly in~\cite[Sec.~5]{Vospers}, and yet full proofs are provided below for completeness.

This connection between Sidon spaces and cyclic subspace codes, in conjunction with some of the constructions from Section~\ref{section:Constructions}, proves Conjecture~\ref{conjecture:cyclicOrbitCodes} for most parameters. Further, we show that in some cases, orbits of distinct Sidon spaces can be joined into one subspace code without compromising the minimum distance. This fact enables the construction of a cyclic subspace code whose cardinality is within a factor
of~$1/2+o_q(1)$ from
the upper bound of Theorem~\ref{theorem:EtzionVardyBound}.

\begin{lemma}\label{lemma:alphaU=U}
	For $V\in\grsmn{q}{n}{k}$, the following two conditions are equivalent.
	\begin{list}{}{\settowidth{\labelwidth}{\textit{(ii)}}}
		\item[(i)] $|\orb(V)|=(q^n{-}1)/(q{-}1)$.
		\item[(ii)] An element~$\alpha\in\field_{q^n}^*$ satisfies~$\alpha V=V$, if and only if~$\alpha\in \field_{q}^*$.
	\end{list}
\end{lemma}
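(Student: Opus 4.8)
**The plan is to prove the equivalence by establishing a bijection between the orbit and the cosets of a subgroup, then relating the stabilizer to the base field.**

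The key observation is that the orbit $\orb(V)$ is the image of the map $\field_{q^n}^* \to \grsmn{q}{n}{k}$ given by $\alpha \mapsto \alpha V$. Two elements $\alpha, \beta \in \field_{q^n}^*$ yield the same shifted subspace, i.e.~$\alpha V = \beta V$, precisely when $(\alpha/\beta) V = V$; hence the fibers of this map are exactly the cosets of the \emph{stabilizer} $\mathrm{Stab}(V) \triangleq \{\alpha \in \field_{q^n}^* \,:\, \alpha V = V\}$. This stabilizer is readily seen to be a multiplicative subgroup of $\field_{q^n}^*$: it is closed under products and inverses since $\alpha V = V$ and $\beta V = V$ give $(\alpha\beta)V = \alpha(\beta V) = \alpha V = V$, and $\alpha V = V$ gives $V = \alpha^{-1}V$. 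It follows that $|\orb(V)| = (q^n{-}1)/|\mathrm{Stab}(V)|$.

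The first step is therefore to verify that $\field_q^* \subseteq \mathrm{Stab}(V)$ always holds: for any $\lambda \in \field_q^*$, scalar multiplication by $\lambda$ permutes the elements of the $\field_q$-subspace $V$, so $\lambda V = V$. This gives $|\mathrm{Stab}(V)| \ge q-1$ unconditionally, and hence $|\orb(V)| \le (q^n{-}1)/(q{-}1)$ always. The second step is the substance of the equivalence. Condition~(i) says $|\mathrm{Stab}(V)| = q-1$, while condition~(ii) says $\mathrm{Stab}(V) = \field_q^*$. Since $\field_q^*$ has exactly $q-1$ elements and is contained in $\mathrm{Stab}(V)$, the two conditions are equivalent: if $\mathrm{Stab}(V) = \field_q^*$ then certainly $|\mathrm{Stab}(V)| = q-1$, and conversely, if $|\mathrm{Stab}(V)| = q-1$, then the containment $\field_q^* \subseteq \mathrm{Stab}(V)$ between sets of equal finite cardinality must be an equality.

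The one point requiring a little care is the passage from $|\orb(V)| = (q^n{-}1)/(q{-}1)$ to $|\mathrm{Stab}(V)| = q-1$; this is immediate from the counting formula $|\orb(V)| = (q^n{-}1)/|\mathrm{Stab}(V)|$, provided that formula has been established via the coset/fiber argument. I expect the main obstacle to be purely expository rather than mathematical: one must be careful to phrase the stabilizer as a subgroup and cite the orbit--stabilizer count cleanly, so that both implications (i)$\Rightarrow$(ii) and (ii)$\Rightarrow$(i) reduce transparently to the cardinality comparison $|\field_q^*| = q-1$ against the containment $\field_q^* \subseteq \mathrm{Stab}(V)$. No deeper property of Sidon spaces is needed here; the lemma is a general fact about orbits under the $\field_{q^n}^*$-action, which is why it serves as a clean preliminary for the subsequent code-distance arguments.
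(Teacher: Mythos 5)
Your proof is correct and takes essentially the same route as the paper's: both are orbit--stabilizer counting arguments, the only cosmetic difference being that the paper phrases the stabilizer as a subgroup $H$ of the quotient group $\field_{q^n}^*/\field_q^*$ (so that condition~(i) becomes ``$H$ is trivial''), whereas you keep the stabilizer inside $\field_{q^n}^*$ and compare it to $\field_q^*$ by the containment-plus-cardinality argument. No gap; nothing further is needed.
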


\begin{proof}
	Since~$\alpha V=(\lambda\alpha)\cdot V$ for any~$\alpha\in\field_{q^n}$ and~$\lambda\in\field_{q}^*$, we can view the elements~$\alpha$
in~(ii)
as if they are elements of the quotient
group~$G\triangleq {\field_{q^n}^*}/{\field_q^*}$.
Let~$H$ be a subgroup of~$G$ which contains all elements~$\beta$ such that~$\beta V=V$.
Then~$|\orb(V)|=|{G}/{H}|$ and, so,
(i)~holds if and only if~$H$ is trivial
(i.e., if and only if~$H$ contains only the unit element).
On the other hand,~$H$ being trivial is equivalent to~(ii).
\end{proof}

\begin{lemma}\label{lemma:SidonGivesCyclic}
	For a subspace $V\in\grsmn{q}{n}{k}$, the set~$\orb(V)$ is of
size~$(q^n{-}1)/(q{-}1)$ and minimum distance~$2k{-}2$,
if and only if $V$ is a Sidon space.
\end{lemma}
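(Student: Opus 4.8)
The plan is to prove both implications by translating the intersection structure of the orbit into multiplicative relations inside~$V$. The central observation is that, since multiplication by a fixed~$\alpha\in\field_{q^n}^*$ is an~$\field_q$-linear bijection preserving dimensions and intersections, one has $\mathsf{d_S}(\alpha V,\beta V)=\mathsf{d_S}(V,\gamma V)$ with $\gamma=\beta/\alpha$; hence it suffices to understand $V\cap\gamma V$ for a single~$\gamma\in\field_{q^n}^*$. An element of $V\cap\gamma V$ is precisely an element of the form $v=\gamma w$ with $v,w\in V$, so a pair of~$\field_q$-independent elements $v_1=\gamma w_1$, $v_2=\gamma w_2$ in $V\cap\gamma V$ (note that $w_1,w_2$ are then also independent) yields the identity $v_1w_2=\gamma w_1w_2=v_2w_1$. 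This is exactly the Sidon equation $v_1w_2=v_2w_1$ with all four factors in~$V$, and this bridge drives both directions.

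For the forward direction, assume~$V$ is a Sidon space. First I would establish the full orbit size via Lemma~\ref{lemma:alphaU=U}: if $\gamma V=V$, pick~$\field_q$-independent $v_1,v_2\in V$ (possible as~$k\ge 3$), apply the Sidon property to $(\gamma v_1)v_2=(\gamma v_2)v_1$, and rule out the matching that equates $v_1\field_q$ with $v_2\field_q$; the surviving matching forces $\gamma\in\field_q^*$. The same computation shows that for $\gamma\notin\field_q^*$ one cannot have two independent elements in $V\cap\gamma V$, since the Sidon dichotomy either contradicts independence or forces $\gamma\in\field_q^*$; hence $\dim(V\cap\gamma V)\le 1$ and the minimum distance is at least~$2k-2$. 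To see that this distance is attained, I would take independent $v_1,v_2\in V$, set $\gamma=v_1/v_2\notin\field_q^*$, and note that $v_1=\gamma v_2\in V\cap\gamma V$, so $\dim(V\cap\gamma V)=1$ exactly; the full orbit size guarantees $\gamma V\ne V$, producing a pair at distance precisely~$2k-2$.

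For the converse, suppose $\orb(V)$ has full size and minimum distance~$2k-2$ but~$V$ is not a Sidon space, so there are nonzero $a,b,c,d\in V$ with $ab=cd$ yet $\{a\field_q,b\field_q\}\ne\{c\field_q,d\field_q\}$. Setting $\gamma=a/c=d/b$ gives $a=\gamma c$ and $d=\gamma b$, so $a,d\in V\cap\gamma V$. I would check that $a,d$ are~$\field_q$-independent (dependence would force both $a\field_q=d\field_q$ and $c\field_q=b\field_q$, restoring the Sidon condition) and that $\gamma\notin\field_q^*$ (otherwise $a\field_q=c\field_q$ and $d\field_q=b\field_q$, again restoring it). Then $\dim(V\cap\gamma V)\ge 2$; since the orbit has full size, Lemma~\ref{lemma:alphaU=U} ensures $\gamma V\ne V$, so~$V$ and~$\gamma V$ are distinct orbit elements at distance at most $2k-4<2k-2$, contradicting the minimum-distance hypothesis.

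The main obstacle I anticipate is bookkeeping rather than conceptual: in each application of the Sidon property one must carefully track the two possible pairings of $\{a\field_q,b\field_q\}$ with $\{c\field_q,d\field_q\}$ and verify that exactly one of them is compatible with the~$\field_q$-independence at hand, while the other collapses~$\gamma$ into~$\field_q^*$. The one genuinely non-local ingredient is the full-orbit hypothesis, which is needed precisely to upgrade ``$\dim(V\cap\gamma V)\ge 2$ with $\gamma\notin\field_q^*$'' into an honest pair of distinct orbit elements violating the distance bound; without it, $\gamma V$ could coincide with~$V$.
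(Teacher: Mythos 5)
Your proof is correct and follows essentially the same route as the paper's: both arguments translate a pair of $\field_q$-independent elements of $V\cap\gamma V$ into the Sidon equation $v_1w_2=v_2w_1$ and invoke Lemma~\ref{lemma:alphaU=U} to handle the orbit size. The only difference is organizational---you prove the Sidon~$\Rightarrow$~orbit direction directly and explicitly verify that the distance $2k-2$ is attained, whereas the paper argues by contrapositive and leaves attainment implicit---which is a minor refinement rather than a different method.
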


\begin{proof}
	Suppose that $\orb(V)$ is of
size~$(q^n{-}1)/(q{-}1)$
and minimum distance~$2k{-}2$, and let~$a,b,c,d\in V$ be
such that~$ab=cd \ne 0$. Write $\alpha\triangleq a/d = c/b$,
and notice that $a,c\in V\cap \alpha V$. If $\alpha\notin\field_q$, it follows from Lemma~\ref{lemma:alphaU=U} and from the minimum distance of $\orb(V)$ that~$\dim(V\cap\alpha V)\le 1$, which implies that $\{a\field_q \}= \{c\field_q \}$. Combining with~$ab=cd$ thus yields $\{a\field_q,b\field_q \}=\{c\field_q,d\field_q \}$. On the other hand, if~$\alpha\in\field_q$ then clearly $\{a\field_q \}=\{d\field_q \}$ and, so, $\{a\field_q,b\field_q \}=\{c\field_q,d\field_q \}$.
	
	Conversely, suppose that
either~$|\orb(V)| < (q^n{-}1)/(q{-}1)$
or the minimum distance of~$\orb(V)$ is less than~$2k{-}2$. Then there exists $\alpha\in\field_{q^n}^*\setminus \field_q^*$ such that $\dim({V\cap \alpha V})\ge 2$, which means that there exist linearly independent elements~$a$ and~$c$ in $V\cap \alpha V$ and respective~$b$ and~$d$ in~$V$ such that $a=\alpha d$ and $c=\alpha b$. This, in turn, implies that $ab=cd$ while~$\{a\field_q \}\ne \{ c\field_q \}$. Yet, since~$\alpha\notin\field_q$, we also
have~$\{a\field_q\}\ne\{d\field_q \}$;
hence, $V$ is not a Sidon space.
\end{proof}

Lemma~\ref{lemma:SidonGivesCyclic} and Remark~\ref{remark:smallerK} yield the following corollary. 

\begin{corollary}\label{corollary:ConjectureProved}
	Construction~\ref{construction:MainBinary} proves Conjecture~\ref{conjecture:cyclicOrbitCodes} for
any~$q$,~$n$,
and any~$k$ up to the largest divisor of~$n$ that is smaller
than~$n/2$.
In addition, Construction~\ref{construction:mainNonBinary} proves Conjecture~\ref{conjecture:cyclicOrbitCodes} for any~$q\ge 3$ and any even~$n\ge 2k$.
\end{corollary}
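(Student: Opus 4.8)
The plan is to read Conjecture~\ref{conjecture:cyclicOrbitCodes} through the lens of Lemma~\ref{lemma:SidonGivesCyclic}, which asserts that for $V\in\grsmn{q}{n}{k}$ the orbit $\orb(V)$ is a cyclic subspace code of cardinality $(q^n{-}1)/(q{-}1)$ and minimum distance $2k{-}2$ precisely when $V$ is a Sidon space. Thus the entire task reduces to exhibiting a Sidon space of the required dimension inside $\field_{q^n}$: once such a $V$ is produced, setting $\code\triangleq\orb(V)$ furnishes the code demanded by the conjecture. The two halves of the corollary then follow from the two constructions, each treated in the same manner.

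For the first claim, I would follow Construction~\ref{construction:MainBinary}, letting $k_0$ denote its parameter, namely the largest divisor of $n$ that is smaller than $n/2$. Theorem~\ref{theorem:MainBinaryProof} guarantees that the space $V$ of that construction is a Sidon space in $\grsmn{q}{n}{k_0}$, so Lemma~\ref{lemma:SidonGivesCyclic} immediately settles the conjecture at $k=k_0$. To reach every smaller $k$, I would invoke Remark~\ref{remark:smallerK}: any $k$-dimensional subspace $V'\subseteq V$ with $k\le k_0$ is itself a Sidon space in $\grsmn{q}{n}{k}$ (the ambient degree $n$ is unchanged, only the dimension shrinks), and a second application of Lemma~\ref{lemma:SidonGivesCyclic} yields the cyclic code $\orb(V')$. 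One should then check that the hypothesis $n\ge 2k$ of the conjecture is automatically met: since $k_0<n/2$ we have $2k\le 2k_0<n$ for every $k\le k_0$, so no parameter falls outside the admissible range.

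For the second claim, I would take $q\ge 3$ and $n$ even, and apply Construction~\ref{construction:mainNonBinary} with $n=2(n/2)$. Theorem~\ref{theorem:mainNonBinary} shows the resulting $V$ is a Sidon space in $\grsmn{q}{n}{n/2}$, and the identical two-step argument---Lemma~\ref{lemma:SidonGivesCyclic} to obtain the code at dimension $n/2$, then Remark~\ref{remark:smallerK} to descend to every $k\le n/2$---covers exactly the stated range ``any even $n\ge 2k$.''

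I do not anticipate a genuine obstacle: the statement is an assembly of results already in hand, and the only point requiring care is the bookkeeping on parameters, namely that descending in dimension via Remark~\ref{remark:smallerK} keeps $n$ fixed and never violates $n\ge 2k$. The substantive content---that the two constructions actually yield Sidon spaces, and that Sidon spaces are equivalent to full-size, distance-$(2k{-}2)$ cyclic orbit codes---has already been established in Theorems~\ref{theorem:MainBinaryProof} and~\ref{theorem:mainNonBinary} and in Lemma~\ref{lemma:SidonGivesCyclic}, so the result is a corollary in the strict sense.
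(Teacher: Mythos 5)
Your proof is correct and matches the paper's own argument: the paper derives this corollary in exactly the same way, by combining Theorems~\ref{theorem:MainBinaryProof} and~\ref{theorem:mainNonBinary} with Remark~\ref{remark:smallerK} (to descend to all smaller dimensions $k$) and Lemma~\ref{lemma:SidonGivesCyclic} (to convert each Sidon space into a full-size orbit code of minimum distance $2k{-}2$). Your additional bookkeeping that $2k \le 2k_0 < n$ keeps all parameters within the conjecture's range is a welcome explicit check that the paper leaves implicit.
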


To attain a cyclic subspace code with multiple orbits, the following lemma is given. This lemma may
be seen
as a variant of Lemma~\ref{lemma:SidonGivesCyclic} for distinct orbits.

\begin{lemma}\label{lemma:distinctOrbits}
The following two conditions are equivalent for any
distinct subspaces~$U$ and~$V$ in~$\grsmn{q}{n}{k}$.
\begin{list}{}{\settowidth{\labelwidth}{\textit{(ii)}}}
\item[(i)]
$\dim(U\cap \alpha V)\le 1$, for any~$\alpha\in\field_{q^n}^*$.
\item[(ii)]
For any nonzero $a,c\in U$ and nonzero~$b,d\in V$,
the equality $ab=cd$ implies
that~$\{a\field_q \}=\{c\field_q \}$ and~$\{b\field_q \}=\{d\field_q \}$.
\end{list}
\end{lemma}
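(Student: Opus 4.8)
**The plan is to prove the equivalence $(i)\Leftrightarrow(ii)$ by establishing each direction via contraposition, mirroring the structure of the proof of Lemma~\ref{lemma:SidonGivesCyclic}.** The key observation that drives both directions is the dictionary between intersection elements and product equalities: if $a,c$ lie in $U\cap\alpha V$ for some $\alpha\in\field_{q^n}^*$, then $a/\alpha$ and $c/\alpha$ lie in $V$, so writing $d\triangleq a/\alpha$ and $b\triangleq c/\alpha$ gives $a=\alpha d$ and $c=\alpha b$ with $b,d\in V$, whence $ab=\alpha db=\alpha bd=cd$. Conversely, any equality $ab=cd$ with $a,c\in U$ and $b,d\in V$ yields $\alpha\triangleq a/d=c/b$, and then $a,c\in U\cap\alpha V$. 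This bijection is the engine of the argument, and I would set it up carefully at the outset since both implications invoke it.

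First I would prove $(i)\Rightarrow(ii)$. Assume $\dim(U\cap\alpha V)\le 1$ for every $\alpha$, and suppose $ab=cd\ne 0$ with $a,c\in U$ and $b,d\in V$ nonzero. Set $\alpha\triangleq a/d=c/b$; then $a,c\in U\cap\alpha V$ as noted above. By hypothesis $\dim(U\cap\alpha V)\le 1$, so $a$ and $c$ are linearly dependent over $\field_q$, i.e.,~$\{a\field_q\}=\{c\field_q\}$. Since $ab=cd$ and $a=\lambda c$ for some $\lambda\in\field_q^*$, we get $\lambda cb=cd$, hence $d=\lambda b$, giving $\{b\field_q\}=\{d\field_q\}$ as required. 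This direction is routine once the dictionary is in place.

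The reverse direction $(ii)\Rightarrow(i)$ I would argue by contraposition: assume $(i)$ fails, so there is some $\alpha\in\field_{q^n}^*$ with $\dim(U\cap\alpha V)\ge 2$. Then I can pick two $\field_q$-linearly independent elements $a,c\in U\cap\alpha V$, and set $d\triangleq a/\alpha$, $b\triangleq c/\alpha$, which lie in $V$ and are nonzero. The dictionary gives $ab=cd$, yet $a$ and $c$ are linearly independent, so $\{a\field_q\}\ne\{c\field_q\}$, contradicting $(ii)$. This shows $(ii)$ forces $(i)$.

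**The main subtlety to handle with care** is the logical form of the two conditions: $(ii)$ is a statement about all four elements $a,c\in U$ and $b,d\in V$ simultaneously, whereas $(i)$ is phrased purely in terms of the intersection $U\cap\alpha V$. The crucial point is that in the contrapositive direction the two chosen intersection elements $a,c$ must be genuinely $\field_q$-linearly independent (which $\dim\ge 2$ guarantees), so that the conclusion $\{a\field_q\}=\{c\field_q\}$ in $(ii)$ is actually violated—a product equality $ab=cd$ by itself does not immediately give a contradiction unless $a,c$ are independent. I would make sure the independence is invoked explicitly at exactly that point, since it is the only place where the hypothesis $\dim\ge 2$ (as opposed to $\ge 1$) is used. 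Apart from this, the argument is a direct translation and poses no further obstacle.
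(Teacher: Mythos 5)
Your proof is correct and follows essentially the same route as the paper's: the direct argument for (i)$\Rightarrow$(ii) via $\alpha = a/d = c/b$ and $a,c \in U \cap \alpha V$, and the contrapositive for (ii)$\Rightarrow$(i) by picking two linearly independent elements of $U \cap \alpha V$ and dividing by $\alpha$. The only difference is presentational—you isolate the ``dictionary'' between intersections and product equalities up front—but the mathematical content is identical to the paper's proof.
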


\begin{proof}
	Assume that $\dim(U\cap \alpha V)\le 1$ for all~$\alpha\in\field_{q^n}^*$, and let~$a,c\in U$ and~$b,d\in V$ such that $ab=cd\ne0$.
Define~$\beta\triangleq a/d = c/b$,
and notice that $a,c\in U\cap \beta V$. Since $\dim(U\cap \beta V)\le 1$, it follows that~$\{a\field_q \}=\{c\field_q \}$ and, so,~$\{b\field_q \}=\{d\field_q \}$.
	
	Conversely, assume
	that~$\dim(U\cap \alpha V)\ge 2$ for some~$\alpha\in\field_{q^n}^*$, and let~$a$ and~$c$ be two linearly independent elements in~$U\cap \alpha V$. Then, there exist~$b$ and~$d$ in~$V$ such that~$a=\alpha d$ and~$c=\alpha b$. On the one hand~$ab=cd$, yet on the other hand~$\{a\field_q \}\ne\{c\field_q \}$.
\end{proof}

Lemma~\ref{lemma:distinctOrbits} yields a kind of multi-orbit counterpart of the Sidon space property: the product of any two nonzero elements from \emph{distinct orbits} may be uniquely factorized, up to
a scalar multiplier from~$\field_q$.
Using this lemma, Construction~\ref{construction:mainNonBinary} can be extended to multiple orbits as follows.

\begin{construction}\label{construction:multipleOrbits}
	For a prime power~$q\ge 3$ and a positive integer~$k$, let~$w$ be a primitive element
in~$\field_{q^k}$.
For~$c_0\triangleq w$, let $b_0\in\field_{q^k}$ be such that
$M_0(x)\triangleq x^2+b_0x+c_0$ is irreducible
over $\field_{q^k}$
(such~$b_0$ exists by Corollary~\ref{corollary:irrExists}).
For~$n=2k$, let~$\gamma_0\in \field_{q^n}$ be a root of~$M_0$.
For~$i\in\{0,1,\ldots,\tau{-}1 \}$,
where~$\tau\triangleq\floor{(q{-}1)/2}$,
let $\gamma_i\triangleq w^{i}\gamma_0$ and let
\[
	V_i\triangleq\{u+u^q\gamma_i \,:\, u\in \field_{q^k} \}.
\]
	Finally,
let $\code\triangleq \{\alpha V_i \,:\, i\in\{0,1,\ldots,\tau{-}1\},
\alpha\in\field_{q^n}^* \}$.
\end{construction}

\begin{lemma}\label{lemma:multiplePrbitsProof}
	The
set~$\code$ from~Construction~\ref{construction:multipleOrbits}
is a cyclic subspace code of cardinality~$\tau\cdot (q^n{-}1)/(q{-}1)$
and minimum distance~$2k{-}2$.
\end{lemma}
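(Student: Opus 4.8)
The plan is to exhibit $\code$ as a disjoint union of $\tau$ full orbits and to control subspace distances both \emph{within} and \emph{across} these orbits. Cyclicity is immediate, since $\code=\bigcup_{i}\orb(V_i)$ is by construction closed under multiplication by $\field_{q^n}^*$. The remaining work splits into three claims: (a) each $V_i$ is a Sidon space, so that by Lemma~\ref{lemma:SidonGivesCyclic} each orbit $\orb(V_i)$ has size $(q^n{-}1)/(q{-}1)$ and internal minimum distance $2k{-}2$; (b) the $\tau$ orbits are pairwise disjoint; and (c) for distinct $i,j$ and every $\alpha\in\field_{q^n}^*$ one has $\dim(V_i\cap\alpha V_j)\le 1$, bounding cross-orbit distances from below by $2k{-}2$. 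Granting (a)--(c), summing orbit sizes gives cardinality $\tau(q^n{-}1)/(q{-}1)$, while the minimum distance is exactly $2k{-}2$ (it is already attained inside a single orbit).

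For (a), I would verify that each $\gamma_i$ meets the hypotheses of Construction~\ref{construction:mainNonBinary}. Since $\gamma_0\notin\field_{q^k}$ and $w^i\in\field_{q^k}^*$, the element $\gamma_i=w^i\gamma_0$ has degree~$2$ over $\field_{q^k}$, so its minimal polynomial $x^2+b_ix+c_i$ is irreducible, with free coefficient equal to the norm $c_i=\gamma_i\gamma_i^{q^k}=w^{2i}\gamma_0\gamma_0^{q^k}=w^{2i}c_0=w^{2i+1}$ (using $w\in\field_{q^k}$ and $c_0=w$). As $w$ is primitive, $w^{2i+1}\in\overline{W}_{q-1}$ iff $(q{-}1)\nmid(2i{+}1)$; and for $0\le i\le\tau{-}1$ we have $1\le 2i{+}1\le 2\tau{-}1\le q{-}2$, so this holds. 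Hence $c_i\in\overline{W}_{q-1}$, and Theorem~\ref{theorem:mainNonBinary} shows that $V_i$ is a Sidon space.

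The heart of the proof is (c), which I would establish via the factorization criterion in Lemma~\ref{lemma:distinctOrbits} (noting first that $V_i\ne V_j$ for $i\ne j$, by comparing coordinates in the basis $\{1,\gamma_0\}$ and using the primitivity of $w$). Take nonzero $a=u+u^q\gamma_i,\ c=u'+u'^q\gamma_i$ in $V_i$ and $b=v+v^q\gamma_j,\ d=v'+v'^q\gamma_j$ in $V_j$ with $ab=cd$. Expanding $ab$ in the $\field_{q^k}$-basis $\{1,\gamma_0\}$ and reducing $\gamma_0^2=-b_0\gamma_0-c_0$, the constant term equals $T(uv)$ for the linearized map $T(x)=x-w^{i+j+1}x^q$; since $0<i{+}j{+}1\le 2\tau{-}1\le q{-}2$, we get $w^{i+j+1}\notin W_{q-1}$, so $T$ is invertible (exactly as in the proof of Theorem~\ref{theorem:mainNonBinary}) and $uv=u'v'$. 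Substituting this into the coefficient of $\gamma_0$, the terms carrying $b_0$ cancel, and writing $r=u/u'$ (so $v=r^{-1}v'$) the relation factors as $(r^{q-1}-1)\bigl(u'^q v' w^i-r^{1-q}u'v'^q w^j\bigr)=0$. Either $r^{q-1}=1$, i.e. $r\in\field_q^*$, giving $u\field_q=u'\field_q$ and $v\field_q=v'\field_q$ as required; or the second factor vanishes, forcing $w^{i-j}=(r^{-1}v'/u')^{q-1}\in W_{q-1}$, hence $(q{-}1)\mid(i{-}j)$. Since $|i-j|\le\tau{-}1<q{-}1$, the latter is impossible for $i\ne j$, so condition~(ii) of Lemma~\ref{lemma:distinctOrbits} holds and (c) follows.

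Finally, (b) is a consequence of (c): if $\orb(V_i)=\orb(V_j)$ then $V_i=\alpha V_j$ for some $\alpha$, whence $\dim(V_i\cap\alpha V_j)=k\ge 2$, contradicting (c). Assembling the three claims yields the asserted cardinality and minimum distance. I expect the main obstacle to be the arithmetic bookkeeping in step~(c): the precise choice $\tau=\floor{(q{-}1)/2}$ is exactly what keeps both $i{+}j{+}1$ and $|i{-}j|$ strictly below $q{-}1$, which is what simultaneously guarantees invertibility of $T$ and rules out the spurious second factorization, thereby forcing the desired $\field_q$-proportionality.
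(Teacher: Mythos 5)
Your proposal is correct and follows essentially the same route as the paper's proof: cyclicity is immediate from the definition, each $V_i$ is recognized as an instance of Construction~\ref{construction:mainNonBinary} because its minimal-polynomial free coefficient is $c_i=w^{2i+1}\in\overline{W}_{q-1}$ (with $2i{+}1<q{-}1$), and the cross-orbit bound $\dim(V_i\cap\alpha V_j)\le 1$ is obtained through Lemma~\ref{lemma:distinctOrbits} using exactly the same two arithmetic facts, namely the invertibility of $x\mapsto x-w^{i+j+1}x^q$ and $(q{-}1)\nmid(i{-}j)$. The only divergence is in the mechanics of the last step---the paper recovers $u^{q-1}w^{i-j}$ and $v^{q-1}$ as roots of an explicit quadratic and tells them apart by the $(q{-}1)$st-power criterion, whereas you equate the two factorizations and factor out $(r^{q-1}-1)$---and your explicit verifications that $V_i\ne V_j$ and that the orbits are pairwise disjoint (needed for the cardinality count) are details the paper leaves implicit.
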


\begin{proof}
	The fact
that~$\code$
is cyclic follows immediately from its definition. To prove that the minimum distance is~$2k{-}2$, we first show that each~$V_i$ is a Sidon space. To this end,
for~$i\in\{0,1,\ldots,\tau{-}1\}$, let $M_i(x)\triangleq x^2+b_ix+c_i$,
where~$b_i\triangleq w^ib_0$ and~$c_i\triangleq w^{2i}c_0=w^{2i+1}$, and notice
that~$M_i(\gamma_i)=0$.
Moreover, since~$2i{+}1$ is not divisible by~$q{-}1$, it follows that~$c_i\in \overline{W}_{q-1}$. Hence,~$V_i$ is an instance of Construction~\ref{construction:mainNonBinary} and is therefore a Sidon space.	
	 It is left to show that subspaces of \emph{distinct} orbits intersect on a space of dimension at most~$1$; namely, we prove that for any
distinct~$i, j \in \{0,1,\ldots,\tau{-}1\}$
and any~$\alpha\in\field_{q^n}^*$ we have that~$\dim(V_i\cap \alpha V_j)\le 1$. According to Lemma~\ref{lemma:distinctOrbits}, this amounts to showing that any product of a nonzero element of~$V_i$
with a nonzero element of~$V_j$ can be factored uniquely
up to a scalar multiplier from $\field_q$.
For any nonzero~$u$ and~$v$ in~$\field_{q^k}$, we have
	\begin{align*}
		(u+u^q\gamma_i)(v+v^q\gamma_j) &= (u+u^qw^i\gamma_0)(v+v^qw^j\gamma_0)\\
		& = uv+(uv^qw^j+vu^qw^i)\gamma_0+(uv)^qw^{i+j}\gamma_0^2\\
		& = \left(uv-c_0(uv)^qw^{i+j}\right)+\left(uv^qw^j+vu^qw^i-b_0(uv)^qw^{i+j}\right)\gamma_0\\
		& = {\underbrace{\left(uv-(uv)^qw^{i+j+1}\right)}_{\triangleq \, Q_0}}+{\underbrace{\left(uv w^j (v^{q-1}+u^{q-1}w^{i-j})-b_0(uv)^qw^{i+j}\right)}_{\triangleq \, Q_1}}\gamma_0.
	\end{align*}
	Since $0<i{+}j{+}1<q{-}1$, it follows that $w^{i+j+1}\in\overline{W}_{q-1}$. Hence, the mapping~$x\mapsto x-w^{i+j+1}x^q$ is invertible on~$\field_{q^k}$, and~$uv$ may be uniquely
extracted from~$Q_0$. Given~$uv$ and~$Q_1$,
in turn, one can extract the values
of~$P_0\triangleq (u^{q-1}w^{i-j})\cdot v^{q-1}$
and~$P_1\triangleq u^{q-1}w^{i-j}+v^{q-1}$,
and compute the roots of the polynomial~$x^2+P_1 x + P_0$, which
are~$u^{q-1}w^{i-j}$ and~$v^{q-1}$. Since~$0<|i{-}j|<q{-}1$, it follows 
that~$v^{q-1}$ is a~$(q{-}1)$st power in $\field_{q^k}$,
while~$u^{q-1}w^{i-j}$ is not. Hence, by identifying the root which is not a~$(q{-}1)$st power\footnote{Deciding whether a given element~$\mu$ is a~$(q{-}1)$st power can be done by checking if the~$k\times k$ matrix representation (over~$\field_q$) of
the linear mapping~$x\mapsto x^q-\mu x$ is invertible.} and dividing it by~$w^{i-j}$, we find~$u^{q-1}$ and~$v^{q-1}$ and, consequently,~$\{u\field_q \}$ and~$\{v\field_q \}$.
\end{proof}

For odd (respectively, even)~$q$,
the set~$\code$ from
Construction~\ref{construction:multipleOrbits} is a cyclic subspace code
that has cardinality~$(q^n{-}1)/2$
(respectively, $((q{-}2)/(2q{-}2)) \cdot (q^n{-}1)$),
which is within a factor of~$1/2+o_n(1)$
(respectively, $(q{-}2)/(2q{-}2)+o_n(1)$) from the sphere-packing
bound (Theorem~\ref{theorem:EtzionVardyBound}). To the best of our knowledge, this constitutes the first example of a nontrivial cyclic subspace code
of that size.

\subsection{Sidon sets}\label{section:SidonSets}
By taking discrete logarithms from properly chosen elements in a Sidon space~$V\in\grsmn{q}{n}{k}$, a Sidon set in~$\integers_{(q^n-1)/(q-1)}$ is obtained.

\begin{theorem}\label{theorem:SetFromSpace}
	If $V\in\grsmn{q}{n}{k}$ is a Sidon space, $\gamma$ is a primitive element in~$\field_{q^n}$, and
$\{\gamma^{n_i} \,:\, i\in [(q^k{-}1)/(q{-}1)] \}$
is a set of nonzero representatives of all
one-dimensional subspaces of~$V$,
then $\SidonSet \triangleq \{n_i \,:\, i\in [(q^k{-}1)/(q{-}1)] \}$
is a Sidon set in~$\integers_{(q^n-1)/(q-1)}$.
\end{theorem}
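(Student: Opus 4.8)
The plan is to translate the Sidon-\emph{space} property of $V$ into the Sidon-\emph{set} property of $\SidonSet$ via the discrete logarithm, exploiting that the exponents $n_i$ are well-defined modulo $(q^n{-}1)/(q{-}1)$ precisely because each $\gamma^{n_i}$ represents a one-dimensional $\field_q$-subspace. The key observation is that $\gamma$ being primitive makes $x \mapsto \gamma^x$ a bijection from $\integers_{q^n-1}$ to $\field_{q^n}^*$, and that scaling by $\field_q^*$ corresponds to adding multiples of $(q^n{-}1)/(q{-}1)$ to the exponent (since $\field_q^* = \Span{\gamma^{(q^n-1)/(q-1)}}$ as a multiplicative group). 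Hence working modulo $(q^n{-}1)/(q{-}1)$ is exactly the arithmetic counterpart of working with the quotient $\field_{q^n}^*/\field_q^*$, i.e.\ with one-dimensional subspaces.

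First I would set $N \triangleq (q^n{-}1)/(q{-}1)$ and suppose, toward the Sidon-set definition (Definition~\ref{definition:SidonSet}), that $n_i + n_j \equiv n_s + n_t \pmod{N}$ for some indices $i,j,s,t$ in the range. Exponentiating $\gamma$ and using the congruence, I would deduce $\gamma^{n_i}\gamma^{n_j} = \lambda\, \gamma^{n_s}\gamma^{n_t}$ for some $\lambda \in \field_q^*$; the scalar $\lambda$ appears because equality of exponents mod $N$ means the two products agree up to a factor in $\Span{\gamma^N}=\field_q^*$. Writing $a \triangleq \gamma^{n_i}$, $b \triangleq \gamma^{n_j}$, $c \triangleq \gamma^{n_s}$, and $d \triangleq \lambda\gamma^{n_t}$, these are four nonzero elements of $V$ satisfying $ab = cd$. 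Since $V$ is a Sidon space, Definition~\ref{definition:SidonSpace} gives $\{a\field_q, b\field_q\} = \{c\field_q, d\field_q\}$, i.e.\ $\{\gamma^{n_i}\field_q, \gamma^{n_j}\field_q\} = \{\gamma^{n_s}\field_q, \gamma^{n_t}\field_q\}$ as one-dimensional subspaces.

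It then remains to pull this equality of one-dimensional subspaces back to an equality of exponents modulo $N$. Because $\{\gamma^{n_\ell}\}$ is a set of representatives of \emph{distinct} one-dimensional subspaces of $V$, the map $\ell \mapsto \gamma^{n_\ell}\field_q$ is injective, so the unordered pair $\{\gamma^{n_i}\field_q, \gamma^{n_j}\field_q\}$ determines the unordered pair of indices $\{i,j\}$, and likewise for $\{s,t\}$; hence $\{i,j\} = \{s,t\}$, which forces $\{n_i, n_j\} = \{n_s, n_t\}$ as residues mod $N$, the unordered equality required of a Sidon set. I expect the main obstacle to be the bookkeeping around the stray scalar $\lambda$: one must check that absorbing $\lambda$ into $d$ keeps $d$ inside $V$ (it does, since $V$ is $\field_q$-linear) and that the Sidon-space conclusion about cosets $a\field_q$ translates back \emph{exactly} to exponents mod $N$ and not merely mod $q^n{-}1$. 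Care is also needed at the boundary case $i=j$ (or $s=t$), where the pairwise sum involves a repeated index; but Definition~\ref{definition:SidonSet} explicitly allows $i \ge j$ and counts such sums, and the argument above handles repeated indices uniformly, so no separate case is truly required.
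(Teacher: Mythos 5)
Your proof is correct and follows essentially the same route as the paper's: pass from the congruence of sums modulo $(q^n{-}1)/(q{-}1)$ to a field identity with a stray scalar $\lambda\in\field_q^*$ (using that $\gamma^{(q^n-1)/(q-1)}$ generates $\field_q^*$), absorb $\lambda$ into one factor, invoke the Sidon-space property, and use the distinctness of the one-dimensional-subspace representatives to recover equality of the index pairs. The only cosmetic difference is that the paper folds $\lambda$ into the left-hand factor rather than into $d$, which changes nothing.
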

\begin{proof}
	Assume that
$a,b,c,d\in \SidonSet$ satisfy
$a+b\equiv c+d \pmod {(q^n{-}1)/(q{-}1)}$,
i.e., $a+b=c+d+t\cdot(q^n{-}1)/(q{-}1)$ for some integer~$t$.
Then,
\[
	\gamma^a\gamma^b = \gamma^c\gamma^d\cdot\lambda,
\]
	where~$\lambda=\gamma^{t(q^n-1)/(q-1)}\in\field_q$. Since~$V$ is a Sidon space, it follows that $\{\lambda\gamma^a\field_{q},\gamma^b\field_{q}\}=\{\gamma^c\field_{q},\gamma^d\field_{q}\}$, i.e., $\{\gamma^a\field_{q},\gamma^b\field_{q}\}=\{\gamma^c\field_{q},\gamma^d\field_{q}\}$. Assume without loss of generality that $\gamma^a\field_{q}=\gamma^c\field_{q}$.
This means that~$\gamma^a$
and~$\gamma^c$ are representatives of the same
one-dimensional subspace of~$V$, so
we must have~$a=c$, thereby concluding the proof.
\end{proof}

The Sidon set which results from applying Theorem~\ref{theorem:SetFromSpace} to Construction~\ref{construction:mainNonBinary} is of particular interest.
Since~$k=n/2$ in this construction, the resulting Sidon set,
$\SidonSet = \SidonSet_q(n) \; (\subseteq \integers_{(q^n-1)/(q-1)})$,
is of size~$(q^{n/2}{-}1)/(q{-}1)$.
For any fixed~$q$, this size is
within a constant factor from the square root of the size of
the domain~$\integers_{(q^n-1)/(q-1)}$; specifically,
\[
\lim_{n \rightarrow \infty}
\frac{|\SidonSet_q(n)|}{|\integers_{(q^n-1)/(q-1)}|^{1/2}}
= \frac{1}{\sqrt{q-1}}
.
\]
As such, the construction $\SidonSet_q(n)$ is optimal,
up to a constant factor (see~\cite[Thm.~5]{SidonSurvey}).

\section{$r$-Sidon spaces and $B_r$-sets}\label{section:rSidon}	

A natural generalization of a Sidon space is an~$r$-Sidon space, in which every product of~$r$ elements can be factored uniquely up to
a scalar multiplier from the base field.

\begin{definition}\label{defintion:rSidonSpace}
	For positive integers~$k<n$ and~$r\ge 2$, a subspace~$V\in\grsmn{q}{n}{k}$ is called an~$r$-Sidon space if for all
nonzero~$a_1,a_2,\ldots,a_r,b_1,b_2,\ldots,b_r\in V$,
if $\prod_{i\in[r]}a_i=\prod_{i\in[r]}b_i$ then the multi-sets~$\{a_i\field_q \}_{i\in[r]}$ and~$\{b_i\field_q \}_{i\in[r]}$ are equal.
\end{definition}

The following lemma provides a sphere-packing
upper bound on the largest possible dimension of an~$r$-Sidon space.

\begin{lemma}\label{lemma:optimalK-rSidon}
	If~$V\in\grsmn{q}{n}{k}$ is an~$r$-Sidon space, then
\[
k < \frac{n}{r} +1+\log_q r .
\]
\end{lemma}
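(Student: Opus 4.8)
The plan is to prove this via a sphere-packing (counting) argument: the $r$-Sidon property forces distinct multi-sets of one-dimensional subspaces of~$V$ to yield distinct product subspaces of~$\field_{q^n}$, and there are only $(q^n{-}1)/(q{-}1)$ of the latter. First I would record that~$V$ has exactly $N\triangleq(q^k{-}1)/(q{-}1)$ one-dimensional subspaces over~$\field_q$. To each multi-set $M=\{S_1,\ldots,S_r\}$ of such subspaces (with repetition allowed) I associate the one-dimensional subspace $P(M)\triangleq \big(\prod_{i\in[r]}a_i\big)\field_q$ of~$\field_{q^n}$, where~$a_i$ is any nonzero representative of~$S_i$. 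This is well defined, since replacing each~$a_i$ by~$\lambda_i a_i$ with $\lambda_i\in\field_q^*$ multiplies the product by $\prod_i\lambda_i\in\field_q^*$.

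The key step is to show that $M\mapsto P(M)$ is injective. Given $P(M)=P(M')$, the two products agree up to a scalar in~$\field_q^*$, which I can absorb into a single representative of~$M'$ so that the two products become literally equal; Definition~\ref{defintion:rSidonSpace} then forces the multi-sets $\{a_i\field_q\}_{i\in[r]}$ and $\{b_i\field_q\}_{i\in[r]}$ to coincide, i.e.~$M=M'$. Injectivity yields the central inequality
\[
\binom{N+r-1}{r}\le \frac{q^n-1}{q-1},
\]
since the left-hand side counts multi-sets of size~$r$ drawn from the~$N$ subspaces of~$V$, while the right-hand side counts all one-dimensional subspaces of~$\field_{q^n}$.

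It remains to extract the stated bound from this inequality by elementary estimates. I would bound the left side from below by $\binom{N+r-1}{r}\ge N^r/r!$ and use $N\ge q^{k-1}$ (which holds because $(q^k{-}1)/(q{-}1)\ge q^{k-1}$), and bound the right side by $(q^n{-}1)/(q{-}1)<q^n/(q{-}1)$. Combining these gives $q^{r(k-1)}<r!\,q^n/(q{-}1)$. Taking $\log_q$ and using $r!<r^r$ (so that $\log_q(r!)<r\log_q r$) together with $\log_q(q{-}1)\ge 0$ yields $r(k-1)<n+r\log_q r$; dividing by~$r$ gives $k<n/r+1+\log_q r$, as claimed.

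I expect the only genuinely delicate point to be the injectivity argument, specifically the bookkeeping of the ``up to scalar'' equivalence: one must verify that~$P$ is both well defined on multi-sets and injective, so that the $r$-Sidon hypothesis---which is phrased for exact equality of products---can be applied after absorbing a single scalar into one representative. Everything downstream of the inequality $\binom{N+r-1}{r}\le(q^n{-}1)/(q{-}1)$ is routine estimation.
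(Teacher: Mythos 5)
Your proposal is correct and follows essentially the same sphere-packing argument as the paper: both hinge on the injectivity of multi-sets of one-dimensional subspaces of~$V$ into one-dimensional subspaces of~$\field_{q^n}$, giving the central inequality $\binom{K+r-1}{r}\le (q^n{-}1)/(q{-}1)$ with $K=(q^k{-}1)/(q{-}1)$. The only differences are cosmetic---you extract the bound via $\binom{K+r-1}{r}\ge K^r/r!$ and $r!<r^r$, while the paper uses $\binom{s}{t}\ge (s/t)^t$ directly, and you spell out the ``absorb the scalar'' step that the paper leaves implicit.
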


\begin{proof}
	Since the elements in the multi-sets at hand can be seen as
one-dimensional subspaces
of~$V$, it follows that the number of distinct multi-sets is
$\binom{K+r-1}{r}$,
where~$K\triangleq (q^k{-}1)/(q{-}1)$.
By the definition of an~$r$-Sidon space, each such multi-set $\{a_i\field_q\}_{i\in[r]}$ induces a distinct
product~$\prod_{i\in[r]}a_i$ (modulo~$\field_q^*$) and, thus,
\[
	\binom{K+r-1}{r} \le \frac{q^n-1}{q-1}.
\]
	By using the simple bound~$\binom{s}{t}\ge\left(\frac{s}{t}\right)^s$ it follows that
\[
	\left(\frac{K+r-1}{r} \right)^r < q^n
\]
and, so,
\[
	(\log_q K) - (\log_q r)<\log_q(K+r-1)- (\log_q r) < \frac{n}{r}.
\]
	Observing
that $\log_q K = \log_q((q^k{-}1)/(q{-}1)) > k-1$,
the result follows.
\end{proof}

By generalizing
Constructions~\ref{construction:MainBinary}
and~\ref{construction:mainNonBinary},
in what follows two~$r$-Sidon spaces are provided. The first
attains~$k= n/(r{+}1)$
for any~$q$, whereas the second
attains~$k=n/r$ for~$q\ge 3$. 

\begin{construction}\label{construction:rp1Sidon}
	For any field
size~$q$
and integers~$k>0$ and~$r\ge 2$, let~$n=k(r{+}1)$, let~$\gamma\in\field_{q^n}$ be a root of an irreducible polynomial of degree~$r{+}1$ over~$\field_{q^k}$, and
let~$V\triangleq\{u+u^q\gamma \,:\, u\in\field_{q^k} \}$.
\end{construction}

\begin{lemma}\label{lemma:rp1SidonProof}
	The subspace~$V$ from Construction~\ref{construction:rp1Sidon} is an~$r$-Sidon space.
\end{lemma}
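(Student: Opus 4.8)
The plan is to lift the argument of Theorem~\ref{theorem:MainBinaryProof} from pairs to $r$-tuples, exploiting that $\gamma$ now has degree $r{+}1$ over $\field_{q^k}$, so that $\{1,\gamma,\gamma^2,\ldots,\gamma^r\}$ is linearly independent over $\field_{q^k}$. First I would record that every nonzero element of $V$ has a unique representation $a_i=u_i+u_i^q\gamma$ with $u_i\in\field_{q^k}^*$ (here $a_i=0$ forces $u_i=0$, since $\gamma\notin\field_{q^k}$), and that for $\lambda\in\field_q$ the shift $\lambda a_i$ corresponds to the parameter $\lambda u_i$. Hence $a_i\field_q$ is determined by $u_i\field_q$, and recovering the multi-set $\{u_i\field_q\}_{i\in[r]}$ from the product is equivalent to recovering $\{a_i\field_q\}_{i\in[r]}$.

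Next I would expand the product as a polynomial in $\gamma$. Putting $P(x)\triangleq\prod_{i\in[r]}(u_i+u_i^q x)$, a degree-$r$ polynomial over $\field_{q^k}$ with nonzero leading coefficient $(\prod_{i\in[r]}u_i)^q$, we have $\prod_{i\in[r]}a_i=P(\gamma)=\sum_{j=0}^{r}c_j\gamma^j$, where $c_j$ is the $x^j$-coefficient of $P$. Because $\{1,\gamma,\ldots,\gamma^r\}$ is linearly independent over $\field_{q^k}$, the coefficients $c_0,\ldots,c_r$---and therefore the polynomial $P$ itself---are uniquely determined by the field element $\prod_{i\in[r]}a_i\in\field_{q^n}$.

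The roots of $P(x)$ are exactly the values $-1/u_i^{q-1}$ for $i\in[r]$, so $P$ determines the multi-set $\{u_i^{q-1}\}_{i\in[r]}$. Exactly as in Theorem~\ref{theorem:MainBinaryProof}, each value $u_i^{q-1}$ singles out the one-dimensional subspace $u_i\field_q$ as the solution set of $x^q-u_i^{q-1}x=0$ in $\field_{q^k}$; thus $\{u_i^{q-1}\}_{i\in[r]}$ yields $\{u_i\field_q\}_{i\in[r]}$. Consequently, if $a_1,\ldots,a_r,b_1,\ldots,b_r$ are nonzero with $\prod_{i\in[r]}a_i=\prod_{i\in[r]}b_i$, then writing $b_i=v_i+v_i^q\gamma$, the polynomials $\prod_{i\in[r]}(u_i+u_i^q x)$ and $\prod_{i\in[r]}(v_i+v_i^q x)$ coincide; their root multi-sets therefore agree, giving $\{u_i^{q-1}\}_{i\in[r]}=\{v_i^{q-1}\}_{i\in[r]}$ and hence $\{u_i\field_q\}_{i\in[r]}=\{v_i\field_q\}_{i\in[r]}$, i.e.\ $\{a_i\field_q\}_{i\in[r]}=\{b_i\field_q\}_{i\in[r]}$, which is the defining $r$-Sidon property.

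The step I expect to require the most care---though it is a minor obstacle rather than a conceptual one---is the passage to multi-sets: one must read off the roots of $P$ with multiplicity (so that equality holds as multi-sets, not merely as sets) and confirm that $u^{q-1}\mapsto u\field_q$ is a genuine bijection between $(q{-}1)$st powers in $\field_{q^k}$ and one-dimensional subspaces of $V$, which holds because $\field_q^*$ is the unique subgroup of $\field_{q^k}^*$ of order $q{-}1$.
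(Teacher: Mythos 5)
Your proposal is correct and follows essentially the same route as the paper's proof: expand the product as $P(\gamma)$ with $P(x)=\prod_{i\in[r]}(u_i+u_i^q x)$, recover the coefficients of $P$ from the linear independence of $\{1,\gamma,\ldots,\gamma^r\}$ over $\field_{q^k}$, and read off the multi-set $\{u_i\field_q\}_{i\in[r]}$ from the root multi-set $\{-1/u_i^{q-1}\}_{i\in[r]}$. The extra details you supply (uniqueness of the representation $a_i=u_i+u_i^q\gamma$ and the bijection $u^{q-1}\mapsto u\field_q$) are correct refinements of steps the paper leaves implicit.
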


\begin{proof}
	We show that given a product~$\prod_{i\in[r]}a_i$ of~$r$ nonzero
elements~$a_1,a_2,\ldots,a_r$
in~$V$, the multi-set~$\{a_i\field_q\}_{i\in[r]}$ can be identified uniquely. To this end,
let~$a_i=u_i+u_i^q\gamma$ for~$i\in[r]$,
and write
\[
	\prod_{i\in[r]}a_i=\prod_{i\in[r]}(u_i+u_i^q\gamma)=P(\gamma),
\]
	where~$P(x)\triangleq\prod_{i\in[r]}(u_i+u_i^qx)$ is a polynomial of degree~$r$ over~$\field_{q^k}$. Since~$\{\gamma^i\}_{i=0}^{r}$ is a linearly independent set over~$\field_{q^k}$, from
the product~$P(\gamma) = \prod_{i\in[r]}(u_i+u_i^q\gamma)$ one can
obtain the coefficients of~$P(x)$. The multi-set of roots of~$P(x)$ is given
by~$\{-1/u_i^{q-1}\}_{i\in[r]}$, from
which one can determine uniquely
the multi-set~$\{u_i\field_q\}_{i\in[r]}$
and, therefore,~$\{a_i\field_q\}_{i\in[r]}$.
\end{proof}

The next construction, which may be seen as a generalization of Construction~\ref{construction:mainNonBinary}, provides an~$r$-Sidon space
with~$k=n/r$
for any~$q\ge 3$. This construction requires an element~$\gamma$ which is a root
of an irreducible polynomial of degree~$r$
over~$\field_{q^k}$ whose free coefficient
is in the set $\overline{W}_{q-1}$, i.e., it is not a~$(q{-}1)$st power of any element in~$\field_{q^k}$. Such an element~$\gamma$ can be obtained as follows. Let~$\beta$ be a primitive element in~$\field_{q^n}$, let~$i$ be an integer in~$[q^n{-}1]$ that is divisible
neither by~$(q^n{-}1)/(q^{kt}{-}1)$
for any proper divisor~$t$ of~$r=n/k$,
nor by~$q{-}1$, and let~$\gamma=-\beta^i$.

\begin{lemma}\label{lemma:Pgamma}
	The minimal polynomial~$M(x)$ of~$\gamma$ over~$\field_{q^k}$
is of degree~$r$ and satisfies $M(0)\in\overline{W}_{q-1}$.
\end{lemma}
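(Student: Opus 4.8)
The plan is to prove the two claims about $M(x)$ separately: first that $\deg M = r$, and second that $M(0) \in \overline{W}_{q-1}$. Recall that $\gamma = -\beta^i$ where $\beta$ is primitive in $\field_{q^n}$, and that $i$ is chosen to be divisible neither by $(q^n-1)/(q^{kt}-1)$ for any proper divisor $t$ of $r=n/k$, nor by $q-1$.

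First I would establish the degree. The degree of $M(x)$, the minimal polynomial of $\gamma$ over $\field_{q^k}$, equals the degree $[\field_{q^k}(\gamma):\field_{q^k}]$, which is the smallest positive integer $d$ such that $\gamma \in \field_{q^{kd}}$. Since $d$ must divide $r$, it suffices to rule out $\gamma$ lying in any proper subfield $\field_{q^{kt}}$ where $t$ is a proper divisor of $r$. Now $\field_{q^{kt}}^* $ is the cyclic subgroup of $\field_{q^n}^*$ of order $q^{kt}-1$, consisting exactly of the powers $\beta^j$ with $(q^n-1)/(q^{kt}-1) \mid j$. The element $\gamma = -\beta^i$ equals $\beta^{i+s}$ where $\beta^s = -1$ (so $s = (q^n-1)/2$ for odd $q$, and $s=0$ for even $q$, since then $-1=1$). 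The key step is to show that the hypothesis on $i$ forces $\gamma \notin \field_{q^{kt}}^*$ for every proper divisor $t$; the cleanest route is to observe that $-1 \in \field_{q^k} \subseteq \field_{q^{kt}}$, so $\gamma \in \field_{q^{kt}}$ iff $\beta^i \in \field_{q^{kt}}$, which happens iff $(q^n-1)/(q^{kt}-1) \mid i$ — precisely what the hypothesis excludes. Hence $d=r$.

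Next I would handle $M(0)$. The constant term $M(0)$ is, up to sign, the product of the roots of $M$, which are the Galois conjugates $\gamma, \gamma^{q^k}, \gamma^{q^{2k}}, \ldots, \gamma^{q^{(r-1)k}}$ of $\gamma$ over $\field_{q^k}$. Thus $M(0) = (-1)^r \prod_{j=0}^{r-1}\gamma^{q^{jk}} = (-1)^r \gamma^{(q^{kr}-1)/(q^k-1)} = (-1)^r \gamma^{N}$ where $N \triangleq (q^n-1)/(q^k-1)$ (the norm exponent, using $kr=n$). I would then substitute $\gamma=-\beta^i$ to get $M(0)=(-1)^r(-1)^N\beta^{iN}=\pm\,\beta^{iN}$. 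To show $M(0)\in\overline{W}_{q-1}$, I want to show it is \emph{not} a $(q-1)$st power in $\field_{q^k}$. An element $\mu=\beta^\ell$ of $\field_{q^k}^*$ is a $(q-1)$st power in $\field_{q^k}$ iff $\ell$ is divisible by $(q-1)\cdot N = q^n-1)/\big((q^k-1)/(q-1)\big)$... here I would argue more carefully via the cyclic group $\field_{q^k}^*$ of order $q^k-1$: writing $M(0)=\beta^{m}$ with $m$ determined by the sign and $iN$, membership of $M(0)$ in $\field_{q^k}^*$ is already guaranteed (it is the field norm), and being a $(q-1)$st power there amounts to the corresponding exponent in $\field_{q^k}^*$ being a multiple of $q-1$.

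The main obstacle — and the step I would spend the most care on — is tracking the sign $(-1)^N$ and reducing the divisibility-by-$(q-1)$ condition correctly modulo $q^k-1$. Concretely, I would express $M(0)$ as a power of a primitive element $w$ of $\field_{q^k}^*$ (via $\beta^N$, which generates $\field_{q^k}^*$ since $N=(q^n-1)/(q^k-1)$ is the index), show its exponent is $\equiv i \pmod{q-1}$ up to the sign contribution, and then invoke the hypothesis that $q-1 \nmid i$ to conclude the exponent is not a multiple of $q-1$, whence $M(0)$ is a non-$(q-1)$st-power, i.e.\ $M(0)\in\overline{W}_{q-1}$. The delicate bookkeeping is ensuring the sign $(-1)^r(-1)^N$ contributes a factor that is itself a $(q-1)$st power (so it does not disturb the residue mod $q-1$); for odd $q$ this follows since $-1=w^{(q^k-1)/2}$ and $(q-1)\mid (q^k-1)/2$ fails only in controllable cases, and for even $q$ the sign is trivial. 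I would verify this residue computation explicitly as the crux of the argument.
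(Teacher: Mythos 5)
Your overall route coincides with the paper's: prove $\deg M = r$ by observing that $-1\in\field_{q^k}$, so $\gamma$ and $\beta^i$ lie in exactly the same subfields $\field_{q^{kt}}$, and then compute $M(0)$ as the product of the conjugates $\gamma^{q^{jk}}$, express it as a power of the primitive element $w=\beta^{(q^n-1)/(q^k-1)}$ of $\field_{q^k}$, and finish with $q-1\nmid i$. Your degree argument is correct, and so is the criterion that $w^{\ell}\in W_{q-1}$ if and only if $(q-1)\mid \ell$.

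The gap is the sign, which you explicitly leave open and for which your proposed repair would fail. You reduce to $M(0)=(-1)^{r}(-1)^{N}\beta^{iN}$ with $N=(q^n-1)/(q^k-1)$, and plan to argue that the sign $(-1)^{r+N}$ is harmless because it is itself a $(q-1)$st power; but $-1$ need not lie in $W_{q-1}$ (for odd $q$ and odd $k$, $-1$ is not even a square in $\field_{q^k}$ --- take $q=3$, $k=1$, where $-1=2\notin W_2$), so if the sign ever were $-1$ this strategy would collapse. What actually saves the argument is that the sign is always $+1$: for even $q$ there is nothing to prove, and for odd $q$ each $q^{jk}$ is odd, hence $N=\sum_{j=0}^{r-1}q^{jk}\equiv r\pmod 2$ and $(-1)^{r+N}=1$. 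The paper sidesteps this bookkeeping entirely by absorbing the minus signs into the conjugates: $M(0)=\prod_{j=0}^{r-1}\bigl(-\gamma^{q^{jk}}\bigr)=\prod_{j=0}^{r-1}(-\gamma)^{q^{jk}}=(-\gamma)^{N}=\beta^{iN}=w^{i}$, using $(-1)^{q^{jk}}=-1$ in the field's characteristic. With either of these one-line observations your proof closes; without one of them, the crux you flagged remains unproven.
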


\begin{proof}
	By the choice of~$i$, the element~$\beta^i$
belongs neither to $\field_{q^k}$ not to any field extension of it
that is a proper subfield of~$\field_{q^n}$. Hence,~$\deg M(x)=r$ and
\[
	M(0)
	= \prod_{j=0}^{r-1}(-\gamma)^{q^{jk}}
	= (-\gamma)^{(q^n-1)/(q^k-1)}=\beta^{i(q^n-1)/(q^k-1)}
	= w^i,
\]
	where~$w\triangleq\beta^{(q^n-1)/(q^k-1)}$ is a primitive element in~$\field_{q^k}$. Since~$i$ is not divisible by~$q{-}1$, we get
that~$M(0)=w^i\in\overline{W}_{q-1}$.
\end{proof}

\begin{construction}\label{construction:rSidonNonBinary}
	For any prime power~$q\ge 3$ and integers~$k>0$ and~$r\ge 2$, let~$n=kr$, let~$\gamma\in\field_{q^n}$ be a root of an irreducible polynomial~$M(x)$ of degree~$r$ over~$\field_{q^k}$ with~$M(0)\in\overline{W}_{q-1}$, and
let~$V\triangleq \{u+u^q\gamma \,:\, u\in\field_{q^k} \}$.
\end{construction}

\begin{lemma}\label{lemma:rSidonNonBinary}
	The subspace~$V$ from Construction~\ref{construction:rSidonNonBinary} is an~$r$-Sidon space.
\end{lemma}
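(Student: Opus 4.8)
The plan is to mimic the proof of Theorem~\ref{theorem:mainNonBinary}, but now extracting an entire multi-set of roots rather than a pair. Given nonzero $a_1,\ldots,a_r\in V$ with $a_i=u_i+u_i^q\gamma$, I would form the product $\prod_{i\in[r]}a_i=P(\gamma)$, where $P(x)\triangleq\prod_{i\in[r]}(u_i+u_i^q x)$ is a polynomial of degree at most $r$ over $\field_{q^k}$. Since $\gamma$ is a root of an irreducible polynomial $M(x)$ of degree $r$ over $\field_{q^k}$, the set $\{1,\gamma,\ldots,\gamma^{r-1}\}$ is a basis of $\field_{q^n}$ over $\field_{q^k}$, but $\gamma^r$ is \emph{not} independent of the lower powers. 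So the first real step is to reduce $P(\gamma)$ modulo $M(x)$: writing $P(x)=\sum_{j=0}^r P_j x^j$ with $P_j\in\field_{q^k}$, the product $P(\gamma)$ expresses $\sum_{j=0}^{r-1}(\text{something})\gamma^j$ after substituting $\gamma^r$ via $M$. The point is that the $r$ coefficients $P_0,\ldots,P_{r-1}$ together with the leading coefficient $P_r=(\prod_i u_i)^q$ are related to the $r$ coordinates of $P(\gamma)$ in the basis $\{1,\gamma,\ldots,\gamma^{r-1}\}$ by an invertible $\field_{q^k}$-linear transformation. Concretely, the top coordinate captures $P_{r-1}+P_r\cdot(\text{coeff of }\gamma^{r-1}\text{ in }\gamma^r\bmod M)$, and so on; I need to argue this linear system is invertible so that all coefficients $P_0,\ldots,P_r$ can be recovered.

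The crucial structural observation, which plays the role of the invertibility of $T(x)=x-cx^q$ in Theorem~\ref{theorem:mainNonBinary}, concerns the constant and leading coefficients. I have $P_0=\prod_{i\in[r]}u_i$ and $P_r=\bigl(\prod_{i\in[r]}u_i\bigr)^q$, while the remaining coefficients are elementary symmetric expressions in the $u_i$ and $u_i^q$. Once $P(x)$ is fully recovered, its multi-set of roots is $\{-u_i^{-(q-1)}\}_{i\in[r]}$ (for those $i$ with $u_i\ne0$, which is all of them), from which the multi-set $\{u_i\field_q\}_{i\in[r]}$, and hence $\{a_i\field_q\}_{i\in[r]}$, is determined uniquely exactly as in the earlier proofs. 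So the whole argument hinges on recovering the coefficients of $P$ from $P(\gamma)$.

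The main obstacle is establishing that the reduction-mod-$M$ map is invertible on the relevant data, i.e.\ that the $r$ coordinates of $P(\gamma)$ in the basis $\{\gamma^j\}_{j=0}^{r-1}$ determine all $r+1$ coefficients $P_0,\ldots,P_r$. This is where the hypothesis $M(0)\in\overline{W}_{q-1}$ enters: I expect the linear system to reduce, after eliminating the lower coordinates, to a single equation of the form $P_0 - M(0)\,P_0^{\,q}\cdot(\text{unit}) = (\text{known})$, or more precisely to showing that the map $x\mapsto x-M(0)\,x^q$ (equivalently, some twist of it) is a bijection of $\field_{q^k}$; by the very argument in the proof of Theorem~\ref{theorem:mainNonBinary}, a nonzero kernel element would force $M(0)\in W_{q-1}$, contradicting the choice $M(0)\in\overline{W}_{q-1}$ guaranteed by Lemma~\ref{lemma:Pgamma}. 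The delicate point I would need to verify carefully is that after reducing $\gamma^r\equiv -\sum_{j<r}M_j\gamma^j$ (writing $M(x)=x^r+\sum_{j<r}M_j x^j$), the coupling between $P_0$ and $P_r=P_0^q$ is governed precisely by the free coefficient $M_0=M(0)$, so that the obstruction to invertibility is exactly whether $M(0)$ is a $(q{-}1)$st power. Granting that, the multi-set of roots of $P$ is read off and the lemma follows.
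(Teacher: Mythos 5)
Your plan is correct and follows essentially the same route as the paper's proof: the paper writes the reduction as $Q(x) = P(x) - p_r M(x)$ with $p_r = (P(0))^q$, so that the free coefficient $Q(0) = P(0) - M(0)(P(0))^q$ is exactly the evaluation of $T(x) = x - M(0)x^q$ at $P(0)$, and invertibility of $T$ (from $M(0)\in\overline{W}_{q-1}$) recovers $P(0)$, hence $p_r$, hence all of $P(x)$ and its multi-set of roots. The ``delicate point'' you flag is immediate: since only $\gamma^r$ needs reduction, the constant coordinate picks up precisely $-M_0 P_r$, with no unit factor or elimination required, so your coupling between $P_0$ and $P_r = P_0^q$ is governed by $M(0)$ exactly as you expect.
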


\begin{proof}
	Following the same methodology and the same notation from the proof of Lemma~\ref{lemma:rp1SidonProof}, notice that
for~$a_1,a_2,\ldots,a_r\in V$,
where~$a_i=u_i+u_i^q\gamma$ for~$i\in[r]$
and~$u_1,u_2,\ldots,u_k\in\field_{q^k}^*$,
\[
	\prod_{i\in[r]}a_i=\prod_{i\in[r]}(u_i+u_i^q\gamma)
	= P(\gamma)
	= {\underbrace{(P(x)- p_r\cdot M(x))}_{\triangleq \, Q(x)}}
		\raisebox{-0.7ex}{$\bigm|_{x = \gamma}$} ,
\]
	where~$p_r=\prod_{i\in[r]}u_i^q=(P(0))^q$
is the leading coefficient of~$P(x)$.
	
	Since~$\{\gamma^i\}_{i=0}^{r-1}$ is a linearly independent set over~$\field_{q^k}$, from the product~$\prod_{i\in[r]}(u_i+u_i^q\gamma)$ one can determine the coefficients
of~$Q(x)=P(x)-p_r M(x)$,
which is a polynomial of degree less than~$r$ over~$\field_{q^k}$.
Similarly to the proof of Theorem~\ref{theorem:mainNonBinary}, the free
coefficient of~$Q(x)$, which is given by
\[
Q(0) = P(0) - p_r M(0)= P(0) - M(0)(P(0))^q ,
\]
is the evaluation of the linearized
polynomial~$T(x)\triangleq x - M(0)x^q$
at the point~$x=P(0)$.
	Since~$M(0)\in\overline{W}_{q-1}$,
it follows that~$x\mapsto T(x)$ is invertible, and $P(0)$ can be
determined uniquely. Knowing~$P(0)$,
$p_r=(P(0))^q$, and~$Q(x)$, we can
compute~$P(x)=Q(x) + p_r M(x)$, whose multi-set of
roots, $\{-1/u_j^{q-1}\}_{i\in[r]}$,
determines uniquely the multi-set~$\{a_i\field_q\}_{i\in[r]}$.
\end{proof}

Similarly to Theorem~\ref{theorem:SetFromSpace},
the~$r$-Sidon space constructions in this section provide
constructions of the following generalization of Sidon sets.

\begin{definition}[\cite{SidonSurvey}]
	A subset $\SidonSet$ of an Abelian group~$G$ is
called a~$B_r$-set if the sums of all multi-sets of size~$r$ of
elements from~$\SidonSet$ are distinct. 
\end{definition}

It is known that the size~$R_r(m)$ of the largest~$B_r$-set in~$[m]$ satisfies (see~\cite[Sec.~4.2]{SidonSurvey})
\begin{equation}\label{equation:BrSize}
	1\le \lim_{m\to\infty}\frac{R_r(m)}{\sqrt[r]{m}}\le \frac{1}{2e}\left(r+\frac{3}{2} \log r+o_r(\log r) \right).
\end{equation}

\begin{theorem}\label{theorem:rSidonSet}
	For an $r$-Sidon space~$V\in\grsmn{q}{n}{k}$, and for
a set~$\{\gamma^{n_i} \,:\, i\in [(q^k{-}1)/(q{-}1)] \}$
of nonzero representatives of all one-dimensional subspaces
of~$V$ for some primitive~$\gamma\in\field_{q^n}$,
the set~$\SidonSet \triangleq\{n_i \,:\, i\in [(q^k{-}1)/(q{-}1)] \}$
is a~$B_r$-set in~$\integers_{(q^n-1)/(q-1)}$.
\end{theorem}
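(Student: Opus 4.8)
The plan is to mirror the proof of Theorem~\ref{theorem:SetFromSpace} almost verbatim, replacing the two-fold product structure of a Sidon space with the $r$-fold product structure of an $r$-Sidon space. The core observation is that the map $n_i \mapsto \gamma^{n_i}$ sets up a bijection between the set~$\SidonSet$ and a fixed set of representatives of the one-dimensional subspaces of~$V$, and that discrete exponentiation turns additive relations modulo~$(q^n{-}1)/(q{-}1)$ into multiplicative relations in~$\field_{q^n}^*$ up to a factor in~$\field_q^*$.

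Concretely, I would start by assuming a $B_r$-collision: let $a_1,\ldots,a_r$ and $b_1,\ldots,b_r$ be elements of~$\SidonSet$ (with repetitions allowed, since a $B_r$-set concerns multi-sets of size~$r$) satisfying
\[
\sum_{i\in[r]}a_i \equiv \sum_{i\in[r]}b_i \pmod{(q^n{-}1)/(q{-}1)} .
\]
This means $\sum_{i\in[r]}a_i = \sum_{i\in[r]}b_i + t\cdot(q^n{-}1)/(q{-}1)$ for some integer~$t$. Exponentiating $\gamma$ to both sides gives $\prod_{i\in[r]}\gamma^{a_i} = \lambda\cdot\prod_{i\in[r]}\gamma^{b_i}$, where $\lambda = \gamma^{t(q^n-1)/(q-1)}\in\field_q^*$ because $\gamma^{(q^n-1)/(q-1)}$ generates~$\field_q^*$ (as $\gamma$ is primitive in $\field_{q^n}$). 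Absorbing the scalar~$\lambda$ into, say, the first factor $\gamma^{b_1}$ (which only changes it within its own one-dimensional subspace~$\gamma^{b_1}\field_q$) yields an honest $r$-fold product equality $\prod_{i\in[r]}\gamma^{a_i} = \prod_{i\in[r]}(\lambda\gamma^{b_1})\gamma^{b_2}\cdots\gamma^{b_r}$ of nonzero elements of the $r$-Sidon space~$V$.

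By Definition~\ref{defintion:rSidonSpace}, the multi-sets $\{\gamma^{a_i}\field_q\}_{i\in[r]}$ and $\{\gamma^{b_i}\field_q\}_{i\in[r]}$ are then equal (the scalar~$\lambda$ being invisible at the level of one-dimensional subspaces). Finally, since $\{\gamma^{n_i}\}_i$ is a set of representatives, \emph{one} per one-dimensional subspace of~$V$, the equality of the one-dimensional subspaces $\gamma^{a_i}\field_q$ and $\gamma^{b_i}\field_q$ as multi-sets forces the equality of the corresponding exponent multi-sets $\{a_i\}_{i\in[r]}=\{b_i\}_{i\in[r]}$, which is exactly the $B_r$ condition. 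I do not anticipate a genuine obstacle here; the only point demanding mild care is the bookkeeping of multi-sets (rather than sets) and the verification that $\lambda\in\field_q^*$, both of which are routine once the $r=2$ argument of Theorem~\ref{theorem:SetFromSpace} is understood as the template.
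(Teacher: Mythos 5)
Your proposal is correct and is precisely what the paper intends: the paper omits the proof of Theorem~\ref{theorem:rSidonSet}, stating it is a straightforward generalization of the proof of Theorem~\ref{theorem:SetFromSpace}, and your argument carries out exactly that generalization (including the two points needing care: $\lambda=\gamma^{t(q^n-1)/(q-1)}\in\field_q^*$ because $\gamma$ is primitive, and absorbing $\lambda$ into one factor so that the $r$-Sidon property applies to honest products of nonzero elements of~$V$).
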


The proof of Theorem~\ref{theorem:rSidonSet}
is a straightforward generalization of the proof
of Theorem~\ref{theorem:SetFromSpace} (and is therefore omitted).
Applying
Theorem~\ref{theorem:rSidonSet} to
Construction~\ref{construction:rSidonNonBinary} results in
a~$B_r$-set $\SidonSet_q(n,r)$ of
size $(q^{n/r}{-}1)/(q{-}1)$
in~$\integers_{(q^n-1)/(q-1)}$
(and therefore in~$[(q^n{-}1)/(q{-}1)]$). It is readily verified that 
\[
\lim_{n\to \infty}\frac{|\SidonSet_q(n,r)|}{%
          |\integers_{(q^n-1)/(q-1)}|^{1/r}}
=(q-1)^{(1/r)-1}
\]
(compare with~\eqref{equation:BrSize}).

\section*{Acknowledgments}
The authors would like to express their sincere gratitude to Amir Shpilka for valuable discussions.

\appendices

\section{Sidon spaces of dimensions~$1$ and~$2$}
\label{section:SidonOfDimOneTwo}

Definition~\ref{definition:SidonSpace} readily implies that any subspace
in $\grsmn{q}{n}{1}$ is a Sidon space. The same holds
for~$\grsmn{q}{n}{2}$, as stated next.

\begin{lemma}\label{lemma:SidonDim2}
	Any subspace~$V\in\grsmn{q}{n}{2}$ is either a max-span Sidon space or---if $n$ is even---a cyclic shift of~$\field_{q^2}$.
\end{lemma}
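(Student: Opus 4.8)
The plan is to exploit that both alternatives in the statement---being a max-span Sidon space and being a cyclic shift of~$\field_{q^2}$---are invariant under cyclic shifts, and then to normalize~$V$ so that it contains~$1$. Indeed, for any~$\alpha\in\field_{q^n}^*$ we have $(\alpha V)^2=\alpha^2 V^2$, so $\dim((\alpha V)^2)=\dim(V^2)$; and $V$ is a Sidon space if and only if~$\alpha V$ is, directly from Definition~\ref{definition:SidonSpace} (scaling~$a,b,c,d$ by~$\alpha$ preserves both the hypothesis $ab=cd$ and the conclusion on the pairs $\{a\field_q,b\field_q\}$). Hence ``max-span Sidon'' and ``cyclic shift of~$\field_{q^2}$'' are both shift-invariant. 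Choosing any nonzero~$a\in V$ and passing to $V'\triangleq a^{-1}V$, I may therefore assume $1\in V'$ and prove the dichotomy for~$V'$, transporting the conclusion back through the shift.

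Writing $V'=\Span{1,\beta}$ with $\beta\in\field_{q^n}\setminus\field_q$ (such~$\beta$ exists since $\dim V'=2$), the square~$V'^2$ is spanned by $\{1,\beta,\beta^2\}$, so $\dim(V'^2)\in\{2,3\}$, and $\dim(V'^2)=3=\binom{3}{2}$ holds exactly when $\{1,\beta,\beta^2\}$ is linearly independent over~$\field_q$, that is, when~$\beta$ has degree at least~$3$ over~$\field_q$, i.e.\ when $\beta\notin\field_{q^2}$. This single observation drives the whole proof.

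For the two cases: if $\beta\notin\field_{q^2}$, then $\dim(V'^2)=\binom{3}{2}$ and Lemma~\ref{lemma:max-spanSidonSpace} (which applies verbatim at $k=2$, where the forms $p_a,p_b,p_c,p_d$ are linear and hence irreducible, so the unique-factorization step still goes through) shows that~$V'$, and therefore~$V$, is a max-span Sidon space. If instead $\beta\in\field_{q^2}$, then, since $\beta\notin\field_q$, its minimal polynomial over~$\field_q$ has degree exactly~$2$, whence $\field_q(\beta)=\field_{q^2}$; this forces $\field_{q^2}\subseteq\field_{q^n}$ and thus $2\mid n$, while a dimension count gives $V'=\Span{1,\beta}=\field_{q^2}$, so the original~$V$ is a cyclic shift of~$\field_{q^2}$.

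I expect no substantial obstacle: the argument reduces to the elementary fact that, for a $2$-dimensional~$V$ containing~$1$, $\dim(V^2)$ is governed entirely by the degree of the second basis element. The only points requiring slight care are checking that Lemma~\ref{lemma:max-spanSidonSpace} is sound at $k=2$, and noting that the two alternatives are genuinely mutually exclusive because~$\field_{q^2}$ is itself not a Sidon space---this follows by counting, since the $\binom{q+2}{2}$ unordered pairs of one-dimensional subspaces of~$\field_{q^2}$ must collide among only $q+1$ products in $\field_{q^2}^*/\field_q^*$ for every $q\ge 2$.
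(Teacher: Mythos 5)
Your proof is correct and takes essentially the same route as the paper's: use shift-invariance of the relevant properties to assume $V=\Span{1,\beta}$ with $\beta\in\field_{q^n}\setminus\field_q$, note $V^2=\Span{1,\beta,\beta^2}$, and split on whether $\dim(V^2)$ equals $3$ (giving a max-span Sidon space via Lemma~\ref{lemma:max-spanSidonSpace}) or $2$ (forcing $\beta\in\field_{q^2}$, hence $2\mid n$ and $V=\field_{q^2}$). Your extra checks---that Lemma~\ref{lemma:max-spanSidonSpace} is sound at $k=2$ and that the two alternatives are mutually exclusive because $\field_{q^2}$ is not a Sidon space---are valid refinements of the same argument rather than a different approach.
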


\begin{proof}
	Since the Sidon space property is invariant under cyclic shifts, it can be assumed without loss of generality that~$V=\Span{1,v}$ for some~$v\in\field_{q^n}\setminus \field_q$. This implies
that~$V^2=\Span{1,v,v^2}$ and, hence,~$\dim(V^2)\in\{2,3\}$.
If~$\dim(V^2)=3$ then~$V$ is a max-span Sidon space (see Definition~\ref{definition:minmax-spanSidonSpace}). On the other hand, if~$\dim(V^2)=2$, then~$v^2=\lambda v+\mu$
for some~$\lambda$ and~$\mu$
in~$\field_q$, which means that $v \in \field_{q^2}$
(and, so,~$n$ must be even); hence, $V=\field_{q^2}$ in this case.
\end{proof}

\section{Proof of Theorem~\protect\ref{theorem:Existence}}
\label{section:SidonForAnyN}

Constructions~\ref{construction:MainBinary}
and~\ref{construction:mainNonBinary}
provide Sidon spaces in which~$n$ is linear in~$k$. However, these
constructions are restricted to integers~$n$ that have a large proper
divisor, and do not apply in many other cases (e.g., when~$n$ is
a prime).
We show here that Sidon spaces
exist for \emph{any}~$n$ and $k \le (n{-}2)/4$.
This claim is proved in an inductive
manner: given a Sidon space~$V$,
we first show that as long as $\field_{q^n}\setminus V$ is large enough,
a new element~$v$ can
be added to~$V$ without compromising the Sidon space property.

\begin{lemma}\label{lemma:ExistenceInduction}
	If~$V\in\grsmn{q}{n}{k}$ is a Sidon space and
$q^n-q^k>2\cdot (q^{4k+4}{-}1)/(q{-}1)$,
then there exists $v\in\field_{q^n}\setminus V$ such that $V+\Span{v}$ is a Sidon space as well.
\end{lemma}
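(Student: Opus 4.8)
The plan is to make the extended space $W \triangleq V + \Span{v}$ explicit and translate the requirement that $W$ be a Sidon space into a condition that excludes only finitely many values of $v$, after which a pigeonhole argument driven by the hypothesis $q^n - q^k > 2\cdot(q^{4k+4}{-}1)/(q{-}1)$ produces an admissible $v\in\field{q^n}\setminus V$. Since $v\notin V$, every element of $W$ has a unique representation $a_0 + a_1 v$ with $a_0\in V$ and $a_1\in\field{q}$. First I would take four nonzero elements $\alpha = a_0 + a_1 v$, $\beta = b_0 + b_1 v$, $\gamma = c_0 + c_1 v$, $\delta = d_0 + d_1 v$ of $W$ and expand the putative relation $\alpha\beta = \gamma\delta$. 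Collecting powers of $v$, this becomes $Av^2 + Bv + C = 0$, where $A = a_1b_1 - c_1d_1 \in \field{q}$, $B = a_1b_0 + a_0b_1 - c_1d_0 - c_0d_1 \in V$, and $C = a_0b_0 - c_0d_0 \in V^2$. Thus every potential violation of the Sidon property forces $v$ to be a root of the polynomial $P(x)\triangleq Ax^2 + Bx + C$ over $\field{q^n}$ determined by the configuration $(a_0,b_0,c_0,d_0,a_1,b_1,c_1,d_1)$.

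The crucial dichotomy is whether $P$ is the zero polynomial. If $P\equiv 0$, i.e.\ $A=B=C=0$, then these are exactly the conditions for the polynomial identity $(a_0+a_1x)(b_0+b_1x) = (c_0+c_1x)(d_0+d_1x)$ to hold in $\field{q^n}[x]$. Here I would invoke unique factorization in $\field{q^n}[x]$: each factor is a nonzero polynomial (a vanishing factor would make one of $\alpha,\beta,\gamma,\delta$ zero), so the factors pair up, and since all leading coefficients $a_1,b_1,c_1,d_1$ lie in $\field{q}$, any unit relating two matched factors is forced into $\field{q}$; this yields $\{\alpha\field{q},\beta\field{q}\} = \{\gamma\field{q},\delta\field{q}\}$. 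The remaining subcase, where all four leading coefficients vanish, reduces the identity to $a_0b_0 = c_0d_0$ inside $V$, and here I would use the standing assumption that $V$ is already a Sidon space to get the same $\field{q}$-line equality. Either way a configuration with $P\equiv 0$ is not a genuine violation, so a value of $v$ can be \emph{bad} only if it is a root of some configuration with $P\not\equiv 0$, and any such $P$ (quadratic, linear, or nonzero constant) has at most two roots in $\field{q^n}$.

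It then remains to count. Each configuration is a vector in $\field{q}^{4k+4}$ (four elements of $V$, of size $q^k$ each, together with four scalars), and a genuine violation needs $\alpha,\beta,\gamma,\delta\neq 0$, so its configuration is nonzero. Scaling the whole vector by $\lambda\in\field{q}^*$ multiplies $A,B,C$ by $\lambda^2$, leaving the root set of $P$ unchanged, so the bad values of $v$ are covered by the root sets of polynomials indexed by the $(q^{4k+4}{-}1)/(q{-}1)$ projective (nonzero, modulo $\field{q}^*$) configurations, each contributing at most two roots. Hence there are at most $2\cdot(q^{4k+4}{-}1)/(q{-}1)$ bad values of $v$ in total. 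Since the hypothesis makes $|\field{q^n}\setminus V| = q^n - q^k$ strictly larger than this bound, some $v\in\field{q^n}\setminus V$ avoids every bad value; for this $v$, no violating configuration exists, and $W = V + \Span{v}$ is a Sidon space of dimension $k+1$.

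I expect the main obstacle to be the degenerate case $P\equiv 0$: one must argue carefully that the formal identity in $\field{q^n}[x]$, combined with the membership of the leading coefficients in $\field{q}$, already forces the $\field{q}$-line equality demanded by Definition~\ref{definition:SidonSpace}, and that the all-constant subcase is precisely where the inductive hypothesis (that $V$ is Sidon) is consumed. The counting is then routine, with the exact constant in the bound emerging from passing to projective configurations and applying the at-most-two-roots estimate uniformly.
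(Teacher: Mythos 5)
Your proof is correct and follows essentially the same route as the paper's: it reduces any violation of the Sidon property in $V+\Span{v}$ to $v$ being a root of a quadratic polynomial determined by the configuration, disposes of the identically-zero polynomial case via unique factorization in $\field_{q^n}[x]$ (invoking the Sidon property of $V$ in the all-constant subcase), and then counts the nontrivial equations projectively, each contributing at most two roots, so that the hypothesis on $q^n-q^k$ leaves an admissible $v$. If anything, your handling of the degenerate $P\equiv 0$ cases is more explicit than the paper's one-line remark about the two sides being decompositions of the same quadratic polynomial.
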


\begin{proof}
	For $v\in\field_{q^n}\setminus V$, the elements of $V+\Span{v}$ are of the form $\alpha v+a$, where $\alpha\in\field_q$ and $a\in V$. Hence, in order for $V+\Span{v}$ to be a Sidon space, it suffices to prove that for every $\alpha_1,\alpha_2, \alpha_3,\alpha_4\in\field_q$ and every $a,b,c,d\in V$ such that none of $\alpha_1v+a,~\alpha_2v+b,~\alpha_3v+c$, and $\alpha_4v+d$ is zero, we have that
	\begin{align}
	\label{eqn:U+vCondition1}\textrm{if } \left(\alpha_1v+a\right)\left(\alpha_2v+b\right)&=\left(\alpha_3v+c\right)\left(\alpha_4v+d\right),\textrm{ then}\\
	\label{eqn:U+vCondition2}\alpha_1v+a&\in
\Span{\alpha_3v+c} \cup \Span{\alpha_4v+d}.
	\end{align}
	Equation~\eqref{eqn:U+vCondition1}
is equivalent to~$v$ being a solution to the quadratic equation
	\begin{equation}\label{equation:xEquality}
	\left(\alpha_1\alpha_2-\alpha_3\alpha_4\right)x^2+\left(\alpha_1b+\alpha_2a-\alpha_3d-\alpha_4c\right)x+ab-cd= 0,
	\end{equation}
	which is trivial (i.e.,
the left-hand side is the zero polynomial) only if
$\left(\alpha_1x+a\right)\left(\alpha_2x+b\right)$ and
$\left(\alpha_3x+c\right)\left(\alpha_4x+d\right)$
are (irreducible) decompositions of the same quadratic polynomial
(i.e., only if~\eqref{eqn:U+vCondition2} holds).
Therefore, the total number of solutions for~$x$ to all nontrivial equations of the form~\eqref{equation:xEquality} serves as an upper bound
on
the number of elements in~$\field_{q^n}\setminus V$ that
\emph{cannot}
be added to the Sidon space~$V$ while maintaining the Sidon
space property.  Since each such equation
is determined
by four elements in~$\field_q$ and four elements in~$V$, it follows that there exists
at most~$(q^{4k+4}{-}1)/(q{-}1)$ such equations that are
pairwise linearly independent over $\field_q$.
As we assume that
$q^n-|V| = q^n - q^k >2\cdot (q^{4k+4}{-}1)/(q{-}1)$,
there exists $v\in\field_{q^n}\setminus V$ that
satisfies~\emph{all} constraints of
the form~\eqref{eqn:U+vCondition1}--\eqref{eqn:U+vCondition2}
and, hence,~$V+\Span{v}$
is a Sidon space.
\end{proof}

\begin{proof}[Proof of Theorem~\ref{theorem:Existence}]
	We prove by induction on $k = 1, 2, \ldots, \floor{(n{-}2)/4}$
that $\grsmn{q}{n}{k}$ contains a Sidon space, with the induction base
($k = 1$) being straightforward.

Turning to the induction step, suppose that $\grsmn{q}{n}{k}$
contains a Sidon space, for some $k \le \floor{(n{-}6)/4}$.
By Lemma~\ref{lemma:ExistenceInduction}, any Sidon space
in~$\grsmn{q}{n}{k}$ can be expanded to a Sidon space
in~$\grsmn{q}{n}{k{+}1}$, as long as
\[
	q^n > \frac{2}{q-1} \cdot (q^{4k+4}-1) + q^k 
\]
which, in turn, holds if
\[
	q^{n-k} \ge \frac{2}{q-1} \cdot q^{3k+4}+1.
\]
It is easy to see that the latter inequality is implied by
$n-k \ge 3k+6$, or, equivalently,
by our induction assumption~$k \le (n{-}6)/4$. Hence,
for such~$k$, any Sidon space in~$\grsmn{q}{n}{k}$
can be expanded to a Sidon space in~$\grsmn{q}{n}{k{+}1}$.
\end{proof}

\section{Analysis of Construction~\ref{construction:irreduciblePolys}}
\label{section:omittedProofs}

\begin{proof}[Proof of Lemma~\ref{lemma:irreduciblePolys}]
First, note that since the largest degree of a polynomial
in~$\Irred$ is~$\Delta$, it follows 
that~$\deg f_i\le \Delta \cdot \binom{k}{2} < n/2$, for all~$i\in[k]$.
Now, by Lemma~\ref{lemma:max-spanSidonSpace},
it suffices to prove that
the set~$F_\gamma\triangleq\{f_i(\gamma)\cdot f_j(\gamma) \,:\,
i,j\in[k], i\ge j\}$
is linearly independent over~$\field_q$. From
$n/2 > \max\{\deg f_i \,:\, i \in [k] \}$
it follows that the set of \emph{field elements} $F_\gamma$
is linearly independent over $\field_q$,
if and only if the set of \emph{polynomials}
$F_x\triangleq\{{f_i(x)\cdot f_j(x)} \,:\, {i,j\in[k]}, i\ge j\}$
is linearly independent over $\field_q$.

	Assume that
	\begin{equation}\label{equation:linearDependence}
	\sum_{i,j\in[k]\,:\,i \ge j}\alpha_{i,j}f_i(x)f_j(x)=0,
	\end{equation}
for coefficients~$\alpha_{i,j} \in \field_q$.
According to~\eqref{equation:Aifi's},
taking~\eqref{equation:linearDependence} modulo~$p_{s,s}(x)$
for any $s \in [k]$ results in
\[
	\alpha_{s,s}(f_s(x))^2 \equiv 0 \pmod {p_{s,s}(x)} .
\]
	Since~$\gcd(f_s(x),p_{s,s}(x)) = 1$,
it follows that~$\alpha_{s,s}=0$ for all~${s\in[k]}$,
and~\eqref{equation:linearDependence} becomes
	\begin{equation}\label{equation:linearDependence2}
	\sum_{i,j\in[k]\,:\,i > j}\alpha_{i,j}f_i(x)f_j(x)=0 .
	\end{equation}
Taking now~\eqref{equation:linearDependence2} modulo~$p_{s,t}(x)$
for any $s > t$ in $[k]$ yields
	\[
	\alpha_{s,t}f_s(x)f_t(x) \equiv 0 \pmod {p_{s,t}(x)} .
	\]
	Again, $\gcd(f_s(x),p_{s,t}(x)) = \gcd(f_t(x),p_{s,t}(x)) = 1$
and, so, $\alpha_{s,t}=0$ for all~$s > t$ in $[k]$.
\end{proof}

To evaluate the contribution of Construction~\ref{construction:irreduciblePolys},
we provide an upper bound on the smallest possible largest
degree, $\Delta$, of the elements of~$\Irred$, under the constraint
that $|\Irred| = \binom{k+1}{2}$.

\begin{lemma}\label{lemma:irrPolyEstimation}
For any positive integer~$\ell$, the number~$J(\ell)$
of monic irreducible polynomials of degree at most~$\ell$
over $\field_q$ satisfies $J(\ell) \ge q^\ell/\ell$.
\end{lemma}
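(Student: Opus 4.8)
The plan is to invoke the classical factorization of $x^{q^\ell}-x$ over $\field_q$ and compare degrees. Recall that $x^{q^\ell}-x$ is precisely the product of all monic irreducible polynomials over $\field_q$ whose degree divides~$\ell$ (see, e.g.,~\cite{FiniteFields}); indeed, its roots are exactly the elements of~$\field_{q^\ell}$, each of which has minimal polynomial over~$\field_q$ of degree dividing~$\ell$, and the factorization is squarefree. Writing $I(d)$ for the number of monic irreducible polynomials of exact degree~$d$ over~$\field_q$, equating the degree of the left-hand side with the total degree of the factorization yields
\[
q^\ell = \sum_{d \,\mid\, \ell} d \cdot I(d) .
\]

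Next I would bound the right-hand side crudely. Since every divisor~$d$ of~$\ell$ satisfies $d \le \ell$, we have $q^\ell \le \ell \sum_{d \,\mid\, \ell} I(d)$. Moreover, every monic irreducible whose degree divides~$\ell$ has degree at most~$\ell$, so the divisor-restricted count $\sum_{d \,\mid\, \ell} I(d)$ is at most~$J(\ell)$, the count over all degrees up to~$\ell$. Combining these two observations gives $q^\ell \le \ell \cdot J(\ell)$, that is, $J(\ell) \ge q^\ell/\ell$, as claimed.

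There is no genuine obstacle here: the estimate follows directly from the standard factorization once the bookkeeping is done. The only point requiring a moment's care is the distinction between polynomials of degree \emph{dividing}~$\ell$ (which the factorization naturally produces) and those of degree \emph{at most}~$\ell$ (which $J(\ell)$ counts); this distinction is harmless because the divisors of~$\ell$ form a subset of $\{1,\dots,\ell\}$, so discarding the admissible degrees that do not divide~$\ell$ only weakens the lower bound on~$J(\ell)$, which is exactly the direction we need.
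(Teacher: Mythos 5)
Your proof is correct and follows essentially the same route as the paper: both rest on the identity $q^\ell = \sum_{d \mid \ell} d\, I(d)$ (which the paper simply cites from Lidl--Niederreiter, while you re-derive it from the factorization of $x^{q^\ell}-x$) and then bound each divisor $d$ by $\ell$ and enlarge the sum from divisors of $\ell$ to all degrees in $[\ell]$. No issues.
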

\begin{proof}
	Let $N(\ell)$ denote the number of monic irreducible polynomials
of degree (exactly)~$\ell$ over~$\field_q$. 
It is known that
\[
\sum_{t | \ell} t \, N(t) = q^\ell
\]
(see~\cite[Ch.~3, Cor.~3.21]{FiniteFields}). Therefore,
\[
J(\ell) = \sum_{t \in [\ell]} N(t) \ge
\frac{1}{\ell} \sum_{t | \ell} t \, N(t) = \frac{q^\ell}{\ell} .
\]
\end{proof}

Given~$q$ and~$k$, we can select~$\Delta$
in Construction~\ref{construction:irreduciblePolys}
to be the smallest for which
$J(\Delta) \ge |\Irred| = \binom{k+1}{2}$.
By Lemma~\ref{lemma:irrPolyEstimation} we see
that $\Delta = (2 + o_k(1)) \log_q k$ will do,
in which case we can take $n = (2 + o_k(1)) k^2 \log_q k$.
In particular,
when~$q \ge |\Irred| = \binom{k+1}{2}$ we can take~$\Delta = 1$,
which yields a construction for any $n \ge k(k{-}1){+}1$.

\section{Proof of Theorem~\protect\ref{theorem:ExistenceMaxSpan}}
\label{section:ExistenceMaxSpan}

In our proof of Theorem~\ref{theorem:ExistenceMaxSpan}, we will borrow
tools from~\cite{Cascudo};
specifically, we will use properties
of quadratic forms over finite fields,
as found in~\cite[Ch.~6, Sec.~2]{FiniteFields}
and~\cite[Sec.~IV and Appendix~A]{Cascudo}.

A \emph{quadratic form} (in~$k$ indeterminates)
over a field~$F$ is a homogeneous polynomial of the form
\[
Q(\bldx) = \sum_{i,j \in [k] \,:\, i \ge j} a_{i,j} x_i x_j ,
\]
where $a_{i,j} \in F$ and
$\bldx = (x_1 \; x_2 \; \ldots \; x_k)$ is a vector of
indeterminates.
The \emph{rank} of $Q(\bldx)$ equals
the smallest number
of indeterminates that will actually appear in $Q(\bldx P)$,
when~$P$ ranges over all nonsingular $k \times k$ matrices
over~$F$. 

Through the canonical representation of~$Q(\bldx)$
(as in~\cite[Thms~6.21 and~6.30]{FiniteFields}),
it readily follows that the rank of $Q(\bldx)$ is the same in any
extension field of~$F$.
The set of all quadratic forms in~$k$ indeterminates
with rank~$r$ over $\field_q$ will be denoted by $\Quad_q(k,r)$,
and we will use the shorthand notation~$\Quad_q(k)$ for~$\Quad_q(k,k)$.

The next two lemmas are taken from~\cite[Thms.~4.5 and~4.6]{Cascudo}
(see also~\cite[Ch.~6]{FiniteFields}).

\begin{lemma}
\label{lemma:RootsQ}
Given a prime power~$q$
and positive integers~$n$, $k$, and $r \in [k]$,
for any~$Q \in \Quad_q(k,r)$,
the number of vectors $\bldv \in \field_{q^n}^k$
that satisfy $Q(\bldv) = 0$ is given by
\[
\left\{
\begin{array}{lcl}
q^{n(k-1)} && \textrm{if $r$ is odd} \\
q^{n(k-1)} \cdot \left( 1 \pm  (q^n{-}1) \cdot q^{-rn/2} \right)
&& \textrm{if $r$ is even} \\
\end{array}
\right.
.
\]
\end{lemma}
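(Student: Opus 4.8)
The plan is to reduce the enumeration to that of a \emph{nondegenerate} quadratic form in exactly $r$ indeterminates, and then to invoke the classical count of zeros of such a form over a finite field. The first step would use the fact, recorded just before the statement, that the rank of a quadratic form is unchanged under the extension $\field_q\hookrightarrow\field_{q^n}$. By the canonical-form theorems \cite[Thms.~6.21 and~6.30]{FiniteFields} (for odd and even characteristic, respectively), there is a nonsingular matrix $P$ over $\field_q$ for which $Q(\bldx P)$ involves only the first $r$ of the indeterminates; write $Q(\bldx P)=Q_0(x_1,\ldots,x_r)$, where $Q_0$ is nondegenerate of rank $r$. Since $\bldv\mapsto\bldv P$ is a bijection of $\field_{q^n}^k$ and the remaining $k-r$ coordinates may be assigned freely, I would obtain
\[
\bigl|\{\bldv\in\field_{q^n}^k \,:\, Q(\bldv)=0\}\bigr| = q^{n(k-r)}\cdot\bigl|\{\bldv\in\field_{q^n}^r \,:\, Q_0(\bldv)=0\}\bigr| .
\]

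It then remains to count the zeros of the nondegenerate form $Q_0$ in $r$ variables over the field $\field_{q^n}$, which is precisely the classical enumeration \cite[Thms.~6.26, 6.27, and~6.32]{FiniteFields}. Abbreviating the field size by $s=q^n$, that result gives $s^{r-1}$ zeros when $r$ is odd (the Gauss-sum correction term vanishes), and $s^{r-1}\pm(s-1)\,s^{r/2-1}$ zeros when $r$ is even, the sign being governed by the equivalence type of $Q_0$ (hyperbolic versus elliptic, equivalently its discriminant or Arf invariant). I would carry odd and even characteristic through in parallel, since the dichotomy in the statement is on the parity of the rank $r$ and not of the characteristic.

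Finally I would substitute back and simplify. In the odd case, $q^{n(k-r)}\cdot q^{n(r-1)}=q^{n(k-1)}$, matching the first branch. In the even case, factoring $s^{r-1}$ out of $s^{r-1}\pm(s-1)s^{r/2-1}$ yields $s^{r-1}\bigl(1\pm(s-1)s^{-r/2}\bigr)$, so that
\[
q^{n(k-r)}\cdot s^{r-1}\bigl(1\pm(s-1)s^{-r/2}\bigr) = q^{n(k-1)}\bigl(1\pm(q^n-1)\,q^{-rn/2}\bigr) ,
\]
which is the second branch. The main obstacle, and the only part requiring genuine care, is the even-characteristic case: there the canonical form is structurally different from the odd case (pairs $x_ix_j$ and binary forms $x_i^2+x_ix_j+ax_j^2$ rather than a diagonalization into squares), the zero-count \cite[Thm.~6.32]{FiniteFields} is established by a separate argument, and one must also check that the sign is an invariant of $Q_0$ that is stable under the field extension. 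An alternative, more self-contained route is a direct additive-character computation, $|\{\bldv:Q(\bldv)=0\}|=q^{n(k-1)}+q^{-n}\sum_{t\neq0}\sum_{\bldv}\psi(tQ(\bldv))$ for a nontrivial additive character $\psi$, in which the inner sums reduce to products of quadratic Gauss sums; this handles odd $q$ transparently but again needs a modification in characteristic two.
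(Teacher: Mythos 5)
Your proof is correct, but note that the paper itself does not prove this lemma at all: it is imported verbatim from \cite[Thms.~4.5 and~4.6]{Cascudo}, with a pointer to \cite[Ch.~6]{FiniteFields}. What you have written is essentially the standard argument underlying those citations --- change variables by a nonsingular matrix $P$ over $\field_q$ (which remains nonsingular over $\field_{q^n}$) to isolate a nondegenerate form $Q_0$ in exactly $r$ indeterminates, use the rank-stability-under-extension fact recorded just before the lemma to know that $Q_0$ has rank $r$ over $\field_{q^n}$ as well, apply the classical zero counts over the field of size $s=q^n$ (\cite[Thms.~6.26, 6.27, 6.32]{FiniteFields}), and multiply by $q^{n(k-r)}$ for the free coordinates; your arithmetic in both branches checks out. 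One small correction to your list of obstacles: you do \emph{not} need the sign in the even-rank case to be stable under the extension $\field_q\hookrightarrow\field_{q^n}$, and in fact it is not stable in general (e.g., an elliptic binary form over $\field_q$ becomes hyperbolic over $\field_{q^n}$ for even $n$). The lemma only asserts the count up to the ambiguous sign ``$\pm$'', so it suffices to apply the classical theorem to $Q_0$ regarded as a form over $\field_{q^n}$, with whatever type it has there; only the rank needs to survive the extension, and that is exactly what the paper's preceding remark (and your first step) provides.
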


\begin{lemma}
\label{lemma:NumberQ}
Given a prime power~$q$ and a positive integer~$k$,
\[
|\Quad_q(k)| =
q^{k(k+1)/2} \cdot \prod_{j \in [\ceil{k/2}]}
\left( 1 - q^{1-2j} \right)
\]
and, for any~$r \in [k{-}1]$,
\[
\left| \Quad_q(k,r) \right|
= \qbin{k}{r}{q} \cdot \left| \Quad_q(r) \right| .
\]
\end{lemma}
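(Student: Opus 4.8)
The plan is to prove the two assertions of Lemma~\ref{lemma:NumberQ} separately, establishing the recursive identity (the second display) first by a structural decomposition, and then extracting the closed form for $|\Quad_q(k)|$ (the first display) from the recursion so obtained. Throughout I would lean on the canonical representation of quadratic forms over $\field_q$ (as in~\cite[Thms.~6.21 and~6.30]{FiniteFields}), which guarantees that the rank is a well-defined invariant, independent of the ambient extension field and of the coordinate system, and which in every characteristic exhibits a rank-$r$ form in $k$ indeterminates as one that, after a nonsingular change of variables, involves exactly $r$ of them nondegenerately while the remaining $k-r$ span its \emph{radical}.

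For the identity $|\Quad_q(k,r)| = \qbin{k}{r}{q}\cdot|\Quad_q(r)|$, I would set up a bijection between rank-$r$ forms in $k$ indeterminates and pairs $(W,\bar{Q})$, where $W\subseteq\field_q^k$ is the $(k{-}r)$-dimensional radical and $\bar{Q}$ is the induced full-rank form on the quotient $\field_q^k/W\cong\field_q^r$. A form $Q$ determines $W=\textrm{rad}(Q)$ and descends to such a $\bar{Q}$; conversely, a subspace $W$ of dimension $k-r$ together with a full-rank form on the $r$-dimensional quotient pulls back to a unique rank-$r$ form. Since the number of $(k{-}r)$-dimensional subspaces of $\field_q^k$ is $\qbin{k}{k-r}{q}=\qbin{k}{r}{q}$, and the full-rank forms on an $r$-dimensional space are precisely the $|\Quad_q(r)|=|\Quad_q(r,r)|$ forms counted on the right, multiplying the two counts yields the claim.

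To obtain the closed form for $|\Quad_q(k)|=|\Quad_q(k,k)|$, I would observe that the same decomposition, now ranging over all ranks $0\le r\le k$ (with the zero form accounting for $r=0$), partitions the $q^{k(k+1)/2}$ quadratic forms in $k$ indeterminates, giving $\sum_{r=0}^{k}\qbin{k}{r}{q}\,|\Quad_q(r)| = q^{k(k+1)/2}$. Writing $Q_r\triangleq|\Quad_q(r)|$ with $Q_0=1$, this is a triangular recursion determining $Q_k$ from $q^{k(k+1)/2}$ and the smaller $Q_r$. I would solve it either by $q$-binomial (M\"obius) inversion, which expresses $Q_k$ as the alternating sum $\sum_{r}(-1)^{k-r}q^{\binom{k-r}{2}}\qbin{k}{r}{q}q^{\binom{r+1}{2}}$, or, more economically, by induction: assuming the product formula for all $r<k$, substitute into the recursion, isolate $Q_k$, and verify that the resulting expression equals $q^{k(k+1)/2}\prod_{j\in[\ceil{k/2}]}(1-q^{1-2j})$; the base cases $k=1$ (giving $q-1$) and $k=2$ (giving $q^3-q^2$) already match. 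As an alternative route for this part, one may invoke the classification of nondegenerate forms into their few $\textrm{GL}_k(q)$-equivalence classes and sum the orbit sizes $|\textrm{GL}_k(q)|/|O_\epsilon|$ using the known orders of the orthogonal groups $O_\epsilon$.

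I expect two main obstacles. First, making the radical--quotient bijection fully rigorous in characteristic two: because the associated polar bilinear form is alternating there, the notions of rank and radical must be read off the canonical form of Thm.~6.30 rather than from a Gram matrix, and one must check that pulling back a full-rank form along $\field_q^k\to\field_q^k/W$ indeed recovers every rank-$r$ form exactly once. Second, the final simplification of the solved recursion into the stated product is a genuine, if routine, $q$-series manipulation; presenting it cleanly---via the inversion formula or the inductive verification---is where the bookkeeping concentrates.
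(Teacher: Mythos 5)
You should first be aware that the paper does not prove this lemma at all: it is quoted from \cite[Thms.~4.5 and~4.6]{Cascudo} (with a pointer to \cite[Ch.~6]{FiniteFields}), so there is no in-paper argument to compare against, and your proof supplies exactly what the authors delegate to a citation. On its merits, your plan is correct. For the second display, the radical--quotient bijection works in every characteristic once the radical is defined intrinsically as $\mathrm{rad}(Q)\triangleq\{w : Q(v+w)=Q(v)\ \forall v\}$, which is a subspace equal to $\{w\in\mathrm{rad}(B) : Q(w)=0\}$ for the polar form $B$ (in odd characteristic this is just $\mathrm{rad}(B)$); a rank-$r$ form then corresponds bijectively to the pair consisting of its $(k{-}r)$-dimensional radical and the induced nondegenerate form on the quotient, and $\qbin{k}{k-r}{q}=\qbin{k}{r}{q}$ gives the count. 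The one step you should make explicit is the identity $\mathrm{rank}(Q)=k-\dim\mathrm{rad}(Q)$ under the paper's definition of rank (fewest indeterminates appearing in $Q(\bldx P)$): one inequality is immediate, while the other (``$Q$ involves only coordinates complementary to its radical'') upgrades a statement about functions to a statement about polynomials, so you need that distinct homogeneous quadratics over $\field_q$ induce distinct functions on $\field_q^k$ --- true for every $q$, including $q=2$, where $x_i^2=x_i$ forces a short separate check via the uniqueness of reduced multilinear representations --- or, as you propose, you can read everything off the canonical forms of \cite[Thms.~6.21 and~6.30]{FiniteFields}. Granting this, the partition $\sum_{r=0}^{k}\qbin{k}{r}{q}\,|\Quad_q(r)|=q^{k(k+1)/2}$ is valid, and $q$-binomial inversion or induction does recover the product formula; for instance the recursion yields $|\Quad_q(3)|=q^6-q^5-q^3+q^2$ and $|\Quad_q(4)|=q^{10}-q^9-q^7+q^6$, both matching the stated product. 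In short, your route buys a self-contained, characteristic-uniform proof of a fact the paper treats as classical, at the cost of exactly the two pieces of bookkeeping you yourself flagged; neither is a genuine gap.
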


We will also use the following bound on
the $q$-binomial coefficients.

\begin{lemma}[{\cite[Lemma~4]{CodingFor}}]
\label{lemma:qbin}
For any prime power~$q$ and integers~$t \ge s \ge 0$,
\[
\qbin{t}{s}{q} < 4 \cdot q^{s(t-s)} .
\]
\end{lemma}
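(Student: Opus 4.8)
The plan is to factor the $q$-binomial coefficient and bound it termwise. Reindexing the definition $\qbin{t}{s}{q}=\prod_{i=0}^{s-1}\frac{q^{t-i}-1}{q^{i+1}-1}$, I would first write it as a product of $s$ ratios,
\[
\qbin{t}{s}{q}=\prod_{j=1}^{s}\frac{q^{t-s+j}-1}{q^{j}-1},
\]
where the numerator exponents $t-s+j$ run over $t-s+1,\ldots,t$ and the denominator exponents $j$ run over $1,\ldots,s$. I would then bound each factor by dropping the $-1$ in the numerator, using $\frac{q^{t-s+j}-1}{q^{j}-1}<\frac{q^{t-s+j}}{q^{j}-1}=q^{t-s}\cdot\frac{1}{1-q^{-j}}$. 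Taking the product over $j=1,\ldots,s$ pulls out the desired power and leaves
\[
\qbin{t}{s}{q}<q^{s(t-s)}\prod_{j=1}^{s}\frac{1}{1-q^{-j}} .
\]

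It then remains to show that the trailing product is bounded by $4$ uniformly in $q\ge 2$ and $s$. Since $1-q^{-j}$ increases with $q$ and the partial products increase with $s$, it suffices to bound the infinite product at $q=2$, i.e.\ to show $\prod_{j=1}^{\infty}(1-2^{-j})>\tfrac14$; this is the single genuine estimate, and I expect it to be the only (mild) obstacle. I would obtain it by separating the first two factors and bounding the tail by the Weierstrass inequality $\prod_j(1-a_j)\ge 1-\sum_j a_j$: with $\sum_{j\ge 3}2^{-j}=\tfrac14$ one gets $\prod_{j\ge 1}(1-2^{-j})=\tfrac12\cdot\tfrac34\cdot\prod_{j\ge 3}(1-2^{-j})>\tfrac12\cdot\tfrac34\cdot\tfrac34=\tfrac{9}{32}>\tfrac14$. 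Hence $\prod_{j=1}^{s}(1-q^{-j})^{-1}\le\prod_{j=1}^{\infty}(1-2^{-j})^{-1}<\tfrac{32}{9}<4$, which together with the displayed bound gives $\qbin{t}{s}{q}<4\,q^{s(t-s)}$. Finally I would dispose of the degenerate case $s=0$ directly, where $\qbin{t}{0}{q}=1<4=4q^{0}$, so the claimed inequality holds throughout the stated range $t\ge s\ge 0$.
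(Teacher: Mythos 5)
Your proof is correct. Note that the paper itself offers no proof of this lemma: it is quoted verbatim from Koetter--Kschischang \cite[Lemma~4]{CodingFor}, whose argument is the same factorization you use, namely writing $\qbin{t}{s}{q}/q^{s(t-s)}$ as $\prod_{j=1}^{s}\bigl(1-q^{-(t-s+j)}\bigr)/\bigl(1-q^{-j}\bigr)$, discarding the numerators, and observing that $\prod_{j=1}^{\infty}(1-q^{-j})^{-1}$ is maximized at $q=2$, where it is approximately $3.46<4$. The one place where you genuinely add something is the last step: instead of appealing to a numerical evaluation of the infinite product (as the cited source does), you bound it rigorously and elementarily, peeling off the factors $\tfrac12\cdot\tfrac34$ and applying the Weierstrass inequality $\prod_{j\ge 3}(1-2^{-j})\ge 1-\sum_{j\ge 3}2^{-j}=\tfrac34$ to get $\prod_{j\ge 1}(1-2^{-j})\ge\tfrac{9}{32}$, hence a bound of $\tfrac{32}{9}<4$. (Strictness is not an issue even though the Weierstrass bound may degrade to ``$\ge$'' in the limit, since dropping the $-1$'s in the numerators already gives a strict inequality.) This makes the lemma self-contained at essentially no extra cost, which is a mild but real improvement over quoting a decimal approximation; the trade-off is that your constant $\tfrac{32}{9}$ is looser than the true value $\approx 3.46$, though still comfortably below $4$.
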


\begin{proof}[Proof of Theorem~\protect\ref{theorem:ExistenceMaxSpan}]
For $k = 1$, every subspace in $\grsmn{q}{n}{k}$ is
a max-span Sidon space; hence we assume from now on in the proof that
$k \ge 2$. Let $\xi_1, \xi_2, \ldots, \xi_k$
be elements that are uniformly and independently
selected from~$\field_{q^n}$.
We bound from above the probability that 
the set $\{ \xi_i \}_{i \in [k]}$ does not span
a $k$-dimensional max-span Sidon space.
That probability bounds from above the fraction of
the spaces in $\grsmn{q}{n}{k}$ that are not max-span Sidon spaces.

Write $\bldxi = (\xi_1 \; \xi_2 \; \ldots \; \xi_k)$
and, for $r \in [k]$, let
$\event_r$ denote the event that
$Q(\bldxi) = 0$ for some $Q \in \Quad_q(k,r)$.
Then $\cup_{r \in [k]} \event_r$ stands for
the event that $\bldxi$ is not a max-span Sidon space.
By a union bound, we have
\begin{equation}
\label{equation:union}
\Prob \left\{ \textstyle\cup_{r \in [k]} \event_r \right\} \le
\Prob \left\{ \event_1 \cup \event_2 \right\} +
\sum_{r=3}^k \Prob \left\{ \event_r \right\} .
\end{equation}
Next, we bound from above the terms
in the right-hand side of~\eqref{equation:union}.

Starting with~$\event_1$, this event is equivalent
to having $\bldxi \cdot \blda^\transpose = 0$, for some nonzero
$\blda \in \field_q^k$; namely, it is equivalent to
$\{ \xi_i \}_{i \in [k]}$ being a linearly dependent set
over~$\field_q$. Turning to~$\event_2$, shifting to canonical
quadratic forms, as in~\cite[Thms.~6.21 and~6.30]{FiniteFields},
it follows that this event is equivalent to~$\bldxi$
satisfying

\begin{equation}
\label{equation:rootquad}
p_0 \cdot (\bldxi \cdot \blda^\transpose)^2
+ p_1 \cdot
(\bldxi \cdot \blda^\transpose) (\bldxi \cdot \bldb^\transpose)
+ p_2 \cdot (\bldxi \cdot \bldb^\transpose)^2 = 0 ,
\end{equation}
for some linearly independent vectors $\blda, \bldb \in \field_q^k$ over~$\field_q$ and a nonzero~$(p_0 \; p_1 \; p_2) \in \field_q^3$.
Since~\eqref{equation:rootquad} is equivalent to having
$(\bldxi \cdot \blda^\transpose)/(\bldxi \cdot \bldb^\transpose)
\in \field_{q^2}$, it follows that~$\event_2$ implies that
$\{ \xi_i \}_{i \in [k]}$ is a linearly dependent set
over\footnote{%
When~$n$ is odd, linear dependence over~$\field_{q^2}$ is the same
as linear dependence over~$\field_q$.}
$\field_{q^2}$. We conclude that
$\event_1 \cup \event_2$ implies that
$\bldxi \cdot \blda^\transpose = 0$,
for some nonzero $\blda \in \field_{q^2}^k$
whose leading nonzero coefficient is~$1$ (say).
Thus,
\begin{equation}
\label{equation:event1+2}
\Prob \left\{ \event_1 \cup \event_2 \right\}
\le \frac{q^{2k} - 1}{(q^2{-}1) \cdot q^n}
= \frac{q^k-1}{(q{-}1) \cdot q^n} \cdot \frac{q^k+1}{q+1} ,
\end{equation}
which proves the theorem for~$k = 2$
(see also Lemma~\ref{lemma:SidonDim2}
in Appendix~\ref{section:SidonOfDimOneTwo}).
Hence, we assume hereafter that~$k \ge 3$.

In the sequel, we will need a lower bound
on the following difference:
\begin{eqnarray}
\lefteqn{
\frac{|\Quad_q(k,1)| + |\Quad_q(k,2)|}{q^n(q-1)}
- \Prob \left\{ \event_1 \cup \event_2 \right\}
} \makebox[2ex]{}
\nonumber \\
& \stackrel{
\eqref{equation:event1+2}, \,\mathrm{Lemma~\ref{lemma:NumberQ}}}{\ge} &
\frac{1}{q^n(q{-}1)}
\left(
\qbin{k}{1}{q} \cdot \left| \Quad_q(1) \right| +
\qbin{k}{2}{q} \cdot \left| \Quad_q(2) \right|
- (q^k-1) \cdot \frac{q^k+1}{q+1}
\right) 
\nonumber \\
& \stackrel{\mathrm{Lemma~\ref{lemma:NumberQ}}}{=} &
\frac{1}{q^n(q{-}1)}
\left(
q^k - 1 +
\frac{q^2 (q^k - 1)(q^{k-1} - 1)}{q^2 - 1}
- (q^k-1) \cdot \frac{q^k+1}{q+1}
\right)
\nonumber \\
& = &
\frac{q^k - 1}{q^n(q^2{-}1)} \cdot \frac{q^k - q}{q-1}
\nonumber \\
\label{equation:diff1+2}
& \stackrel{k \ge 3}{>} &
\frac{q^k - 1}{q^n(q{-}1)} .
\end{eqnarray}

Continuing now with the right-hand side of~\eqref{equation:union},
we have:
\begin{eqnarray*}
\Prob \left\{ \textstyle\cup_{r \in [k]} \event_r \right\}
& \stackrel{\mathrm{Lemma~\ref{lemma:RootsQ}}}{\le} &
\Prob \left\{ \event_1 \cup \event_2 \right\} \\
&& \quad {} +
\frac{q^{n(k-1)}}{q^{nk}(q{-}1)}
\Bigl(
{\sum_{r=3}^k 
|\Quad_q(k,r)|}
+
{\sum_{i=2}^{\floor{k/2}}
|\Quad_q(k,2i)| \cdot (q^n{-}1) \cdot q^{-in}}
\Bigr) \\
& \stackrel{\eqref{equation:diff1+2}}{<} &
\frac{1}{q^n(q{-}1)}
\Bigl(
{\underbrace{\sum_{r \in [k] }
|\Quad_q(k,r)|}_{q^{k(k+1)/2}-1}}
- (q^k - 1)
+
(q^n{-}1)
\sum_{i=2}^{\floor{k/2}}
|\Quad_q(k,2i)| \cdot q^{-in}
\Bigr) \\
& = &
\frac{q^{k(k+1)/2}-q^k}{q^n(q{-}1)}
+
\frac{q^n-1}{q^n(q{-}1)}
{\underbrace{
\sum_{i=2}^{\floor{k/2}} |\Quad_q(k,2i)| \cdot q^{-in}
}_{\triangleq \, \varepsilon(n)}} \\
& < &
\frac{q^{k(k+1)/2-n}}{q{-}1}
+
\frac{1}{q{-}1}
\left( \varepsilon(n) - q^{k-n} \right) .
\end{eqnarray*}
Hence, in order to complete the proof, it suffices
to show that $\varepsilon(n) \le q^{k-n}$.
Indeed, for $k = 3$ we have $\varepsilon(n) = 0$;
otherwise, for $k \ge 4$,
\begin{eqnarray*}
\varepsilon(n)
& = &
\sum_{i=2}^{\floor{k/2}} |\Quad_q(k,2i)| \cdot q^{-in} \\
& \stackrel{\mathrm{Lemma~\ref{lemma:NumberQ}}}{=} &
\sum_{i=2}^{\floor{k/2}}
\qbin{k}{2i}{q} \cdot
{\underbrace{|\Quad_q(2i)|}_{< \, q^{i(2i+1)}}}
\cdot q^{-in} \\
\\
& \stackrel{\mathrm{Lemma~\ref{lemma:qbin}}}{<} &
4 \cdot
\sum_{i=2}^{\floor{k/2}} q^{2i(k-2i)+i(2i+1)-in} \\
& = &
4 \cdot q^{4k-2n-6}
\sum_{i=0}^{\floor{k/2}-2} q^{i(2k-n)-(2i+7)i} \\
& < &
4 \cdot q^{4k-2n-6}
{\underbrace{\sum_{i=0}^{\infty} q^{i(2k-n)}}_{< \, 2}} \\
& < &
q^{4k-2n-3} < q^{k-n} ,
\end{eqnarray*}
whenever~$k \ge 4$ and~$n \ge \binom{k+1}{2}$.
\end{proof}

\begin{remark}
For $q = 2$ and $n = \binom{k+1}{2}$,
our analysis does not rule out
the possibility that the fraction of max-span Sidon spaces
within~$\grsmn{q}{n}{k}$ is 
exponentially small in~$n$. However, empirical results
suggest that, as $k$ increases, this fraction converges---%
for the tested values of~$q$---%
to (approximately)~$\prod_{i=1}^\infty (1 - q^{-i})$,
similarly to the fraction of invertible matrices among all
$k \times k$ matrices over $\field_q$;
for~$q = 2$, this limit is approximately~$0.288$.
Our analysis herein was too crude to capture this behavior,
and a proof that the said fraction indeed
converges to this limit is yet to be found.\qed
\end{remark}
\end{document}